\documentclass{article}
\usepackage[english]{babel}
\usepackage{tikz,tikz-3dplot}
\usetikzlibrary{matrix}
\usepackage{amssymb}
\usepackage[T1]{fontenc}
\usepackage[utf8]{inputenc}
\usepackage{calrsfs}
\usepackage{amsmath}
\usepackage{amsthm}
\usepackage{enumerate}
\usepackage{caption}
\usepackage{a4wide}
\usepackage{authblk}
\usetikzlibrary[shapes.misc]
\usetikzlibrary[shapes.geometric]

\newtheorem{theorem}{Theorem}[section]
\newtheorem{cor}[theorem]{Corollary}
\newtheorem{lemma}[theorem]{Lemma}
\newtheorem{prop}[theorem]{Proposition}

\date{\today}

\title{Coloring of the $d^{\text{th}}$ power of the face-centered cubic grid}
\author[1]{Nicolas Gastineau}
\author[1]{Olivier Togni}
\affil[1]{LE2I FRE2005, CNRS, Arts et Métiers, Université Bourgogne Franche-Comté, F-21000 Dijon, France}
\begin{document}

\maketitle

\begin{abstract}
The face-centered cubic grid is a three dimensional 12-regular infinite grid. This graph represents an optimal way to pack spheres in the three-dimensional space. In this grid, the vertices represent the spheres and the edges represent the contact between spheres. We give lower and upper bounds on the chromatic number of the $d^{\text{th}}$ power of the face-centered cubic grid. In particular, in the case $d=2$ we prove that the chromatic number of this grid is 13. We also determine sharper bounds for $d=3$ and for subgraphs of of the face-centered cubic grid.
\end{abstract}

\section{Introduction}
For a graph $G$, we denote by $V(G)$ the \textit{vertex} set of $G$ and by $E(G)\subseteq V(G)\times V(G)$ the \textit{edge set} of $G$.
We denote by $d_{G}(u,v)$, the usual distance between the vertices $u$ and $v$ in a graph $G$. The \emph{diameter} of a graph $G$, denoted by $\text{diam}(G)$ is the maximum distance between the vertices $u$ and $v$ in $G$. By $G[A]$, we denote the graph induced by the set of vertices $A\subseteq V(G)$, i.e., the graph with vertex set $A$ and edge set $\{uv\in E(G)|\ u\in A, v\in A\}$.

A \emph{$k$-coloring} of a graph $G$ is a map $c$ from $V(G)$ to $\{0,1,\ldots,k-1\}$ which satisfies $c(u)\neq c(v)$ for every $uv\in E(G)$.
The \emph{chromatic number} $\chi(G)$ of $G$ is the smallest integer $k$ such that there exists a $k$-coloring of $G$.
The \emph{$d^{\text{th}}$} power $G^{d}$ of a graph $G$ is the graph obtained from $G$ by adding an edge between every two vertices satisfying $d_{G}(u,v)\le d$.

We recall that the survey from Kramer and Kramer~\cite{KR2008} regroup together a good quantity of known results about the coloring of the $d^{\text{th}}$ power of graphs. Note that a coloring of the $d^{\text{th}}$ power of graph is also called a $d$-distance coloring.

Our goal in this paper is to determine the chromatic number of powers of a grid called the face-centered cubic grid. The results of this paper are summarized in Table~\ref{table1}. These results are in the continuity of previous works \cite{FE2003,JJ2005,SE2001} about the coloring of the $d^{\text{th}}$ power of the square, triangular and hexagonal grids.

\begin{table}
\begin{center}
\renewcommand{\arraystretch}{1.5}
\begin{tabular}{|c|c|c|c|c|c|c|c|} 
   \hline
    $k$ & 1 & 2 & 3 & 4 & 5 & odd & even \\
    \hline
    Lower bound on $\chi(\mathcal{F}^d)$ & 4 & 13 & 29 & 55 & 92 & $\frac{5}{12} d^{3}+\frac{5}{4} d^{2}+\frac{19}{12}d+\frac{3}{4}$  & $\frac{5}{12}d^{3}+\frac{5}{4}d^{2}+\frac{11}{6} d+1$ \\
% &  &  &  & & & $+19d /12+3/4$  & $+11/6 d+1$ \\
    \hline 
    Upper bound on $\chi(\mathcal{F}^d)$   & 4 & 13 & 30 & 65 & 108 & \multicolumn{2}{c|}{$(d+1)\lceil (d+1)^2/2 \rceil$}  \\
   \hline 
    Value of $\chi(\mathcal{F}_{0,1}^d)$   & 4 & 9 & 16 & 25 & 36 & \multicolumn{2}{c|}{$(d+1)^2$}  \\
   \hline 
    Lower bound on $\chi(\mathcal{F}_{0,2}^d)$ & 4 & 13 & 24 & 34 & 52 & \multicolumn{2}{c|}{$3 \lceil (d+1)^2/2 \rceil-2$}  \\
   \hline 
    Upper bound on $\chi(\mathcal{F}_{0,2}^d)$ & 4 & 13 & 24 & 36 & 54 & \multicolumn{2}{c|}{$3 \lceil (d+1)^2/2 \rceil$}  \\
    \hline
\end{tabular}
\caption{Our results about the chromatic number of the $d^{\text{th}}$ power of the face-centered cubic grid.}
\label{table1}
\renewcommand{\arraystretch}{1}
\end{center}
\end{table}

The face-centered cubic grid, denoted by $\mathcal{F}$, is the graph with vertex set $\{(i,j,k)|\ i\in \mathbb{Z},\ j\in \mathbb{Z},\ k\in\mathbb{Z}_{2} \}\cup\{(i+0.5,j+0.5,k)|\ i\in \mathbb{Z},\ j\in \mathbb{Z},\ k\in\mathbb{Z}_{1} \}$, where $\mathbb{Z}_{2}$ is the set of the even integers and $\mathbb{Z}_{1}=\mathbb{Z}\setminus\mathbb{Z}_{2}$ (the set of odd integers), and edge set $\{(i,j,k) (i',j',k')|\ (|i-i'|=1\land j=j'\land k=k' )\lor (i=i'\land |j-j'|=1\land k=k')\lor(|i-i'|=1/2\land|j-j'|=1/2 \land |k-k'|=1) \}$. A subgraph of this grid is illustrated in Figure~\ref{fig:FCC}, showing the neighborhood of a vertex.

The layer $k$ of the grid $\mathcal{F}$ is the subset of vertices  $\{(i,j,k)|\ i\in \mathbb{Z},\ j\in \mathbb{Z} \}$, if $k$ is even or the subset of vertices $\{(i+0.5,j+0.5,k)|\ i\in \mathbb{Z},\ j\in \mathbb{Z} \}$, if $k$ is odd. Note that the graph induced by the vertices of layer $k$ is isomorphic to a square grid, see Figure~\ref{fig:FCC}.

We denote by $\mathcal{F}_k$, the subgraph induced by the vertices of layer $k$. 
The graph $\mathcal{F}_{k_1,k_2}$ is the subgraph of $\mathcal{F}$ induced by the vertices from the layer $i$, for $k_1\le i\le k_2$. It can be easily remarked that $\mathcal{F}_{0,k_2 - k_1}$ is isomorphic to $\mathcal{F}_{k_1,k_2}$ for every two integers $k_1$ and $k_2$.

\begin{figure}
\centering
\tdplotsetmaincoords{82}{120}
\begin{tikzpicture}[tdplot_main_coords,scale=1.8]
      \draw[ultra thin,color=red] (2,2,0) -- (2.5,2.5,1.2) -- (2,3,0);
      \draw[ultra thin,color=red] (3,2,0) -- (2.5,2.5,1.2) -- (3,3,0);
      \draw[ultra thin,color=red] (2,2,2.2) -- (2.5,2.5,1.2) -- (2,3,2.2);
      \draw[ultra thin,color=red] (3,2,2.2) -- (2.5,2.5,1.2) -- (3,3,2.2);
\foreach \i in {0,...,4}
 \foreach \j in {0,...,4}
    {
      \ifthenelse{\i < 4}{\draw[thin,color=blue] (\i,\j,0) -- (\i+1,\j,0);}{} 
      \ifthenelse{\i < 4}{\draw[thin,color=blue] (\i,\j,2.2) -- (\i+1,\j,2.2);}{} 
      \ifthenelse{\i < 4}{\draw[thin,color=blue] (\i+.5,\j+.5,1.2) -- (\i+1.5,\j+.5,1.2);}{} 
      \ifthenelse{\j < 4}{\draw[thin,color=blue] (\i,\j,0) -- (\i,\j+1,0);}{} %node[anchor=north west]{$y$};
      \ifthenelse{\j < 4}{\draw[thin,color=blue] (\i,\j,2.2) -- (\i,\j+1,2.2);}{} %node[anchor=north west]{$y$};
      \ifthenelse{\j < 4}{\draw[thin,color=blue] (\i+.5,\j+.5,1.2) -- (\i+.5,\j+1.5,1.2);}{}
      \node[draw, circle, fill, color=gray, minimum size=1pt,inner sep=.7pt] at (\i,\j,0) (x\i\j){};
      \node[draw, circle, fill, color=gray, minimum size=1pt,inner sep=.7pt] at (\i+.5,\j+.5,1.2) (y\i\j){};
      \node[draw, circle, fill, color=gray, minimum size=1pt,inner sep=.7pt] at (\i,\j,2.2) (z\i\j){};
      %\draw[thin,color=red] (\i+.5,\j+.5,1.5) -- (\i,\j,0) -- (\i-.5,\j+.5,1.5); \draw[thin,color=red](\i+.5,\j-.5,1.5) -- (\i,\j,0) -- (\i-.5,\j-.5,1.5); }{}
    }
  \node[draw, circle, fill, color=red, minimum size=4pt,inner sep=1pt] at (2.5,2.5,1.2) (){};

\end{tikzpicture}
\caption{Neighborhood in the face-centered cubic grid $\mathcal{F}$}
\label{fig:FCC}
\end{figure}
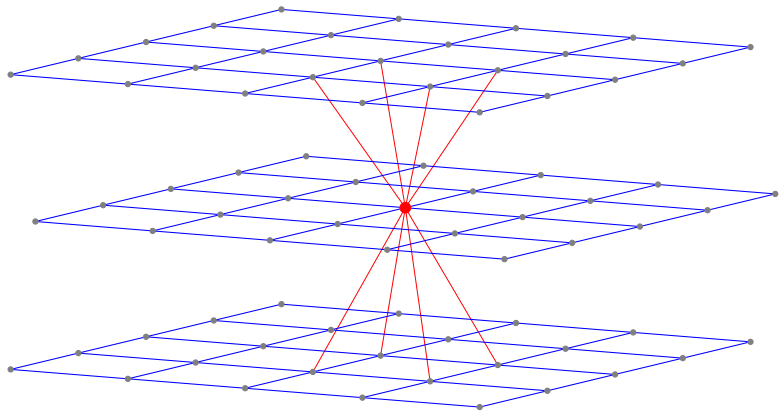

The face-centered cubic grid is a grid studied in the field of programmable matter \cite{BOUR2015,JJD2017,BOUR2018}.
Programmable matter can be seen as modular robots (called modules or particles) able to fix to adjacent modules and send (receive) messages to (from) other modules fixed to the entity. Thus, the different modules form a geometric shape which is a subgraph of a grid. This grid, in the three dimensional case, is almost every time the face-centered cubic grid \cite{BOUR2018}. Note that the face-centered cubic grid is also called the cannonball grid \cite{SP2016}.

It is easily seen that $\chi(\mathcal{F})=4$ since $\mathcal{F}$ contains $K_4$ as subgraph and a $4$-coloring can be obtained by using colors $1$ and $2$ on even layers and colors $3$ and $4$ on odd layers. Note that multicolorings of $\mathcal{F}$ were studied by Šparl et al. \cite{SP2016}.

This paper is organized as follows. 
In Section 2, we determine a formulae for the distance between vertices in the face-centered cubic grid. 
This formulae is used through the remaining sections of this paper.
In Section 3, we determine general lower and upper bounds on $\chi(\mathcal{F}^d)$. 
The lower bound corresponds to the size of a largest complete graph in $\mathcal{F}^d$.
In Section 4, we determine the exact value of $\chi(\mathcal{F}^3)$ and prove that $29\le\chi(\mathcal{F}^4)\le 30$. Section 5 is dedicated to the study of $\chi(\mathcal{F}_{0,1}^d)$ and $\chi(\mathcal{F}_{0,2}^d)$.

\section{Distance between two vertices in the graph $\mathcal{F}$}

Let $p_{+}$ be the function such that $p_{+}(k)=k$ if $k\ge 0$ and $p_{+}(k)=0$ otherwise.

We begin by presenting a formulae that gives the distance between any two vertices of $\mathcal{F}$. 
Note that this formulae restricted to the vertices of a same layer of $\mathcal{F}$ corresponds to the distance in the square grid. 
This proposition will be used through the remaining sections of this paper to calculate the distances in $\mathcal{F}$. 

\begin{prop}\label{prodist}
 The distance between two vertices $(i,j,k)$ and $(i',j',k')$ of $\mathcal{F}$ is given by the following formulae :
$$d_{\mathcal{F}}((i,j,k),(i',j',k'))=p_{+}\left(|i-i'|-\frac{|k-k'|}{2}\right)+p_{+}\left(|j-j'|-\frac{|k-k'|}{2}\right)+ |k-k'|.$$
\end{prop}
\begin{proof}
Note that the formulae can reformulated as follows:
$$d_{\mathcal{F}}((i,j,k),(i',j',k'))= \left\{
    \begin{array}{ll}
        |i-i'|+|j-j'|  & \mbox{if } |i-i'|> |k-k'|/2\land |j-j'|> |k-k'|/2; \\
        |i-i'|+|k-k'|/2  & \mbox{if } |i-i'|> |k-k'|/2\land |j-j'|\le |k-k'|/2; \\
        |j-j'|+|k-k'|/2  & \mbox{if } |i-i'|\le |k-k'|/2\land |j-j'|> |k-k'|/2; \\
        |k-k'|  & \mbox{if } |i-i'|\le |k-k'|/2\land |j-j'|\le |k-k'|/2. \\
    \end{array}
\right.
$$

In this proof, our goal is to show that $d_{\mathcal{F}}((i,j,k),(i',j',k'))=d$, if and only if $p_{+}(|i-i'|-|k-k'|/2)+p_{+}(|j-j'|-|k-k'|/2)+ |k-k'|=d$. We proceed by induction on $d$.
For $d=0$, we easily obtain that $d_{\mathcal{F}}((i,j,k),(i',j',k'))=0$ if and only if $p_{+}(|i-i'|-|k-k'|/2)+p_{+}(|j-j'|-|k-k'|/2)+ |k-k'|=0$. Now suppose that the vertices $(i',j',k')$ at distance $d'$ of $(i,j,k)$ satisfy $p_{+}(|i-i'|-|k-k'|/2)+p_{+}(|j-j'|-|k-k'|/2)+ |k-k''|=d'$, for $d'\le d$. Without loss of generality, suppose $i'=0$, $j'=0$ and $k'=0$. Also, without loss of generality, we can suppose that $i\ge 0$, $j\ge0$ and $k\ge0$ (by symmetry the other cases are proven in the same way).
Thus, the following is true:
$$d_{\mathcal{F}}((0,0,0),(i,j,k))\le d\Leftrightarrow \left\{
    \begin{array}{ll}
        i+j\le d  & \mbox{if } i> k/2\land j> k/2; \\
        i+k/2\le d  & \mbox{if } i> k/2\land j\le k/2; \\
        j+k/2\le d  & \mbox{if } i\le k/2\land j> k/2; \\
        k\le d  & \mbox{if } i\le k/2\land j\le k/2. \\
    \end{array}
\right.
$$
Now, we are going to prove that $d_{\mathcal{F}}((0,0,0),(i,j,k))=d+1$ if and only if $p_{+}(i-k/2)+p_{+}(j-k/2)+ k=d+1$. We cut the proof in four cases: first we consider the vertices such that $i> k/2$ and $j> k/2$, second we consider the vertices such that $i> k/2\land j\le k/2$, third we consider the vertices such that $i\le k/2\land j> k/2$ and finally we consider the vertices which satisfy $i\le k/2$ $j\le k/2$. However, before doing this case analysis proof, we claim the following:

\textbf{Claim 1.} For every integers $i$, $j$, $k$ and $\ell>0$, the following hold:
\begin{enumerate}
\item[i)] $i+j\ge \ell$, implies $d_{\mathcal{F}}((0,0,0),(i,j,k))\ge \ell$;
\item[ii)] $i+k/2\ge \ell$ implies $d_{\mathcal{F}}((0,0,0),(i,j,k)\ge \ell$.
\end{enumerate}
Note that Claim 1.i) is obtained by observing that every two adjacent vertices $(i_1,j_1,k_1)$ and $(i_2,j_2,k_2)$ satisfy $|i_1-i_2|+|j_1-j_2|\le 1$.
Note that every path beginning by $(0,0,0)$ going to $(i,j,k)$ should contain at least $k$ edges between vertices of different layers. 
Also, every two adjacent vertices $(i_1,j_1,k_1)$ and $(i_2,j_2,k_2)$, with $|k_1-k_2|=1$, are such that $|i_1-i_2|=1/2$. Consequently, $i+k/2\ge \ell$ implies $d_{\mathcal{F}}((0,0,0),(i,j,k)\ge i-k/2+k\ge \ell$ and Claim 1.ii) follows.

\begin{description}
\item[Case 1: $i>k/2$ and $j>k/2$.]

First, if $i+j\le d$, then, by induction hypothesis, we have $d_{\mathcal{F}}((0,0,0),(i,j,k))\le d$. Moreover, by Claim 1.i), we obtain that $i+j\ge d+2$ implies $d_{\mathcal{F}}((0,0,0),(i,j,k))\ge d+2$. Consequently, $d_{\mathcal{F}}((0,0,0),(i,j,k))= d+1$ implies $i+j=d+1$.

Second, suppose that $i+j=d+1$. Note that the vertex $( k/2, k/2,k)$ is at distance $k$ from $(0,0,0)$. 
By using the distance in a square grid (the vertices of layer $k$ induce a square grid in $\mathcal{F}$), the vertex $( k/2,  k/2, k)$ is at distance at most $i-k/2+j-k/2=d+1-k$ from $(i,j,k)$. 
Thus, $d_{\mathcal{F}}((0,0,0),(i,j,k))\le d+1$. 
Moreover, by Claim 1.i), we obtain that $i+j\ge d+1$ implies $d_{\mathcal{F}}((0,0,0),(i,j,k))\ge d+1$. 
Finally, we have that $i+j=d+1$ implies $d_{\mathcal{F}}((0,0,0),(i,j,k))= d+1$.

\item[Case 2 : $i>k/2$ and $j\le k/2$.]

First, if $i+k/2\le d$, then, by induction hypothesis $d_{\mathcal{F}}((0,0,0),(i,j,k))\le d$. Moreover, by Claim 1.ii), we obtain that $i+k/2\ge d+2$ implies $d_{\mathcal{F}}((0,0,0),(i,j,k))\ge d+2$. Consequently, $d_{\mathcal{F}}((0,0,0),(i,j,k))= d+1$ implies $i+k/2=d+1$.

Second, suppose that $i+k/2=d+1$. We can easily notice that $d_{\mathcal{F}}((0,0,0),(k/2,j,k))\le k$.
By using the distance in a square grid (the vertices of layer $k$ induce a square grid in $\mathcal{F}$), we have $d_{\mathcal{F}}((k/2,j,k),(i,j,k)\le i-k/2$. Thus, $d_{\mathcal{F}}((0,0,0),(i,j,k))\le k+i-k/2=i+k/2$. Moreover, by Claim 1.ii), we have that $i+k/2\ge d+1$ implies $d_{\mathcal{F}}((0,0,0),(i,j,k))\ge d+1$. Thus, we obtain that $d_{\mathcal{F}}((0,0,0),(i,j,k))=d+1$.

\item[Case 3 : $i\le k/2$ and $j> k/2$.]

The proof is the same than in Case 2 by interchanging the role of $i$ and $j$.

\item[Case 4 : $i\le k/2$ and $j\le k/2$.]

First, if $k\le d$, then, by induction hypothesis, $d_{\mathcal{F}}((0,0,0)(i,j,k))\le d$. Moreover, it can be easily noticed that $k> d+1$ implies $d_{\mathcal{F}}((0,0,0)(i,j,k))> d+1$. Consequently, we obtain that $d_{\mathcal{F}}((0,0,0),(i,j,k))= d+1$ implies $k=d+1$.

Second, suppose that $k=d+1$.
Note that the vertices $(i,j,d+1)$ are at distance at most $d+1$ from $(0,0,0)$. Consequently, we have that $k=d+1$ implies $d_{\mathcal{F}}((0,0,0),(i,j,k))= d+1$.

\end{description}
\end{proof}
\section{Chromatic number of $\mathcal{F}^{d}$}

In this section, we determine general lower and upper bounds on $\chi(\mathcal{F}^d)$. 
Note that there is a gap between the lower and upper bounds. For the majority of the classical grids (square and triangular grids), the value of the chromatic number of $G^d$ corresponds to the size of the largest clique in $G^d$. 
However, that it is not the case for the face-centered cubic grid (see Section 4) and that explains this gap between our lower and upper bounds.

\subsection{Lower bound on the chromatic number of $\mathcal{F}^{d}$}
The following Proposition is a consequence of a result of Bjornholm~\cite{BJO1990} which has proven that the set $A_{\ell }=\{(i,j,k)\in V(\mathcal{F})|\ d_{\mathcal{F}}((0,0,0)(i,j,k))\le \ell \}$ is such that $|A_{\ell }|=(2 \ell +1)(5\ell ^2+5\ell +3)/3$ (On-line encyclopedia of integer sequences number: A005902). Since $\text{diam}(G[A_{\ell}])=2\ell$ and by setting $d=\ell /2$, we obtain the following result.
\begin{prop}[\cite{BJO1990}]\label{evenbound}
If $d$ is even, then $\chi(\mathcal{F}^{d})\ge \frac{5}{12} d^{3}+\frac{5}{4} d^{2}+\frac{11}{6} d +1$.
\end{prop}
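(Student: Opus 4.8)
The plan is to obtain the lower bound by exhibiting a large clique in $\mathcal{F}^d$ and invoking the elementary inequality $\chi(G)\ge\omega(G)$. The natural candidate is the ball $A_\ell$ of radius $\ell=d/2$ centered at the origin; since $d$ is even, $\ell$ is a non-negative integer, so this set is well defined.

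First I would check that $A_\ell$ induces a clique in $\mathcal{F}^d$. For any two vertices $u,v\in A_\ell$, the triangle inequality for the graph metric gives
$$d_{\mathcal{F}}(u,v)\le d_{\mathcal{F}}(u,(0,0,0))+d_{\mathcal{F}}((0,0,0),v)\le \ell+\ell=2\ell=d,$$
so $uv\in E(\mathcal{F}^d)$. This is precisely the content of the observation $\text{diam}(G[A_\ell])=2\ell$ recalled above, which also shows that the bound $2\ell$ is attained and not merely an upper estimate. Consequently $\omega(\mathcal{F}^d)\ge|A_\ell|$, and hence $\chi(\mathcal{F}^d)\ge|A_\ell|$.

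It then remains only to evaluate $|A_\ell|$ in terms of $d$. Substituting $\ell=d/2$ into Bjornholm's count $|A_\ell|=(2\ell+1)(5\ell^2+5\ell+3)/3$ gives
$$|A_{d/2}|=\frac{(d+1)\left(\tfrac{5}{4}d^2+\tfrac{5}{2}d+3\right)}{3}=\frac{5}{12}d^3+\frac{5}{4}d^2+\frac{11}{6}d+1,$$
which is exactly the claimed bound.

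The argument is short because all of the combinatorial difficulty---counting the lattice points of a face-centered cubic ball---is supplied by the cited result of Bjornholm, and the clique property reduces to a one-line application of the triangle inequality. The one point deserving care is to confirm that the graph-metric ball $A_\ell$ of $\mathcal{F}$ really coincides with the configuration enumerated by sequence A005902; this is where I would concentrate any verification, and it can be settled directly from the explicit distance formula of Proposition~\ref{prodist}.
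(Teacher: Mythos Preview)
Your proposal is correct and follows essentially the same approach as the paper: both identify the ball $A_\ell$ with $\ell=d/2$ as a clique in $\mathcal{F}^d$ (via the diameter bound, equivalently the triangle inequality), invoke Bjornholm's count $|A_\ell|=(2\ell+1)(5\ell^2+5\ell+3)/3$, and substitute $\ell=d/2$ to obtain the stated polynomial. The paper presents this as a one-sentence remark before the proposition rather than a formal proof, but the content is identical to yours.
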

Let $D_{0}=\{(0,0,0),(1,0,0),(0.5,0.5,1),(0.5,-0.5,1)\}$. Note that $D_{0}$ induces a complete graph of four vertices in the grid $\mathcal{F}$. Let $D_{\ell }=\{ u\in\mathcal{F}|\ \min_{v\in D_{0}}(d_{\mathcal{F}}(u,v))=\ell \}$.

In the two following Lemmas, we calculate the size of a largest clique in $\mathcal{F}^{d}$ when $d$ is odd. From this value, we will infer a lower bound on $\chi(\mathcal{F}^d)$ (as in Proposition \ref{evenbound}) for an odd $d$.
\begin{lemma}\label{le1}
For any $\ell\ge 1$, $|D_{\ell }|=10 \ell^{2}+10\ell +4$.
\end{lemma}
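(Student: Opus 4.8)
The plan is to count the set $D_\ell$ of vertices at minimum distance exactly $\ell$ from the tetrahedral clique $D_0$. The natural strategy is to relate $D_\ell$ to the balls $A_\ell$ around a single vertex, since $|A_\ell|$ is already known from Proposition~\ref{evenbound} (Bjornholm's formula). Specifically, $D_0$ consists of four vertices forming a $K_4$, and the "neighborhood at distance $\le \ell$" of this whole set, call it $B_\ell = \{u : \min_{v\in D_0} d_\mathcal{F}(u,v)\le \ell\}$, satisfies $|D_\ell| = |B_\ell| - |B_{\ell-1}|$. So the first move is to get a clean formula for $|B_\ell|$.

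To compute $|B_\ell|$ I would use the distance formula from Proposition~\ref{prodist} directly, exploiting the symmetry of the configuration $D_0$. The four points of $D_0$ are $(0,0,0)$, $(1,0,0)$, $(0.5,0.5,1)$, $(0.5,-0.5,1)$; their centroid sits at $(0.5,0,1/2)$, and the set has a symmetry group (reflections swapping the two layers and the two halves) that should let me reduce to counting lattice points in one fundamental region. Concretely, for a point $u=(i,j,k)$ I would write $\min_{v\in D_0} d_\mathcal{F}(u,v)$ using the four-case piecewise form of the distance, then sum the indicator $[\min\le\ell]$ over all $u\in V(\mathcal{F})$. I expect the cleanest route is to show that $B_\ell$ is itself (a translate/thickening of) a ball-like region whose cardinality is a cubic polynomial in $\ell$, and then read off the leading behaviour.

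The key algebraic step is the difference computation. Since $|A_\ell|=(2\ell+1)(5\ell^2+5\ell+3)/3$ is a cubic in $\ell$ whose third differences are constant, and $B_\ell$ should differ from $A_\ell$ by a lower-order "smearing" coming from the four-vertex seed rather than one vertex, I anticipate that $|B_\ell|$ is a cubic in $\ell$ as well, and that the leading ($\ell^3$) coefficients of $|A_\ell|$ and $|B_\ell|$ agree (the shape dilates the same way), while the lower-order terms shift. Then $|D_\ell|=|B_\ell|-|B_{\ell-1}|$ is a quadratic in $\ell$, and matching it against the claimed $10\ell^2+10\ell+4$ is a finite check: verifying three values of $\ell$ (say $\ell=1,2,3$) suffices to pin down a quadratic, so the proof can be made rigorous by computing $|B_\ell|$ for small $\ell$ by hand via Proposition~\ref{prodist} and confirming the polynomial.

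The main obstacle will be handling the boundary/overlap bookkeeping in $B_\ell$: because $D_0$ is a thickened seed, the four balls of radius $\ell$ around its vertices overlap, so a naive inclusion–exclusion over the four centers is messy, and the piecewise distance formula creates several regions (governed by whether $|i-i'|$ and $|j-j'|$ exceed $|k-k'|/2$) that must be matched across the boundaries where the nearest vertex of $D_0$ changes. The cleanest way to avoid this mess, which I would try first, is to give an explicit layer-by-layer description: fix the layer $k$ of $u$, determine for each layer which vertices lie within distance $\ell$ of $D_0$ (this reduces to counting lattice points inside an $\ell^\infty$-type diamond in the square grid of that layer, whose size is easy), and then sum over the $O(\ell)$ relevant layers. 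Summing these per-layer counts gives $|D_\ell|$ directly as an explicit sum, and evaluating that sum yields the quadratic $10\ell^2+10\ell+4$.
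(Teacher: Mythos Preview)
Your final approach---fix the layer $k$, count the vertices in that layer at minimum distance exactly $\ell$ from $D_0$, and sum over layers---is precisely what the paper does. The paper shows by induction on $\ell$ that $D_\ell$ contains $(\ell+1)(\ell+2)$ vertices in each of the two extreme layers $-\ell$ and $\ell+1$, and $2+4\ell$ vertices in each intermediate layer $-\ell<k<\ell+1$; summing gives $2(\ell+1)(\ell+2)+2\ell(2+4\ell)=10\ell^2+10\ell+4$. So your eventual plan coincides with the paper's, though you have not actually carried out the count.

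Your intermediate shortcut, however, has a real gap. You propose to verify $|D_\ell|$ for $\ell=1,2,3$ and then conclude the formula holds everywhere because a quadratic is determined by three values. This only works once you have \emph{proved} that $|D_\ell|$ is a polynomial in $\ell$ (equivalently that $|B_\ell|$ is a cubic). You say you ``anticipate'' this by analogy with $|A_\ell|$, but that is not an argument: the seed $D_0$ is not a single vertex, the four balls overlap in a nontrivial pattern, and nothing you have written rules out $|B_\ell|$ being merely piecewise polynomial or eventually polynomial. Establishing exact polynomiality would essentially require the same layer-by-layer analysis you defer to at the end, so the shortcut saves nothing. Drop the interpolation idea and execute the per-layer count directly; that is the complete proof.
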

\begin{proof}
Note that we have $|D_{0}|=4$ and that $D_1=\{(-0.5,0.5,-1),$ $(-0.5,-0.5,-1), $ $(0.5,0.5,-1),$ $(0.5,-0.5,-1),$ $(1.5,0.5,-1),$ $(1.5,-0.5,-1)\}\cup\{(0,1,0),$ $(0,-1,0),$ $(1,1,0),$ $(1,-1,0),$ $(-1,0,0),$ $(2,0,0)\}\cup\{(-0.5,0.5,1),$ $(1.5,0.5,1),$ $(-0.5,-0.5,1),$ $(1.5,-0.5,1),$ $(0.5,1.5,1),$ $(0.5,-1.5,1)\}\cup\{ (0,-1,2),$ $(1,-1,2),$ $(0,0,2),$ $(1,0,2),$ $(0,1,2),$ $(1,1,2) \}$. 
By induction on $\ell$, we will prove that there are $(\ell+1) (\ell+2)$ vertices in layers $-\ell$ and $\ell+1$ in $D_{\ell}$ and that there are $2+4\ell$ vertices in layer $j$ in $D_{\ell}$, for $-\ell<j< \ell+1 $.

If $\ell =1$, we can easily remark that there are $6$ vertices of $D_{1}$ in each layer $i$, $i\in \{-1,0,1,2\}$. 
Thus, $|D_1|=24$.
Now, suppose by induction that there are $\ell (\ell+1)$ vertices in layers $-(\ell -1)$ and $\ell$ in $D_{\ell-1 }$ and that there are $2+4(\ell -1)$ vertices in the layer $i$ in $D_{\ell-1}$, for $-(\ell-1)<i< \ell $.

It can be easily verified that there are $(\ell +2)(\ell +1)$ vertices in layers $-\ell $ and $\ell+1$ in $D_{\ell}$. 
Now, since we have supposed that there are $2+4(\ell -1)$ vertices in layer $i$ in $D_{\ell-1 }$, for $-(\ell -1)<i< \ell $, we obtain that there are $2+4\ell $ vertices in layer $i$ in $D_{\ell }$, for $-(\ell-1)<i<\ell $ (since there are four more vertices in each layer in $D_{\ell }$). 
Also, it can be easily verified that there are $2+4\ell $ vertices in layers $-(\ell -1)$ and $\ell$ in $D_{\ell }$ and the property follows.

By calculation, we obtain: $$|D_{\ell }|=2(\ell+1) (\ell +2)+ 2\ell (2+4 \ell)=2\ell^2+6\ell+4+8\ell^2+4\ell =10\ell^{2}+10\ell+4.$$
\end{proof}
For an integer $\ell \ge 0$, let $B_{\ell }=\cup_{0\le i\le \ell } D_{i}$.
\begin{lemma}\label{le2}
For any integer $\ell \ge 0$, $\text{diam}(\mathcal{F}[B_{\ell}])=2\ell+1$.
\end{lemma}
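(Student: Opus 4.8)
The plan is to prove the two inequalities $\text{diam}(\mathcal{F}[B_{\ell}])\ge 2\ell+1$ and $\text{diam}(\mathcal{F}[B_{\ell}])\le 2\ell+1$ separately. The lower bound will be witnessed by an explicit pair of diametrically opposite vertices, and the upper bound will follow from the observation that $B_{\ell}$ is exactly the union of the four balls of radius $\ell$ centered at the vertices of the clique $D_{0}$, combined with the elementary fact that a shortest path toward one of these centers never leaves the corresponding ball.

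For the lower bound I would take $u=(-\ell,0,0)$ and $v=(\ell+1,0,0)$. Using the distance formula of Proposition~\ref{prodist}, each of these lies at distance exactly $\ell$ from $D_{0}$ (realized by $(0,0,0)$ and $(1,0,0)$ respectively), so both belong to $D_{\ell}\subseteq B_{\ell}$; moreover $d_{\mathcal{F}}(u,v)=2\ell+1$. Since a path in the induced subgraph $\mathcal{F}[B_{\ell}]$ is also a path in $\mathcal{F}$, induced distances are at least ambient distances, so this single pair already forces $\text{diam}(\mathcal{F}[B_{\ell}])\ge d_{\mathcal{F}}(u,v)=2\ell+1$.

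For the upper bound the key remark is that $B_{\ell}=\bigcup_{a\in D_{0}}\{w\mid d_{\mathcal{F}}(w,a)\le \ell\}$, directly from the definition of $B_{\ell}$ as the union of the $D_{i}$. Given any $u,v\in B_{\ell}$, I would choose $a,b\in D_{0}$ with $d_{\mathcal{F}}(u,a)\le \ell$ and $d_{\mathcal{F}}(v,b)\le \ell$, and then route from $u$ to $v$ along a shortest $u$--$a$ path, the (clique) edge $ab$, and a shortest $b$--$v$ path. On a shortest $u$--$a$ path, every intermediate vertex $w$ satisfies $d_{\mathcal{F}}(w,a)=d_{\mathcal{F}}(u,a)-d_{\mathcal{F}}(u,w)\le \ell$, so $w\in B_{\ell}$; the same applies to the $b$--$v$ path, and $a,b\in D_{0}\subseteq B_{\ell}$. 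Hence the whole walk lies in $\mathcal{F}[B_{\ell}]$ and has length at most $\ell+1+\ell=2\ell+1$ (and at most $2\ell$ when $a=b$), so $\text{diam}(\mathcal{F}[B_{\ell}])\le 2\ell+1$.

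The main subtlety, and the reason the statement concerns the induced subgraph rather than $\mathcal{F}$ itself, is that the paths realizing the diameter must stay inside $B_{\ell}$; this is exactly what the ``geodesics toward a center remain in that center's ball'' observation secures, while the fact that $D_{0}$ is a clique lets the three pieces glue into a genuine walk of $\mathcal{F}[B_{\ell}]$. I do not expect to need the precise cardinalities from Lemma~\ref{le1} for this argument: it rests only on the radial structure of $B_{\ell}$ and on Proposition~\ref{prodist}.
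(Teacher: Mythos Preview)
Your proof is correct and more explicit than the paper's. The paper argues by induction: $\mathcal{F}[B_{0}]\cong K_{4}$ has diameter $1$, and then simply asserts that $\text{diam}(\mathcal{F}[B_{\ell+1}])=\text{diam}(\mathcal{F}[B_{\ell}])+2$ ``by construction'', without spelling out why the induced diameter grows by exactly two at each step. Your route is a direct two-sided bound: an explicit diametral pair $(-\ell,0,0),(\ell+1,0,0)$ for the lower bound, and a routing-through-centers argument for the upper bound that uses only the ball description $B_{\ell}=\bigcup_{a\in D_{0}}\{w:d_{\mathcal{F}}(w,a)\le\ell\}$ together with the fact that geodesics toward a center stay in that center's ball. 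The advantage of your approach is that it makes transparent precisely the point the paper's induction sweeps under ``by construction'', namely that the induced distance in $\mathcal{F}[B_{\ell}]$ agrees with the ambient distance (your geodesic-in-the-ball observation), and it needs nothing from Lemma~\ref{le1}. The paper's inductive phrasing is more compact but leaves the reader to reconstruct essentially your argument to justify the $+2$ step.
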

\begin{proof}
Note that $\mathcal{F}[B_{0}]$ is an induced complete graph of four vertices and thus $\text{diam}(\mathcal{F}[B_{0}])=1$. By construction, we have $\text{diam}(\mathcal{F}[B_{\ell+1}])=diam(\mathcal{F}[B_{\ell }])+2$. Thus, by induction, we have $\text{diam}(\mathcal{F}[B_{\ell }])=2\ell +1$.
\end{proof}
As in Proposition \ref{evenbound}, we infer the following lower bound on $\chi(\mathcal{F}^d)$, for an odd $d$.
\begin{prop}\label{oddbound}
If $d$ is odd, then $$\chi(\mathcal{F}^{d})\ge \frac{5}{12} d^{3}+\frac{5}{4} d^{2}+\frac{19}{12} d +\frac{3}{4}.$$
\end{prop}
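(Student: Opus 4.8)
The plan is to follow exactly the strategy behind Proposition~\ref{evenbound}, but with the ball around a single vertex replaced by the ball $B_\ell$ around the tetrahedral clique $D_0$; changing the centre is what matches the odd parity of $d$. The governing principle is $\chi(\mathcal{F}^d)\ge\omega(\mathcal{F}^d)$, so it suffices to exhibit a clique of $\mathcal{F}^d$ of the required size, and a set $S\subseteq V(\mathcal{F})$ induces a clique of $\mathcal{F}^d$ precisely when every two of its vertices are at distance at most $d$ \emph{in $\mathcal{F}$}.

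First I would set $\ell=(d-1)/2$, a nonnegative integer since $d$ is odd, and argue that $B_\ell$ induces a clique of $\mathcal{F}^d$. Here the one point needing care is that Lemma~\ref{le2} bounds the diameter of the \emph{induced} subgraph $\mathcal{F}[B_\ell]$, whereas adjacency in $\mathcal{F}^d$ is governed by distances in the ambient graph $\mathcal{F}$. These cooperate in the right direction: any path inside $\mathcal{F}[B_\ell]$ is also a path of $\mathcal{F}$, so $d_{\mathcal{F}}(u,v)\le d_{\mathcal{F}[B_\ell]}(u,v)\le\mathrm{diam}(\mathcal{F}[B_\ell])=2\ell+1=d$ for all $u,v\in B_\ell$. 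Hence $B_\ell$ is a clique of $\mathcal{F}^d$ and $\chi(\mathcal{F}^d)\ge|B_{(d-1)/2}|$. This is also why the tetrahedral centre is the correct choice for odd $d$: the ball around one vertex has even diameter $2\ell$, ideal for even $d$, while $B_\ell$ has odd diameter $2\ell+1$, so no unit of radius is wasted.

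It then remains to evaluate $|B_\ell|$. Using $|D_0|=4$ together with $|D_i|=10i^2+10i+4$ for $i\ge1$ from Lemma~\ref{le1}, and the standard formulas for $\sum i$ and $\sum i^2$, I would obtain
\[
|B_\ell|=4+\sum_{i=1}^{\ell}(10i^2+10i+4)=\frac{10\ell^3+30\ell^2+32\ell+12}{3}.
\]
Substituting $\ell=(d-1)/2$ and simplifying then yields $|B_{(d-1)/2}|=\frac{5}{12}d^3+\frac{5}{4}d^2+\frac{19}{12}d+\frac{3}{4}$, which is exactly the claimed bound.

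Since Lemmas~\ref{le1} and~\ref{le2} do the structural work, I do not anticipate a genuine obstacle; the argument is essentially bookkeeping. The only places meriting attention are the inequality $d_{\mathcal{F}}\le d_{\mathcal{F}[B_\ell]}$ noted above, and the arithmetic of the substitution $\ell=(d-1)/2$, where the fractional coefficients must be tracked carefully to land on the tail $\frac{19}{12}d+\frac{3}{4}$ rather than the even-case tail $\frac{11}{6}d+1$ of Proposition~\ref{evenbound}. Checking that the cubic and quadratic coefficients coincide with the even case, both reflecting the leading growth of the packing count A005902, gives a convenient sanity check on the computation.
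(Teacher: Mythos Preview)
Your proposal is correct and follows essentially the same approach as the paper: use Lemma~\ref{le2} to see that $B_{(d-1)/2}$ is a clique of $\mathcal{F}^d$, then sum the shell sizes from Lemma~\ref{le1} and substitute $\ell=(d-1)/2$. Your explicit remark that $d_{\mathcal{F}}\le d_{\mathcal{F}[B_\ell]}$ is a welcome clarification the paper leaves implicit, but otherwise the argument and the arithmetic are the same.
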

\begin{proof}
Since, by Lemma \ref{le2}, $\text{diam}(\mathcal{F}[B_{\ell}])=2\ell +1$, we have $\chi(\mathcal{F}^{2 \ell+1})\ge |B_{\ell}|$. By Lemma \ref{le1}, $|D_{\ell}|=10 \ell^{2}+10\ell+4$ and by calculation: 
\begin{center}
   $$
   \begin{array}{rcl}
|B_{\ell}|=\sum_{i=0}^{\ell} |D_{i}|=\sum_{i=0}^{\ell} (10i^{2}+10i+4)=10 \sum_{i=0}^{\ell }i^{2}+10 \sum_{i=0}^{\ell } i+4\ell\\ \\
= 10(\frac{1}{6} \ell (2\ell^{2}+3\ell +1))+10 (\frac{1}{2} \ell(\ell+1))+4(\ell+1)=\frac{10}{3} \ell^{3}+10\ell^2 +\frac{32}{3} \ell+4.
   \end{array}
   $$
\end{center}
Also, by calculation: $$|B_{(d-1)/2}|=\frac{5}{12} d^{3}+\frac{5}{4} d^{2}+\frac{19}{12} d +\frac{3}{4}.$$ 
Thus, if $d$ is odd, then we obtain that $$\chi(\mathcal{F}^{d})\ge|B_{(d-1)/2}|=\frac{5}{12} d^{3}+\frac{5}{4} d^{2}+\frac{19}{12} d +\frac{3}{4}.$$
\end{proof}
\subsection{Upper bound on the chromatic number of $\mathcal{F}^{d}$}
We determine the following upper bound on the chromatic number of $\mathcal{F}^{d}$.
Note that this upper bound is reached for $d=1$ and is not reached for $d=2$ or $d=3$ (see Section 4).
\begin{theorem}\label{upbound}
For any $d\ge 1$, $\chi(\mathcal{F}^{d})\le (d+1) \lceil (d+1)^{2}/2 \rceil.$
\end{theorem}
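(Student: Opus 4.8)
The plan is to build a proper coloring of $\mathcal{F}^d$ as a product of a ``layer coloring'' and a ``within-layer coloring.'' The key observation, read off from Proposition~\ref{prodist}, is that $d_{\mathcal{F}}((i,j,k),(i',j',k')) \ge |k-k'|$, so two vertices lying in layers whose indices differ by more than $d$ are never adjacent in $\mathcal{F}^d$; moreover, when $k=k'$ the formula collapses to the $\ell_1$-distance $|i-i'|+|j-j'|$, i.e.\ the graph distance of a square grid. Thus the three-dimensional coloring problem will decouple into independent square-grid colorings once the layer index is handled modulo $d+1$.

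First I would establish the within-layer ingredient as a lemma: the $d^{\text{th}}$ power of the infinite square grid admits a proper coloring with $\lceil (d+1)^2/2\rceil$ colors. I would prove this by a lattice (``diagonal'') coloring, coloring each vertex $(i,j)$ by its coset in $\mathbb{Z}^2/\Lambda$ for a suitable sublattice $\Lambda$. For $d=2m$ I take $\Lambda$ generated by $(m+1,m)$ and $(-m,m+1)$, and for $d=2m+1$ I take $\Lambda$ generated by $(m+1,m+1)$ and $(m+1,-(m+1))$. In each case the index $[\mathbb{Z}^2:\Lambda]$ equals $\lceil (d+1)^2/2\rceil$, and a short computation shows that every nonzero vector of $\Lambda$ has $\ell_1$-norm at least $d+1$; hence two square-grid vertices receiving the same coset-color are at $\ell_1$-distance at least $d+1>d$, so the coloring is proper for the $d^{\text{th}}$ power.

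Then I combine the two ingredients. Writing $c$ for the square-grid coloring above (using $\lceil(d+1)^2/2\rceil$ colors), I would color each vertex $(i,j,k)$ of $\mathcal{F}$ by the pair $\bigl(k \bmod (d+1),\, c(i,j)\bigr)$, which uses exactly $(d+1)\lceil (d+1)^2/2\rceil$ colors. Properness is checked by a case analysis on two adjacent vertices $(i,j,k)$ and $(i',j',k')$ of $\mathcal{F}^d$: if $k \not\equiv k' \pmod{d+1}$ the first coordinates already differ; if $k \equiv k' \pmod{d+1}$ with $k \ne k'$ then $|k-k'|\ge d+1$, so Proposition~\ref{prodist} forces the distance to exceed $d$, contradicting adjacency; and if $k=k'$ the two vertices lie in a common layer at $\ell_1$-distance at most $d$, so the chosen square-grid coloring $c$ separates them.

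The routine part is the index and minimum-norm computation for the two lattices; the only point requiring genuine care is confirming that the minimum $\ell_1$-norm of the odd-$d$ lattice is exactly $d+1$ (the parity constraint forcing the two generator coefficients to share a parity is what rules out a shorter vector), and checking that both the even and odd index formulas collapse to $\lceil (d+1)^2/2\rceil$. I expect the main conceptual step --- and the reason the bound takes this product shape --- to be the reduction via Proposition~\ref{prodist} showing that layers sharing a residue modulo $d+1$ never interact, which is exactly what decouples the problem into independent copies of the square-grid coloring.
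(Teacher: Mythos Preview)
Your proposal is correct and follows essentially the same route as the paper: color by the pair (layer index modulo $d+1$, optimal $d$-distance square-grid color), using Proposition~\ref{prodist} to see both that layers whose indices differ by at least $d+1$ never interact and that within-layer distances are the $\ell_1$ square-grid distances. The paper simply cites~\cite{FE2003} for the $\lceil (d+1)^2/2\rceil$-coloring of the square grid where you supply a self-contained lattice construction; the only cosmetic point to tidy is that odd layers have half-integer coordinates, so $c(i,j)$ should be read through the obvious integer reindexing of each layer.
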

\begin{proof}
By Proposition \ref{prodist}, the distance between two vertices $(i,j,k)$ and $(i',j',k)$ from the same layer is $|i-i|+|j-j'|$.
This distance corresponds to the distance between two vertices $(i,j)$ and $(i',j')$ in the square grid.
Moreover, the distance between two vertices $(i,j,k)$ and $(i',j',k')$ is at least $d+1$ if $|k-k'|\ge d+1$.

Therefore, by using $(d+1)$ different patterns of $\lceil (d+1)^{2}/2 \rceil$ colors used for coloring the $d^{\text{th}}$ power of the square grid \cite{FE2003} it is possible to color $\mathcal{F}^{d}$. 
We use the $a^{\text{th}}$ pattern to color the layer $b$, if $b\equiv a\pmod {d+1}$. Since in each pattern we use $\lceil (d+1)^{2}/2 \rceil$ different colors, in total we have used  $(d+1) \lceil (d+1)^{2}/2 \rceil$ colors.
\end{proof}

\section{Chromatic number of the $2^{\text{nd}}$ and the $3^{\text{th}}$ power of $\mathcal{F}$}
\subsection{Chromatic number of the $2^{\text{nd}}$ power of $\mathcal{F}$}
Note that the value of $\chi(\mathcal{F}^{2})$ corresponds to the lower bound of Proposition \ref{evenbound} which itself corresponds to the size of the largest clique in $\mathcal{F}^2$.
\begin{theorem}\label{balancetontheoreme}
$\chi(\mathcal{F}^{2})=13$.
\end{theorem}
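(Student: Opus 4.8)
The plan is to prove $\chi(\mathcal{F}^2)=13$ by establishing a matching lower and upper bound. The lower bound $\chi(\mathcal{F}^2)\ge 13$ is already available: Proposition~\ref{evenbound} with $d=2$ gives exactly $\frac{5}{12}(8)+\frac{5}{4}(4)+\frac{11}{6}(2)+1=\frac{10}{3}+5+\frac{11}{3}+1=13$. Equivalently, this is the size of the clique $A_1$ (the closed ball of radius $1$ around a vertex, which has $(2\cdot1+1)(5+5+3)/3=13$ vertices and diameter $2$), so $\mathcal{F}^2$ contains a $K_{13}$ and needs at least $13$ colors. Thus the entire content of the theorem is the upper bound $\chi(\mathcal{F}^2)\le 13$, which I would prove by exhibiting an explicit periodic $13$-coloring.

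The key step is to construct a proper coloring $c:V(\mathcal{F})\to\mathbb{Z}_{13}$ for the square of the grid, i.e. a coloring in which any two vertices at $\mathcal{F}$-distance $\le 2$ receive distinct colors. The natural guess is a linear (lattice) coloring: pick integer coefficients $\alpha,\beta,\gamma$ and set $c(i,j,k)\equiv \alpha\cdot(2i)+\beta\cdot(2j)+\gamma\cdot k\pmod{13}$ (using $2i,2j$ so that the half-integer coordinates on odd layers produce integers). To verify propriety I would enumerate, up to the symmetries of $\mathcal{F}$, all the ``difference vectors'' $(i-i',\,j-j',\,k-k')$ realizing pairs at distance exactly $1$ or exactly $2$, using the closed-form distance formula of Proposition~\ref{prodist}; by periodicity of a lattice coloring it suffices to check these finitely many offset vectors from the origin. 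Concretely, the distance-$\le 2$ neighborhood of $(0,0,0)$ is $A_1\cup(A_2\setminus A_1)$ and has $|A_2|=(5)(20+10+3)/3=55$ points, so there are $54$ nonzero offsets to examine; for each I must confirm that its color-increment $\alpha(2\Delta i)+\beta(2\Delta j)+\gamma\Delta k$ is nonzero mod $13$. A search over small coefficients should yield a valid triple (a plausible candidate is something like $c\equiv i+3j+? k$ after rescaling, found by requiring the $54$ increments to avoid $0$ modulo $13$).

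The main obstacle is the careful bookkeeping of the offset set and the verification that a chosen $(\alpha,\beta,\gamma)$ avoids $0$ on all $54$ nonzero differences; because $13$ equals the neighborhood size exactly, the coloring must be \emph{perfect} in the sense that the $13$ color classes partition into translates each hitting every ball exactly once, leaving no slack, so the coefficients cannot be chosen carelessly. I would organize the check by layer offset $|k-k'|\in\{0,1,2\}$: for $|k-k'|=0$ the relevant offsets are the distance-$\le 2$ vectors of the square grid within one layer (these reduce to the known square-grid distance-$2$ coloring constraints, compare Theorem~\ref{upbound}); for $|k-k'|=1$ the distance formula gives the diagonal neighbors $(\pm\tfrac12,\pm\tfrac12,\pm1)$ and their distance-$2$ extensions; for $|k-k'|=2$ only the nearby in-layer shifts survive. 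Assembling these three groups, writing each increment explicitly mod $13$, and observing that none vanishes completes the proof that $13$ colors suffice, matching the lower bound and giving $\chi(\mathcal{F}^2)=13$.
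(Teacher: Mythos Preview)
Your proposal is correct and follows essentially the same approach as the paper: the lower bound comes from Proposition~\ref{evenbound} (the $13$-vertex ball $A_1$), and the upper bound is obtained by a linear lattice coloring modulo $13$ verified via a case analysis on $|k-k'|\in\{0,1,2\}$. The paper's explicit choice is $c(i,j,k)\equiv i-2j+\tfrac{9k}{2}\pmod{13}$, which after multiplying by $2$ (a unit mod $13$) is exactly of your form $\alpha(2i)+\beta(2j)+\gamma k$ with $(\alpha,\beta,\gamma)=(1,-2,9)$; your verification plan by layer offset matches the paper's Cases~1--3.
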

\begin{proof}
By Proposition \ref{evenbound}, we have $\chi(\mathcal{F}^2)\ge 40/12+5+11/3+1=13$. Now, it remains to prove that $\chi(\mathcal{F}^{2})\le 13$.
We set the following coloring function:
$$\forall (i,j,k)\in V(\mathcal{F}),\ c((i,j,k)) = i-2j+\frac{9k}{2}\pmod{13}.$$
We claim that $c$ is a coloring function of the $2^{\text{nd}}$ power of $\mathcal{F}$.
Let $u$ be a vertex in $\mathcal{F}$ such that $c(u)=0$. In Figure~\ref{colorpower2}, we represent the graph induced by the vertices at distance at most $2$ from $u$. It is easy to verify that no vertex of this subgraph except $u$ has color $0$.

In order to prove that every two vertices $(i,j,k)$ and $(i',j',k')$ satisfying $c((i,j,k))=c((i',j',k'))$ are such that $d_{\mathcal{F}}((i,j,k),(i',j',k'))>2$, we consider three cases: first the case $k'=k$, second the case $|k-k'|=1$ and finally the case $|k-k'|=2$.

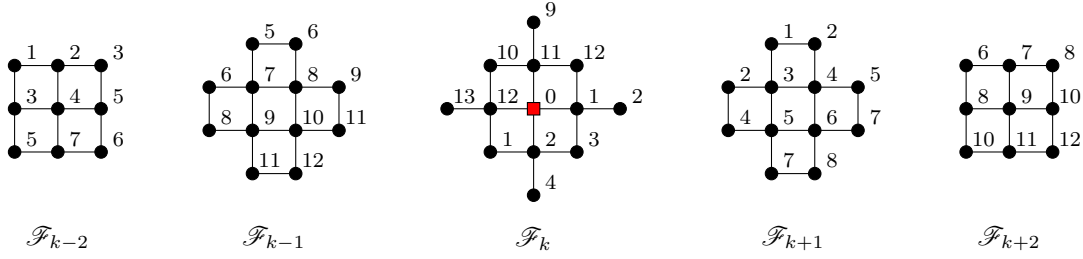
\begin{figure}[t]
\begin{center}
\begin{tikzpicture}[scale=1.15]

\draw (-1,0) -- (1,0);
\draw (-0.5,0.5) -- (0.5,0.5);
\draw (-0.5,-0.5) -- (0.5,-0.5);
\draw (0,-1) -- (0,1);
\draw (0.5,-0.5) -- (0.5,0.5);
\draw (-0.5,-0.5) -- (-0.5,0.5);

\node at (0,0)[regular polygon, regular polygon sides=4,draw=black,fill=red,scale=0.5]{};
\node at (0.5,0)[circle,draw=black,fill=black,scale=0.5]{};
\node at (1,0)[circle,draw=black,fill=black,scale=0.5]{};
\node at (-0.5,0)[circle,draw=black,fill=black,scale=0.5]{};
\node at (-1,0)[circle,draw=black,fill=black,scale=0.5]{};
\node at (0,0.5)[circle,draw=black,fill=black,scale=0.5]{};
\node at (0.5,0.5)[circle,draw=black,fill=black,scale=0.5]{};
\node at (-0.5,0.5)[circle,draw=black,fill=black,scale=0.5]{};
\node at (0,-0.5)[circle,draw=black,fill=black,scale=0.5]{};
\node at (0.5,-0.5)[circle,draw=black,fill=black,scale=0.5]{};
\node at (-0.5,-0.5)[circle,draw=black,fill=black,scale=0.5]{};
\node at (0,1)[circle,draw=black,fill=black,scale=0.5]{};
\node at (0,-1)[circle,draw=black,fill=black,scale=0.5]{};

\node at (0.2,0.15){\footnotesize{$0$}};
\node at (0.7,0.15){\footnotesize{$1$}};
\node at (1.2,0.15){\footnotesize{$2$}};
\node at (-0.3,0.15){\footnotesize{$12$}};
\node at (-0.8,0.15){\footnotesize{$13$}};
\node at (0.2,0.65){\footnotesize{$11$}};
\node at (0.7,0.65){\footnotesize{$12$}};
\node at (-0.3,0.65){\footnotesize{$10$}};
\node at (0.2,-0.35){\footnotesize{$2$}};
\node at (0.7,-0.35){\footnotesize{$3$}};
\node at (-0.3,-0.35){\footnotesize{$1$}};
\node at (0.2,1.15){\footnotesize{$9$}};
\node at (0.2,-0.85){\footnotesize{$4$}};

\node at (0,-1.5){$\mathcal{F}_k$};
\node at (-3,-1.5){$\mathcal{F}_{k-1}$};
\node at (-5.5,-1.5){$\mathcal{F}_{k-2}$};
\node at (3,-1.5){$\mathcal{F}_{k+1}$};
\node at (5.5,-1.5){$\mathcal{F}_{k+2}$};

\draw (-0.75-3,0.25) -- (0.75-3,0.25);
\draw (-0.25-3,0.75) -- (0.25-3,0.75);
\draw (-0.75-3,-0.25) -- (0.75-3,-0.25);
\draw (-0.25-3,-0.75) -- (0.25-3,-0.75);

\draw (0.25-3,-0.75) -- (0.25-3,0.75);
\draw (0.75-3,-0.25) -- (0.75-3,0.25);
\draw (-0.25-3,-0.75) -- (-0.25-3,0.75);
\draw (-0.75-3,-0.25) -- (-0.75-3,0.25);

\node at (0.25-3,0.25)[circle,draw=black,fill=black,scale=0.5]{};
\node at (0.75-3,0.25)[circle,draw=black,fill=black,scale=0.5]{};
\node at (-0.25-3,0.25)[circle,draw=black,fill=black,scale=0.5]{};
\node at (-0.75-3,0.25)[circle,draw=black,fill=black,scale=0.5]{};
\node at (0.25-3,-0.25)[circle,draw=black,fill=black,scale=0.5]{};
\node at (0.75-3,-0.25)[circle,draw=black,fill=black,scale=0.5]{};
\node at (-0.25-3,-0.25)[circle,draw=black,fill=black,scale=0.5]{};
\node at (-0.75-3,-0.25)[circle,draw=black,fill=black,scale=0.5]{};
\node at (0.25-3,0.75)[circle,draw=black,fill=black,scale=0.5]{};
\node at (-0.25-3,0.75)[circle,draw=black,fill=black,scale=0.5]{};
\node at (0.25-3,-0.75)[circle,draw=black,fill=black,scale=0.5]{};
\node at (-0.25-3,-0.75)[circle,draw=black,fill=black,scale=0.5]{};

\node at (0.45-3,0.4){\footnotesize{$8$}};
\node at (0.95-3,0.4){\footnotesize{$9$}};
\node at (-0.05-3,0.4){\footnotesize{$7$}};
\node at (-0.55-3,0.4){\footnotesize{$6$}};
\node at (0.45-3,-0.1){\footnotesize{$10$}};
\node at (0.95-3,-0.1){\footnotesize{$11$}};
\node at (-0.05-3,-0.1){\footnotesize{$9$}};
\node at (-0.55-3,-0.1){\footnotesize{$8$}};
\node at (0.45-3,0.9){\footnotesize{$6$}};
\node at (-0.05-3,0.9){\footnotesize{$5$}};
\node at (0.45-3,-0.6){\footnotesize{$12$}};
\node at (-0.05-3,-0.6){\footnotesize{$11$}};

\draw (-0.5-5.5,0.5) -- (0.5-5.5,0.5);
\draw (-0.5-5.5,0) -- (0.5-5.5,0);
\draw (-0.5-5.5,-0.5) -- (0.5-5.5,-0.5);
\draw (0.5-5.5,-0.5) -- (0.5-5.5,0.5);
\draw (0-5.5,-0.5) -- (0-5.5,0.5);
\draw (-0.5-5.5,-0.5) -- (-0.5-5.5,0.5);

\node at (-0.5-5.5,0.5)[circle,draw=black,fill=black,scale=0.5]{};
\node at (0-5.5,0.5)[circle,draw=black,fill=black,scale=0.5]{};
\node at (0.5-5.5,0.5)[circle,draw=black,fill=black,scale=0.5]{};
\node at (-0.5-5.5,0)[circle,draw=black,fill=black,scale=0.5]{};
\node at (0-5.5,0)[circle,draw=black,fill=black,scale=0.5]{};
\node at (0.5-5.5,0)[circle,draw=black,fill=black,scale=0.5]{};
\node at (-0.5-5.5,-0.5)[circle,draw=black,fill=black,scale=0.5]{};
\node at (0-5.5,-0.5)[circle,draw=black,fill=black,scale=0.5]{};
\node at (0.5-5.5,-0.5)[circle,draw=black,fill=black,scale=0.5]{};

\node at (0.2-5.5,0.15){\footnotesize{$4$}};
\node at (0.7-5.5,0.15){\footnotesize{$5$}};
\node at (-0.3-5.5,0.15){\footnotesize{$3$}};
\node at (0.2-5.5,0.65){\footnotesize{$2$}};
\node at (0.7-5.5,0.65){\footnotesize{$3$}};
\node at (-0.3-5.5,0.65){\footnotesize{$1$}};
\node at (0.2-5.5,-0.35){\footnotesize{$7$}};
\node at (0.7-5.5,-0.35){\footnotesize{$6$}};
\node at (-0.3-5.5,-0.35){\footnotesize{$5$}};

\draw (-0.75+3,0.25) -- (0.75+3,0.25);
\draw (-0.25+3,0.75) -- (0.25+3,0.75);
\draw (-0.75+3,-0.25) -- (0.75+3,-0.25);
\draw (-0.25+3,-0.75) -- (0.25+3,-0.75);

\draw (0.25+3,-0.75) -- (0.25+3,0.75);
\draw (0.75+3,-0.25) -- (0.75+3,0.25);
\draw (-0.25+3,-0.75) -- (-0.25+3,0.75);
\draw (-0.75+3,-0.25) -- (-0.75+3,0.25);

\node at (0.25+3,0.25)[circle,draw=black,fill=black,scale=0.5]{};
\node at (0.75+3,0.25)[circle,draw=black,fill=black,scale=0.5]{};
\node at (-0.25+3,0.25)[circle,draw=black,fill=black,scale=0.5]{};
\node at (-0.75+3,0.25)[circle,draw=black,fill=black,scale=0.5]{};
\node at (0.25+3,-0.25)[circle,draw=black,fill=black,scale=0.5]{};
\node at (0.75+3,-0.25)[circle,draw=black,fill=black,scale=0.5]{};
\node at (-0.25+3,-0.25)[circle,draw=black,fill=black,scale=0.5]{};
\node at (-0.75+3,-0.25)[circle,draw=black,fill=black,scale=0.5]{};
\node at (0.25+3,0.75)[circle,draw=black,fill=black,scale=0.5]{};
\node at (-0.25+3,0.75)[circle,draw=black,fill=black,scale=0.5]{};
\node at (0.25+3,-0.75)[circle,draw=black,fill=black,scale=0.5]{};
\node at (-0.25+3,-0.75)[circle,draw=black,fill=black,scale=0.5]{};

\node at (0.45+3,0.4){\footnotesize{$4$}};
\node at (0.95+3,0.4){\footnotesize{$5$}};
\node at (-0.05+3,0.4){\footnotesize{$3$}};
\node at (-0.55+3,0.4){\footnotesize{$2$}};
\node at (0.45+3,-0.1){\footnotesize{$6$}};
\node at (0.95+3,-0.1){\footnotesize{$7$}};
\node at (-0.05+3,-0.1){\footnotesize{$5$}};
\node at (-0.55+3,-0.1){\footnotesize{$4$}};
\node at (0.45+3,0.9){\footnotesize{$2$}};
\node at (-0.05+3,0.9){\footnotesize{$1$}};
\node at (0.45+3,-0.6){\footnotesize{$8$}};
\node at (-0.05+3,-0.6){\footnotesize{$7$}};

\draw (-0.5+5.5,0.5) -- (0.5+5.5,0.5);
\draw (-0.5+5.5,0) -- (0.5+5.5,0);
\draw (-0.5+5.5,-0.5) -- (0.5+5.5,-0.5);
\draw (0.5+5.5,-0.5) -- (0.5+5.5,0.5);
\draw (0+5.5,-0.5) -- (0+5.5,0.5);
\draw (-0.5+5.5,-0.5) -- (-0.5+5.5,0.5);

\node at (-0.5+5.5,0.5)[circle,draw=black,fill=black,scale=0.5]{};
\node at (0+5.5,0.5)[circle,draw=black,fill=black,scale=0.5]{};
\node at (0.5+5.5,0.5)[circle,draw=black,fill=black,scale=0.5]{};
\node at (-0.5+5.5,0)[circle,draw=black,fill=black,scale=0.5]{};
\node at (0+5.5,0)[circle,draw=black,fill=black,scale=0.5]{};
\node at (0.5+5.5,0)[circle,draw=black,fill=black,scale=0.5]{};
\node at (-0.5+5.5,-0.5)[circle,draw=black,fill=black,scale=0.5]{};
\node at (0+5.5,-0.5)[circle,draw=black,fill=black,scale=0.5]{};
\node at (0.5+5.5,-0.5)[circle,draw=black,fill=black,scale=0.5]{};

\node at (0.2+5.5,0.15){\footnotesize{$9$}};
\node at (0.7+5.5,0.15){\footnotesize{$10$}};
\node at (-0.3+5.5,0.15){\footnotesize{$8$}};
\node at (0.2+5.5,0.65){\footnotesize{$7$}};
\node at (0.7+5.5,0.65){\footnotesize{$8$}};
\node at (-0.3+5.5,0.65){\footnotesize{$6$}};
\node at (0.2+5.5,-0.35){\footnotesize{$11$}};
\node at (0.7+5.5,-0.35){\footnotesize{$12$}};
\node at (-0.3+5.5,-0.35){\footnotesize{$10$}};

\end{tikzpicture}
\end{center}
\caption{The colors of the vertices at distance at most 2 of a fixed vertex $u$ of color $0$ in the layer $k$ of the graph $\mathcal{F}$ (square: $u$; circle: vertex at distance $d$ of $u$, $1\le d\le 2$).}
\label{colorpower2}
\end{figure}

\begin{description}
\item[Case 1: $k'=k$.]
The vertices $(i',j',k)$ at distance at most 2 from $(i,j,k)$ are such that $|i-i'|+ |j-j'|\le 2$. 
First, if $1\le |i-i'|\le 2$ and $j=j'$, then we have $(i-i')\not\equiv 0\pmod{13}$. Consequently, since $c((i,j,k))-c((i',j',k)\equiv (i-i')\pmod{13}$, we have $c((i,j,k))\neq c((i',j',k))$. 
Second, if $1\le |j-j'|\le 2$ and $i=i'$, then we have $2(j'-j)\not\equiv 0\pmod{13}$. Consequently, since $c((i,j,k))-c((i',j',k)\equiv 2(j'-j)\pmod{13}$, we have $c((i,j,k))\neq c((i',j',k))$. 
Finally, if $|i-i'|=1$ and $|j-j'|=1$, then we have $(i-i')+2(j'-j)\not\equiv 0 \pmod{13}$. Consequently, since $c((i,j,k))-c((i',j',k)\equiv (i-i')+2(j'-j)\pmod{13}$ , we have $c((i,j,k))\neq c((i',j',k))$. Thus, no vertex at distance at most 2 from $(i,j,k)$ has the same color than $(i,j,k)$ in the layer $k$.

\item[Case 2: $|k'-k|=1$.]
The vertices $(i',j',k+1)$ ( $(i',j',k-1)$, respectively) at distance at most 2 from $(i,j,k)$ are such that $|i-i'|=1/2\land |j-j'|\le 3/2 $ or such that $|i-i'|\le 3/2 \land |j-j'|=1/2$. Therefore, we have $(i-i')+2(j'-j)\not\equiv 9/2 \pmod{13}$ ( $(i-i')+2(j'-j)\not\equiv -9/2\pmod{13}$, respectively). Thus, since $c((i,j,k))-c((i',j',k+1)\equiv (i-i')+2(j'-j)-9/2\pmod{13}$ ( $c((i,j,k))-c((i',j',k-1)\equiv (i-i')+2(j'-j)+9/2 \pmod{13}$, respectively), we have $c((i,j,k))\neq c((i',j',k+1))$ ($c((i,j,k))\neq c((i',j',k-1))$, respectively). Therefore, no vertex at distance at most 2 from $(i,j,k)$ has the same color than $(i,j,k)$ in the layer $k+1$ ($k-1$, respectively).

\item[Case 3: $|k'-k|=2$.]
The vertices $(i,j,k+2)$ ($(i',j',k-2)$, respectively) at distance at most 2 from $(i,j,k)$ are such that $|i-i'|\le 1\land |j-j'|\le 1$. Therefore, we have $(i-i')+2(j'-j)\not\equiv 9\pmod{13}$ ($(i-i')+2(j'-j)\not\equiv 4\pmod{13}$, respectively). Thus, since $c((i,j,k))-c((i',j',k+2)\equiv (i-i')+2(j'-j)-9\pmod{13}$ ( $c((i,j,k))-c((i',j',k-2)\equiv (i-i')+2(j'-j)+9\pmod{13}$, respectively), we have $c((i,j,k))\neq c((i',j',k+2))$ ($c((i,j,k))\neq c((i',j',k-2))$, respectively). Therefore, no vertex at distance at most 2 from $(i,j,k)$ has the same color than $(i,j,k)$ in the layer $k+2$ ($k-2$, respectively).

\end{description}
Finally, since every vertex of layer $d$, for $d\le k-3$ or $d\ge k+3$ is at distance at least 3 of $(i,j,k)$, we obtain that no vertex at distance at most $2$ from $(i,j,k)$ has the same color than $(i,j,k)$.
\end{proof}
\subsection{Chromatic number of the $3^{\text{th}}$ power of $\mathcal{F}$}
Note that the value of $\chi(\mathcal{F}^{3})$ is smaller than the upper bound of Proposition \ref{upbound} (which is 32, for $d=3$).
In the following theorem, we denote by $i_{(\text{mod } j)}$ or by $i\pmod{j}$ the smallest positive integer $a$ such $i\equiv a\pmod{30}$
\begin{theorem}
$\chi(\mathcal{F}^{3})\le 30$.
\end{theorem}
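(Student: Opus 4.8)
The plan is to prove the bound constructively, by exhibiting an explicit $30$-coloring in the same spirit as the $13$-coloring used for $\mathcal{F}^2$ in Theorem~\ref{balancetontheoreme}. Concretely, I would look for coefficients $\alpha,\beta,\gamma$ and set
$$c((i,j,k)) = \alpha i + \beta j + \gamma k \pmod{30},$$
with $\gamma$ taken to be an integer or a half-integer and a parity condition imposed on $\alpha,\beta,\gamma$ so that, exactly as in the $d=2$ case, the half-integer contributions coming from odd layers combine with $\gamma k$ to make $c$ integer-valued on every vertex of $\mathcal{F}$. Since $c$ is additive in the coordinates, the whole problem reduces to a statement about difference vectors: $c$ is a proper coloring of $\mathcal{F}^3$ if and only if
$$\alpha\,\Delta i + \beta\,\Delta j + \gamma\,\Delta k \not\equiv 0 \pmod{30}$$
for every $(\Delta i,\Delta j,\Delta k)=(i-i',\,j-j',\,k-k')$ arising from a pair of \emph{distinct} vertices at distance at most $3$.

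The second step is to enumerate these difference vectors, which is precisely where Proposition~\ref{prodist} is used. By the symmetries $\Delta i\mapsto-\Delta i$, $\Delta j\mapsto-\Delta j$, $\Delta k\mapsto-\Delta k$ I may assume $\Delta k\ge 0$ and split into the four cases $\Delta k\in\{0,1,2,3\}$. Using the reformulation of the distance formula, each case yields a short finite list: integer $(\Delta i,\Delta j)$ with $|\Delta i|+|\Delta j|\le 3$ when $\Delta k=0$; half-integer $(\Delta i,\Delta j)$ with $|\Delta i|+|\Delta j|\le 3$ when $\Delta k=1$; integer $(\Delta i,\Delta j)$ with $p_{+}(|\Delta i|-1)+p_{+}(|\Delta j|-1)\le 1$ when $\Delta k=2$; and half-integer $(\Delta i,\Delta j)$ with $|\Delta i|\le 3/2$ and $|\Delta j|\le 3/2$ when $\Delta k=3$. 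For each fixed $\Delta k$ I would then verify $\alpha\,\Delta i+\beta\,\Delta j+\gamma\,\Delta k\not\equiv 0\pmod{30}$ directly for every entry, organizing the bookkeeping layer by layer and, as in the proof for $\mathcal{F}^2$, recording the colors of all vertices at distance at most $3$ from a fixed vertex $u$ of color $0$ in a figure spanning layers $k-3$ through $k+3$.

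The genuine difficulty lies entirely in the first step: choosing $\alpha,\beta,\gamma$ so that all of these difference vectors simultaneously avoid $0$ modulo $30$. This is tight, since $\mathcal{F}^3$ already contains a clique of $28$ vertices by Proposition~\ref{oddbound} and $30$ is only marginally larger, so the coloring has almost no slack; in particular $c$ must be forced to be injective on that clique, using $28$ of the $30$ available residues (which already requires $\gcd(\alpha,\beta,\gamma,30)=1$ so that enough colors are in the image). Compared with the $d=2$ argument, the genuinely new constraints come from the case $\Delta k=3$, absent when $d=2$: it forces $\gamma$ together with the in-layer coefficients acting on the offsets $\pm 1/2,\pm 3/2$ to avoid several further residues, coupling the choice of $\gamma$ to that of $\alpha$ and $\beta$. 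I expect the search for one working triple to be the crux; once it is fixed, the remainder is the routine finite verification described above, best presented as a per-layer color figure exactly as in Theorem~\ref{balancetontheoreme}.
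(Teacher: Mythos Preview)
Your plan mirrors the $d=2$ argument, but it overlooks a key difference: the paper does \emph{not} use a purely linear form for $d=3$. Its coloring is
\[
c((i,j,k))=\begin{cases} i_{(\bmod\,5)}+5j+\dfrac{15k}{2}\pmod{30}, & k\ \text{even},\\[4pt] (i-\tfrac12)_{(\bmod\,5)}+5\bigl(j-\tfrac12\bigr)+8+\dfrac{15(k-1)}{2}\pmod{30}, & k\ \text{odd},\end{cases}
\]
and the term $i_{(\bmod\,5)}$ is genuinely nonlinear in $i$ (its increment is $+1$ four steps out of five and $-4$ on the fifth, which are distinct modulo $30$). So while your layer-by-layer verification and figure match the paper's presentation, the object being verified is not of the shape $\alpha i+\beta j+\gamma k\pmod{30}$ that you propose.

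The gap in your proposal is precisely the step you yourself flag as the crux: you have not produced a triple $(\alpha,\beta,\gamma)$, and you give no argument that one exists. If you rewrite the problem in the symmetric coordinates $(x,y,k)=(i+j,\,i-j,\,k)$ (so that $\mathcal{F}$ becomes the lattice $x+y+k\equiv 0\pmod 2$ and the graph distance is $\max(|x|,|y|,|k|,\tfrac{|x|+|y|+|k|}{2})$), a linear coloring $Ax+By+Ck\pmod{30}$ must avoid $0$ on every nonzero lattice point with all coordinates at most $3$ and coordinate-sum at most $6$. The $24$ sign/permutation classes of $(1,2,3)$, together with the constraints from $(3,\pm 3,0)$, $(2,\pm 2,0)$ and $(2,\pm 2,\pm 2)$, leave very little room, and a direct search over small triples repeatedly hits $0\pmod{30}$ on one of these vectors. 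The authors' workaround is to decouple the $i$-direction by first reducing $i$ modulo $5$ before combining with $5j$ and the layer shift; this sacrifices global linearity but makes the finite check go through. Unless you can actually exhibit a working linear triple (or enlarge your ansatz to allow a nonlinear term of this kind), your plan as written is not yet a proof, and the idea it is missing is exactly this nonlinear twist.
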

\begin{proof}
We set the following function: $\forall (i,j,k)\in V(\mathcal{F}),$
$$ c((i,j,k)) = \left\{
    \begin{array}{ll}
        i_{(\text{mod } 5)}+5j+15k/2\pmod{30} & \mbox{if } k\equiv0\pmod{2}; \\
        (i-1/2)_{(\text{mod } 5)}+5(j-1/2)+8 +15(k-1)/2\pmod{30} & \text{otherwise}. \\
    \end{array}
\right.
$$

We claim that $c$ is a coloring function of the $3^{\text{th}}$ power of $\mathcal{F}$.
It remains to prove that every two vertices $(i,j,k)$ and $(i',j',k')$ satisfying $c((i,j,k))=c((i',j',k'))$ are such that $d_{\mathcal{F}}((i,j,k),(i',j',k'))>3$.
This can be proven by a (tedious) case analysis similar with the one of the proof of Theorem ~\ref{balancetontheoreme}. Instead, we only give an illustrative argument.

Let $u$ be a vertex in $\mathcal{F}$. Without loss of generality, suppose $u$ has color $0$. In Figure~\ref{colorpower3}, we represent the graph induced by the vertices at distance at most $3$ from $u$. It is easy to verify that no vertex of this subgraph except $u$ has color $0$.

\begin{figure}[t]
\begin{center}
\begin{tikzpicture}[scale=1.15]

\draw (-1.5,0) -- (1.5,0);
\draw (-1,0.5) -- (1,0.5);
\draw (-0.5,1) -- (0.5,1);
\draw (-1,-0.5) -- (1,-0.5);
\draw (-0.5,-1) -- (0.5,-1);
\draw (0,-1.5) -- (0,1.5);
\draw (0.5,-1) -- (0.5,1);
\draw (1,-0.5) -- (1,0.5);
\draw (-0.5,-1) -- (-0.5,1);
\draw (-1,-0.5) -- (-1,0.5);

\node at (0,0)[regular polygon, regular polygon sides=4,draw=black,fill=red,scale=0.5]{};
\node at (0.5,0)[circle,draw=black,fill=black,scale=0.5]{};
\node at (1,0)[circle,draw=black,fill=black,scale=0.5]{};
\node at (1.5,0)[circle,draw=black,fill=black,scale=0.5]{};
\node at (-0.5,0)[circle,draw=black,fill=black,scale=0.5]{};
\node at (-1,0)[circle,draw=black,fill=black,scale=0.5]{};
\node at (-1.5,0)[circle,draw=black,fill=black,scale=0.5]{};
\node at (0,0.5)[circle,draw=black,fill=black,scale=0.5]{};
\node at (0.5,0.5)[circle,draw=black,fill=black,scale=0.5]{};
\node at (1,0.5)[circle,draw=black,fill=black,scale=0.5]{};
\node at (-0.5,0.5)[circle,draw=black,fill=black,scale=0.5]{};
\node at (-1,0.5)[circle,draw=black,fill=black,scale=0.5]{};
\node at (0,-0.5)[circle,draw=black,fill=black,scale=0.5]{};
\node at (0.5,-0.5)[circle,draw=black,fill=black,scale=0.5]{};
\node at (1,-0.5)[circle,draw=black,fill=black,scale=0.5]{};
\node at (-0.5,-0.5)[circle,draw=black,fill=black,scale=0.5]{};
\node at (-1,-0.5)[circle,draw=black,fill=black,scale=0.5]{};
\node at (0,1)[circle,draw=black,fill=black,scale=0.5]{};
\node at (0.5,1)[circle,draw=black,fill=black,scale=0.5]{};
\node at (-0.5,1)[circle,draw=black,fill=black,scale=0.5]{};
\node at (0,-1)[circle,draw=black,fill=black,scale=0.5]{};
\node at (0.5,-1)[circle,draw=black,fill=black,scale=0.5]{};
\node at (-0.5,-1)[circle,draw=black,fill=black,scale=0.5]{};
\node at (0,1.5)[circle,draw=black,fill=black,scale=0.5]{};
\node at (0,-1.5)[circle,draw=black,fill=black,scale=0.5]{};

\node at (0.2,0.15){\footnotesize{$0$}};
\node at (0.7,0.15){\footnotesize{$1$}};
\node at (1.2,0.15){\footnotesize{$2$}};
\node at (1.7,0.15){\footnotesize{$3$}};
\node at (-0.3,0.15){\footnotesize{$4$}};
\node at (-0.8,0.15){\footnotesize{$3$}};
\node at (-1.3,0.15){\footnotesize{$2$}};
\node at (0.2,0.65){\footnotesize{$25$}};
\node at (0.7,0.65){\footnotesize{$26$}};
\node at (1.2,0.65){\footnotesize{$27$}};
\node at (-0.3,0.65){\footnotesize{$29$}};
\node at (-0.8,0.65){\footnotesize{$28$}};
\node at (0.2,-0.35){\footnotesize{$5$}};
\node at (0.7,-0.35){\footnotesize{$6$}};
\node at (1.2,-0.35){\footnotesize{$7$}};
\node at (-0.3,-0.35){\footnotesize{$9$}};
\node at (-0.8,-0.35){\footnotesize{$8$}};
\node at (0.2,1.15){\footnotesize{$20$}};
\node at (0.7,1.15){\footnotesize{$21$}};
\node at (-0.3,1.15){\footnotesize{$24$}};
\node at (0.2,-0.85){\footnotesize{$10$}};
\node at (0.7,-0.85){\footnotesize{$11$}};
\node at (-0.3,-0.85){\footnotesize{$14$}};
\node at (0.2,1.65){\footnotesize{$15$}};
\node at (0.2,-1.35){\footnotesize{$15$}};

\node at (0,-2){$\mathcal{F}_k$};
\node at (-3.5,-2){$\mathcal{F}_{k-1}$};
\node at (-7,-2){$\mathcal{F}_{k-2}$};
\node at (-10.5,-2){$\mathcal{F}_{k-3}$};

\draw (-1.25-3.5,0.25) -- (1.25-3.5,0.25);
\draw (-0.75-3.5,0.75) -- (0.75-3.5,0.75);
\draw (-0.25-3.5,1.25) -- (0.25-3.5,1.25);
\draw (-1.25-3.5,-0.25) -- (1.25-3.5,-0.25);
\draw (-0.75-3.5,-0.75) -- (0.75-3.5,-0.75);
\draw (-0.25-3.5,-1.25) -- (0.25-3.5,-1.25);

\draw (0.25-3.5,-1.25) -- (0.25-3.5,1.25);
\draw (0.75-3.5,-0.75) -- (0.75-3.5,0.75);
\draw (1.25-3.5,-0.25) -- (1.25-3.5,0.25);
\draw (-0.25-3.5,-1.25) -- (-0.25-3.5,1.25);
\draw (-0.75-3.5,-0.75) -- (-0.75-3.5,0.75);
\draw (-1.25-3.5,-0.25) -- (-1.25-3.5,0.25);

\node at (0.25-3.5,0.25)[circle,draw=black,fill=black,scale=0.5]{};
\node at (0.75-3.5,0.25)[circle,draw=black,fill=black,scale=0.5]{};
\node at (1.25-3.5,0.25)[circle,draw=black,fill=black,scale=0.5]{};
\node at (-0.25-3.5,0.25)[circle,draw=black,fill=black,scale=0.5]{};
\node at (-0.75-3.5,0.25)[circle,draw=black,fill=black,scale=0.5]{};
\node at (-1.25-3.5,0.25)[circle,draw=black,fill=black,scale=0.5]{};
\node at (0.25-3.5,-0.25)[circle,draw=black,fill=black,scale=0.5]{};
\node at (0.75-3.5,-0.25)[circle,draw=black,fill=black,scale=0.5]{};
\node at (1.25-3.5,-0.25)[circle,draw=black,fill=black,scale=0.5]{};
\node at (-0.25-3.5,-0.25)[circle,draw=black,fill=black,scale=0.5]{};
\node at (-0.75-3.5,-0.25)[circle,draw=black,fill=black,scale=0.5]{};
\node at (-1.25-3.5,-0.25)[circle,draw=black,fill=black,scale=0.5]{};
\node at (0.25-3.5,0.75)[circle,draw=black,fill=black,scale=0.5]{};
\node at (0.75-3.5,0.75)[circle,draw=black,fill=black,scale=0.5]{};
\node at (-0.25-3.5,0.75)[circle,draw=black,fill=black,scale=0.5]{};
\node at (-0.75-3.5,0.75)[circle,draw=black,fill=black,scale=0.5]{};
\node at (0.25-3.5,-0.75)[circle,draw=black,fill=black,scale=0.5]{};
\node at (0.75-3.5,-0.75)[circle,draw=black,fill=black,scale=0.5]{};
\node at (-0.25-3.5,-0.75)[circle,draw=black,fill=black,scale=0.5]{};
\node at (-0.75-3.5,-0.75)[circle,draw=black,fill=black,scale=0.5]{};
\node at (0.25-3.5,1.25)[circle,draw=black,fill=black,scale=0.5]{};
\node at (-0.25-3.5,1.25)[circle,draw=black,fill=black,scale=0.5]{};
\node at (0.25-3.5,-1.25)[circle,draw=black,fill=black,scale=0.5]{};
\node at (-0.25-3.5,-1.25)[circle,draw=black,fill=black,scale=0.5]{};

\node at (0.45-3.5,0.4){\footnotesize{$23$}};
\node at (0.95-3.5,0.4){\footnotesize{$24$}};
\node at (1.45-3.5,0.4){\footnotesize{$20$}};
\node at (-0.05-3.5,0.4){\footnotesize{$22$}};
\node at (-0.55-3.5,0.4){\footnotesize{$21$}};
\node at (-1.05-3.5,0.4){\footnotesize{$20$}};
\node at (0.45-3.5,-0.1){\footnotesize{$28$}};
\node at (0.95-3.5,-0.1){\footnotesize{$29$}};
\node at (1.45-3.5,-0.1){\footnotesize{$25$}};
\node at (-0.05-3.5,-0.1){\footnotesize{$27$}};
\node at (-0.55-3.5,-0.1){\footnotesize{$26$}};
\node at (-1.05-3.5,-0.1){\footnotesize{$25$}};
\node at (0.45-3.5,0.9){\footnotesize{$18$}};
\node at (0.95-3.5,0.9){\footnotesize{$19$}};
\node at (-0.05-3.5,0.9){\footnotesize{$17$}};
\node at (-0.55-3.5,0.9){\footnotesize{$16$}};
\node at (0.45-3.5,-0.6){\footnotesize{$3$}};
\node at (0.95-3.5,-0.6){\footnotesize{$4$}};
\node at (-0.05-3.5,-0.6){\footnotesize{$2$}};
\node at (-0.55-3.5,-0.6){\footnotesize{$1$}};
\node at (0.45-3.5,1.4){\footnotesize{$13$}};
\node at (-0.05-3.5,1.4){\footnotesize{$12$}};
\node at (0.45-3.5,-1.1){\footnotesize{$8$}};
\node at (-0.05-3.5,-1.1){\footnotesize{$7$}};

\draw (-0.5-7,1) -- (0.5-7,1);
\draw (-1-7,0.5) -- (1-7,0.5);
\draw (-1-7,0) -- (1-7,0);
\draw (-1-7,-0.5) -- (1-7,-0.5);
\draw (-0.5-7,-1) -- (0.5-7,-1);
\draw (1-7,-0.5) -- (1-7,0.5);
\draw (0.5-7,-1) -- (0.5-7,1);
\draw (0-7,-1) -- (0-7,1);
\draw (-0.5-7,-1) -- (-0.5-7,1);
\draw (-1-7,-0.5) -- (-1-7,0.5);

\node at (-0.5-7,1)[circle,draw=black,fill=black,scale=0.5]{};
\node at (0-7,1)[circle,draw=black,fill=black,scale=0.5]{};
\node at (0.5-7,1)[circle,draw=black,fill=black,scale=0.5]{};
\node at (-0.5-7,0.5)[circle,draw=black,fill=black,scale=0.5]{};
\node at (-1-7,0.5)[circle,draw=black,fill=black,scale=0.5]{};
\node at (0-7,0.5)[circle,draw=black,fill=black,scale=0.5]{};
\node at (0.5-7,0.5)[circle,draw=black,fill=black,scale=0.5]{};
\node at (1-7,0.5)[circle,draw=black,fill=black,scale=0.5]{};
\node at (-0.5+7,0)[circle,draw=black,fill=black,scale=0.5]{};
\node at (-1-7,0)[circle,draw=black,fill=black,scale=0.5]{};
\node at (0-7,0)[circle,draw=black,fill=black,scale=0.5]{};
\node at (0.5-7,0)[circle,draw=black,fill=black,scale=0.5]{};
\node at (1-7,0)[circle,draw=black,fill=black,scale=0.5]{};
\node at (-0.5-7,-0.5)[circle,draw=black,fill=black,scale=0.5]{};
\node at (-1-7,-0.5)[circle,draw=black,fill=black,scale=0.5]{};
\node at (0-7,-0.5)[circle,draw=black,fill=black,scale=0.5]{};
\node at (0.5-7,-0.5)[circle,draw=black,fill=black,scale=0.5]{};
\node at (1-7,-0.5)[circle,draw=black,fill=black,scale=0.5]{};
\node at (-0.5-7,-1)[circle,draw=black,fill=black,scale=0.5]{};
\node at (0-7,-1)[circle,draw=black,fill=black,scale=0.5]{};
\node at (0.5-7,-1)[circle,draw=black,fill=black,scale=0.5]{};

\node at (0.2-7,0.15){\footnotesize{$15$}};
\node at (0.7-7,0.15){\footnotesize{$16$}};
\node at (1.2-7,0.15){\footnotesize{$17$}};
\node at (-0.3-7,0.15){\footnotesize{$19$}};
\node at (-0.8-7,0.15){\footnotesize{$18$}};
\node at (0.2-7,0.65){\footnotesize{$10$}};
\node at (0.7-7,0.65){\footnotesize{$11$}};
\node at (1.2-7,0.65){\footnotesize{$12$}};
\node at (-0.3-7,0.65){\footnotesize{$14$}};
\node at (-0.8-7,0.65){\footnotesize{$13$}};
\node at (0.2-7,-0.35){\footnotesize{$20$}};
\node at (0.7-7,-0.35){\footnotesize{$21$}};
\node at (1.2-7,-0.35){\footnotesize{$22$}};
\node at (-0.3-7,-0.35){\footnotesize{$24$}};
\node at (-0.8-7,-0.35){\footnotesize{$23$}};
\node at (0.2-7,1.15){\footnotesize{$5$}};
\node at (0.7-7,1.15){\footnotesize{$6$}};
\node at (-0.3-7,1.15){\footnotesize{$9$}};
\node at (0.2-7,-0.85){\footnotesize{$25$}};
\node at (0.7-7,-0.85){\footnotesize{$26$}};
\node at (-0.3-7,-0.85){\footnotesize{$29$}};

\draw (-0.75-10.5,0.75) -- (0.75-10.5,0.75);
\draw (-0.75-10.5,0.25) -- (0.75-10.5,0.25);
\draw (-0.75-10.5,-0.25) -- (0.75-10.5,-0.25);
\draw (-0.75-10.5,-0.75) -- (0.75-10.5,-0.75);
\draw (-0.75-10.5,-0.75) -- (-0.75-10.5,0.75);
\draw (-0.25-10.5,-0.75) -- (-0.25-10.5,0.75);
\draw (0.25-10.5,-0.75) -- (0.25-10.5,0.75);
\draw (0.75-10.5,-0.75) -- (0.75-10.5,0.75);

\node at (-0.75-10.5,0.75)[circle,draw=black,fill=black,scale=0.5]{};
\node at (-0.75-10.5,0.25)[circle,draw=black,fill=black,scale=0.5]{};
\node at (-0.75-10.5,-0.25)[circle,draw=black,fill=black,scale=0.5]{};
\node at (-0.75-10.5,-0.75)[circle,draw=black,fill=black,scale=0.5]{};
\node at (-0.25-10.5,0.75)[circle,draw=black,fill=black,scale=0.5]{};
\node at (-0.25-10.5,0.25)[circle,draw=black,fill=black,scale=0.5]{};
\node at (-0.25-10.5,-0.25)[circle,draw=black,fill=black,scale=0.5]{};
\node at (-0.25-10.5,-0.75)[circle,draw=black,fill=black,scale=0.5]{};
\node at (0.75-10.5,0.75)[circle,draw=black,fill=black,scale=0.5]{};
\node at (0.75-10.5,0.25)[circle,draw=black,fill=black,scale=0.5]{};
\node at (0.75-10.5,-0.25)[circle,draw=black,fill=black,scale=0.5]{};
\node at (0.75-10.5,-0.75)[circle,draw=black,fill=black,scale=0.5]{};
\node at (0.25-10.5,0.75)[circle,draw=black,fill=black,scale=0.5]{};
\node at (0.25-10.5,0.25)[circle,draw=black,fill=black,scale=0.5]{};
\node at (0.25-10.5,-0.25)[circle,draw=black,fill=black,scale=0.5]{};
\node at (0.25-10.5,-0.75)[circle,draw=black,fill=black,scale=0.5]{};

\node at (0.45-10.5,0.4){\footnotesize{$8$}};
\node at (0.95-10.5,0.4){\footnotesize{$9$}};
\node at (-0.05-10.5,0.4){\footnotesize{$7$}};
\node at (-0.55-10.5,0.4){\footnotesize{$6$}};
\node at (0.45-10.5,0.9){\footnotesize{$3$}};
\node at (0.95-10.5,0.9){\footnotesize{$4$}};
\node at (-0.05-10.5,0.9){\footnotesize{$2$}};
\node at (-0.55-10.5,0.9){\footnotesize{$1$}};
\node at (0.45-10.5,-0.1){\footnotesize{$13$}};
\node at (0.95-10.5,-0.1){\footnotesize{$14$}};
\node at (-0.05-10.5,-0.1){\footnotesize{$12$}};
\node at (-0.55-10.5,-0.1){\footnotesize{$11$}};
\node at (0.45-10.5,-0.6){\footnotesize{$18$}};
\node at (0.95-10.5,-0.6){\footnotesize{$19$}};
\node at (-0.05-10.5,-0.6){\footnotesize{$17$}};
\node at (-0.55-10.5,-0.6){\footnotesize{$16$}};
\end{tikzpicture}
\end{center}
\begin{center}
\begin{tikzpicture}[scale=1.15]
\draw (-1.25+3.5,0.25) -- (1.25+3.5,0.25);
\draw (-0.75+3.5,0.75) -- (0.75+3.5,0.75);
\draw (-0.25+3.5,1.25) -- (0.25+3.5,1.25);
\draw (-1.25+3.5,-0.25) -- (1.25+3.5,-0.25);
\draw (-0.75+3.5,-0.75) -- (0.75+3.5,-0.75);
\draw (-0.25+3.5,-1.25) -- (0.25+3.5,-1.25);

\draw (0.25+3.5,-1.25) -- (0.25+3.5,1.25);
\draw (0.75+3.5,-0.75) -- (0.75+3.5,0.75);
\draw (1.25+3.5,-0.25) -- (1.25+3.5,0.25);
\draw (-0.25+3.5,-1.25) -- (-0.25+3.5,1.25);
\draw (-0.75+3.5,-0.75) -- (-0.75+3.5,0.75);
\draw (-1.25+3.5,-0.25) -- (-1.25+3.5,0.25);

\node at (0.25+3.5,0.25)[circle,draw=black,fill=black,scale=0.5]{};
\node at (0.75+3.5,0.25)[circle,draw=black,fill=black,scale=0.5]{};
\node at (1.25+3.5,0.25)[circle,draw=black,fill=black,scale=0.5]{};
\node at (-0.25+3.5,0.25)[circle,draw=black,fill=black,scale=0.5]{};
\node at (-0.75+3.5,0.25)[circle,draw=black,fill=black,scale=0.5]{};
\node at (-1.25+3.5,0.25)[circle,draw=black,fill=black,scale=0.5]{};
\node at (0.25+3.5,-0.25)[circle,draw=black,fill=black,scale=0.5]{};
\node at (0.75+3.5,-0.25)[circle,draw=black,fill=black,scale=0.5]{};
\node at (1.25+3.5,-0.25)[circle,draw=black,fill=black,scale=0.5]{};
\node at (-0.25+3.5,-0.25)[circle,draw=black,fill=black,scale=0.5]{};
\node at (-0.75+3.5,-0.25)[circle,draw=black,fill=black,scale=0.5]{};
\node at (-1.25+3.5,-0.25)[circle,draw=black,fill=black,scale=0.5]{};
\node at (0.25+3.5,0.75)[circle,draw=black,fill=black,scale=0.5]{};
\node at (0.75+3.5,0.75)[circle,draw=black,fill=black,scale=0.5]{};
\node at (-0.25+3.5,0.75)[circle,draw=black,fill=black,scale=0.5]{};
\node at (-0.75+3.5,0.75)[circle,draw=black,fill=black,scale=0.5]{};
\node at (0.25+3.5,-0.75)[circle,draw=black,fill=black,scale=0.5]{};
\node at (0.75+3.5,-0.75)[circle,draw=black,fill=black,scale=0.5]{};
\node at (-0.25+3.5,-0.75)[circle,draw=black,fill=black,scale=0.5]{};
\node at (-0.75+3.5,-0.75)[circle,draw=black,fill=black,scale=0.5]{};
\node at (0.25+3.5,1.25)[circle,draw=black,fill=black,scale=0.5]{};
\node at (-0.25+3.5,1.25)[circle,draw=black,fill=black,scale=0.5]{};
\node at (0.25+3.5,-1.25)[circle,draw=black,fill=black,scale=0.5]{};
\node at (-0.25+3.5,-1.25)[circle,draw=black,fill=black,scale=0.5]{};

\node at (0.45+3.5,0.4){\footnotesize{$8$}};
\node at (0.95+3.5,0.4){\footnotesize{$9$}};
\node at (1.45+3.5,0.4){\footnotesize{$5$}};
\node at (-0.05+3.5,0.4){\footnotesize{$7$}};
\node at (-0.55+3.5,0.4){\footnotesize{$6$}};
\node at (-1.05+3.5,0.4){\footnotesize{$5$}};
\node at (0.45+3.5,-0.1){\footnotesize{$13$}};
\node at (0.95+3.5,-0.1){\footnotesize{$14$}};
\node at (1.45+3.5,-0.1){\footnotesize{$10$}};
\node at (-0.05+3.5,-0.1){\footnotesize{$12$}};
\node at (-0.55+3.5,-0.1){\footnotesize{$11$}};
\node at (-1.05+3.5,-0.1){\footnotesize{$10$}};
\node at (0.45+3.5,0.9){\footnotesize{$3$}};
\node at (0.95+3.5,0.9){\footnotesize{$4$}};
\node at (-0.05+3.5,0.9){\footnotesize{$2$}};
\node at (-0.55+3.5,0.9){\footnotesize{$1$}};
\node at (0.45+3.5,-0.6){\footnotesize{$18$}};
\node at (0.95+3.5,-0.6){\footnotesize{$19$}};
\node at (-0.05+3.5,-0.6){\footnotesize{$17$}};
\node at (-0.55+3.5,-0.6){\footnotesize{$16$}};
\node at (0.45+3.5,1.4){\footnotesize{$28$}};
\node at (-0.05+3.5,1.4){\footnotesize{$27$}};
\node at (0.45+3.5,-1.1){\footnotesize{$23$}};
\node at (-0.05+3.5,-1.1){\footnotesize{$22$}};

\draw (-0.5+7,1) -- (0.5+7,1);
\draw (-1+7,0.5) -- (1+7,0.5);
\draw (-1+7,0) -- (1+7,0);
\draw (-1+7,-0.5) -- (1+7,-0.5);
\draw (-0.5+7,-1) -- (0.5+7,-1);
\draw (1+7,-0.5) -- (1+7,0.5);
\draw (0.5+7,-1) -- (0.5+7,1);
\draw (0+7,-1) -- (0+7,1);
\draw (-0.5+7,-1) -- (-0.5+7,1);
\draw (-1+7,-0.5) -- (-1+7,0.5);

\node at (-0.5+7,1)[circle,draw=black,fill=black,scale=0.5]{};
\node at (0+7,1)[circle,draw=black,fill=black,scale=0.5]{};
\node at (0.5+7,1)[circle,draw=black,fill=black,scale=0.5]{};
\node at (-0.5+7,0.5)[circle,draw=black,fill=black,scale=0.5]{};
\node at (-1+7,0.5)[circle,draw=black,fill=black,scale=0.5]{};
\node at (0+7,0.5)[circle,draw=black,fill=black,scale=0.5]{};
\node at (0.5+7,0.5)[circle,draw=black,fill=black,scale=0.5]{};
\node at (1+7,0.5)[circle,draw=black,fill=black,scale=0.5]{};
\node at (-0.5+7,0)[circle,draw=black,fill=black,scale=0.5]{};
\node at (-1+7,0)[circle,draw=black,fill=black,scale=0.5]{};
\node at (0+7,0)[circle,draw=black,fill=black,scale=0.5]{};
\node at (0.5+7,0)[circle,draw=black,fill=black,scale=0.5]{};
\node at (1+7,0)[circle,draw=black,fill=black,scale=0.5]{};
\node at (-0.5+7,-0.5)[circle,draw=black,fill=black,scale=0.5]{};
\node at (-1+7,-0.5)[circle,draw=black,fill=black,scale=0.5]{};
\node at (0+7,-0.5)[circle,draw=black,fill=black,scale=0.5]{};
\node at (0.5+7,-0.5)[circle,draw=black,fill=black,scale=0.5]{};
\node at (1+7,-0.5)[circle,draw=black,fill=black,scale=0.5]{};
\node at (-0.5+7,-1)[circle,draw=black,fill=black,scale=0.5]{};
\node at (0+7,-1)[circle,draw=black,fill=black,scale=0.5]{};
\node at (0.5+7,-1)[circle,draw=black,fill=black,scale=0.5]{};

\node at (0.2+7,0.15){\footnotesize{$15$}};
\node at (0.7+7,0.15){\footnotesize{$16$}};
\node at (1.2+7,0.15){\footnotesize{$17$}};
\node at (-0.3+7,0.15){\footnotesize{$19$}};
\node at (-0.8+7,0.15){\footnotesize{$18$}};
\node at (0.2+7,0.65){\footnotesize{$10$}};
\node at (0.7+7,0.65){\footnotesize{$11$}};
\node at (1.2+7,0.65){\footnotesize{$12$}};
\node at (-0.3+7,0.65){\footnotesize{$14$}};
\node at (-0.8+7,0.65){\footnotesize{$13$}};
\node at (0.2+7,-0.35){\footnotesize{$20$}};
\node at (0.7+7,-0.35){\footnotesize{$21$}};
\node at (1.2+7,-0.35){\footnotesize{$22$}};
\node at (-0.3+7,-0.35){\footnotesize{$24$}};
\node at (-0.8+7,-0.35){\footnotesize{$23$}};
\node at (0.2+7,1.15){\footnotesize{$5$}};
\node at (0.7+7,1.15){\footnotesize{$6$}};
\node at (-0.3+7,1.15){\footnotesize{$9$}};
\node at (0.2+7,-0.85){\footnotesize{$25$}};
\node at (0.7+7,-0.85){\footnotesize{$26$}};
\node at (-0.3+7,-0.85){\footnotesize{$29$}};

\draw (-0.75+10.5,0.75) -- (0.75+10.5,0.75);
\draw (-0.75+10.5,0.25) -- (0.75+10.5,0.25);
\draw (-0.75+10.5,-0.25) -- (0.75+10.5,-0.25);
\draw (-0.75+10.5,-0.75) -- (0.75+10.5,-0.75);
\draw (-0.75+10.5,-0.75) -- (-0.75+10.5,0.75);
\draw (-0.25+10.5,-0.75) -- (-0.25+10.5,0.75);
\draw (0.25+10.5,-0.75) -- (0.25+10.5,0.75);
\draw (0.75+10.5,-0.75) -- (0.75+10.5,0.75);

\node at (-0.75+10.5,0.75)[circle,draw=black,fill=black,scale=0.5]{};
\node at (-0.75+10.5,0.25)[circle,draw=black,fill=black,scale=0.5]{};
\node at (-0.75+10.5,-0.25)[circle,draw=black,fill=black,scale=0.5]{};
\node at (-0.75+10.5,-0.75)[circle,draw=black,fill=black,scale=0.5]{};
\node at (-0.25+10.5,0.75)[circle,draw=black,fill=black,scale=0.5]{};
\node at (-0.25+10.5,0.25)[circle,draw=black,fill=black,scale=0.5]{};
\node at (-0.25+10.5,-0.25)[circle,draw=black,fill=black,scale=0.5]{};
\node at (-0.25+10.5,-0.75)[circle,draw=black,fill=black,scale=0.5]{};
\node at (0.75+10.5,0.75)[circle,draw=black,fill=black,scale=0.5]{};
\node at (0.75+10.5,0.25)[circle,draw=black,fill=black,scale=0.5]{};
\node at (0.75+10.5,-0.25)[circle,draw=black,fill=black,scale=0.5]{};
\node at (0.75+10.5,-0.75)[circle,draw=black,fill=black,scale=0.5]{};
\node at (0.25+10.5,0.75)[circle,draw=black,fill=black,scale=0.5]{};
\node at (0.25+10.5,0.25)[circle,draw=black,fill=black,scale=0.5]{};
\node at (0.25+10.5,-0.25)[circle,draw=black,fill=black,scale=0.5]{};
\node at (0.25+10.5,-0.75)[circle,draw=black,fill=black,scale=0.5]{};

\node at (0.45+10.5,0.4){\footnotesize{$23$}};
\node at (0.95+10.5,0.4){\footnotesize{$24$}};
\node at (-0.05+10.5,0.4){\footnotesize{$22$}};
\node at (-0.55+10.5,0.4){\footnotesize{$21$}};
\node at (0.45+10.5,0.9){\footnotesize{$18$}};
\node at (0.95+10.5,0.9){\footnotesize{$19$}};
\node at (-0.05+10.5,0.9){\footnotesize{$17$}};
\node at (-0.55+10.5,0.9){\footnotesize{$16$}};
\node at (0.45+10.5,-0.1){\footnotesize{$28$}};
\node at (0.95+10.5,-0.1){\footnotesize{$29$}};
\node at (-0.05+10.5,-0.1){\footnotesize{$27$}};
\node at (-0.55+10.5,-0.1){\footnotesize{$26$}};
\node at (0.45+10.5,-0.6){\footnotesize{$3$}};
\node at (0.95+10.5,-0.6){\footnotesize{$4$}};
\node at (-0.05+10.5,-0.6){\footnotesize{$2$}};
\node at (-0.55+10.5,-0.6){\footnotesize{$1$}};

\node at (3.5,-1.75){$\mathcal{F}_{k+1}$};
\node at (7,-1.75){$\mathcal{F}_{k+2}$};
\node at (10.5,-1.75){$\mathcal{F}_{k+3}$};
\end{tikzpicture}
\end{center}
\caption{The colors of the vertices at distance at most 3 of a fixed vertex $u$ of color $0$ in the layer $k$ of the graph $\mathcal{F}$ (square: $u$; circle: vertex at distance $d$ of $u$, $1\le d\le 3$).}
\label{colorpower3}
\end{figure}
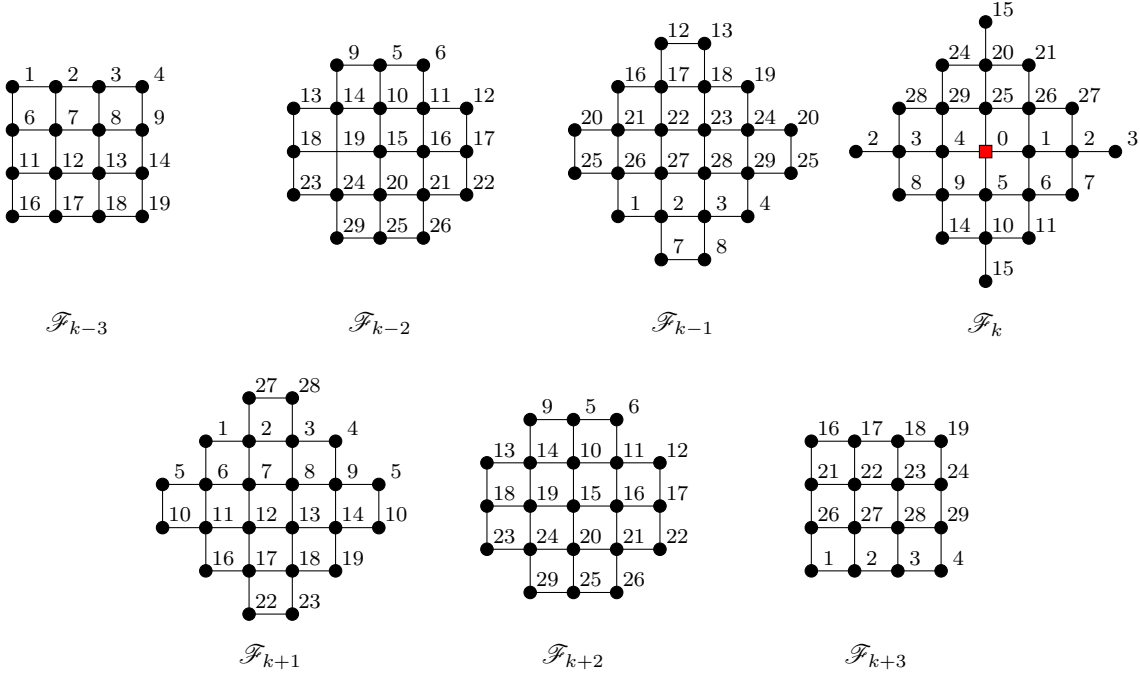
\end{proof}

For vertices $u_1$, $u_2$, $u_3$ and $u_4$ of $\mathcal{F}$, we denote by $T(u_1,u_2,u_3,u_4)$, the vertex set $\{ u\in\mathcal{F}|\ uv\in E(\mathcal{F}),\ v\in\{u_1,u_2,u_3,u_4\} \}$. Note that in the case $\{u_1,u_2,u_3,u_4\}$ induces a complete graph in $\mathcal{F}$, $T(u_1,u_2,u_3,u_4)$ induces in $\mathcal{F}$ a graph isomorphic to the graph $B_1$ from Lemma \ref{le2}. Consequently, in this case, $\text{diam}(T(u_1,u_2,u_3,u_4))=3$.

\begin{lemma}\label{balancetonlemme}
If there exists a 28-coloring of $\mathcal{F}^3$, then the following is true about every vertex $(i,j,k)$:
\begin{enumerate}
\item[(i)] one vertex among $(i+3, j, k+2)$, $(i+2.5,j+0.5,k+3)$ and $(i+2.5,j-0.5,k+3)$ has the same color than $(i,j,k)$;
\item[(ii)] one vertex among $(i-3, j, k+2)$, $(i-2.5,j+0.5,k+3)$ and $(i-2.5,j-0.5,k+3)$ has the same color than $(i,j,k)$;
\item[(iii)] one vertex among $(i, j-3, k+2)$, $(i+0.5,j-2.5,k+3)$ and $(i-0.5,j-2.5,k+3)$ has the same color than $(i,j,k)$;
\item[(iv)] one vertex among $(i, j-3, k-2)$, $(i+0.5,j-2.5,k-3)$ and $(i-0.5,j-2.5,k-3)$ has the same color than $(i,j,k)$.

\end{enumerate}
\end{lemma}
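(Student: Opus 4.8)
The plan is to exploit that $\mathcal{F}^3$ contains cliques of size $28$ and that a hypothetical $28$-coloring must colour each such clique with all $28$ colours. Concretely, if $\{u_1,u_2,u_3,u_4\}$ induces a $K_4$ in $\mathcal{F}$, then by the remark preceding the statement $T(u_1,u_2,u_3,u_4)$ induces a clique of $\mathcal{F}^3$ isomorphic to $\mathcal{F}[B_1]$, hence of order $28$, so in a $28$-colouring it receives each colour exactly once. This yields the following elementary principle, which I would isolate first: for any such $28$-clique $C$ and any vertex $w\notin C$, writing $X_w=\{u\in C:\ d_{\mathcal{F}}(u,w)\ge 4\}$, the colour $c(w)$ must belong to $\{c(u):\ u\in X_w\}$. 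Indeed, every $u\in C\setminus X_w$ satisfies $d_{\mathcal{F}}(u,w)\le 3$, so $uw\in E(\mathcal{F}^3)$ and $c(u)\neq c(w)$; since the vertices of $C\setminus X_w$ already carry $|C\setminus X_w|$ distinct colours, $c(w)$ is forced to be one of the remaining $|X_w|$ colours, namely a colour appearing on $X_w$.

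With this in hand, each of (i)--(iv) reduces to producing a single well-chosen $28$-clique. For (i), after translating so that $(i,j,k)=(0,0,0)$, I would take $C=T((1.5,-0.5,1),(1.5,0.5,1),(1,0,2),(2,0,2))$; one checks directly that these four vertices induce a $K_4$, so $C$ is a $28$-clique, and that $(0,0,0)\notin C$. The three candidate vertices $(3,0,2)$, $(2.5,0.5,3)$, $(2.5,-0.5,3)$ all lie in $C$. The crux is then to verify, using the distance formula of Proposition~\ref{prodist}, that these three are exactly the vertices of $C$ lying at distance at least $4$ from $(0,0,0)$, i.e. that the remaining $25$ vertices of $C$ all lie within distance $3$ of $(0,0,0)$. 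Granting this, $X_{(0,0,0)}$ is precisely the set of three candidates, and the principle above gives $c((0,0,0))\in\{c((3,0,2)),c((2.5,0.5,3)),c((2.5,-0.5,3))\}$, which is (i) for the translated vertex; translating back by $(i,j,k)$, an automorphism of $\mathcal{F}$ in both parities of $k$, yields (i) in general.

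Finally, I would obtain (ii), (iii) and (iv) from (i) by applying symmetries of $\mathcal{F}$: the reflection $(i,j,k)\mapsto(-i,j,k)$ sends the $+i$ direction of (i) to the $-i$ direction of (ii); the quarter-turn $(i,j,k)\mapsto(j,-i,k)$ sends it to the $-j$ direction of (iii); and composing the latter with $(i,j,k)\mapsto(i,j,-k)$ gives (iv). Each of these maps is a graph automorphism (it preserves layer parity and all three edge types) and carries the clique of (i) to a clique realising the corresponding case. The only real work, and the expected obstacle, is the distance bookkeeping for the clique in (i): one must confirm that translating this ball-shaped clique so that it reaches the candidate triangle leaves precisely three of its vertices beyond distance $3$ from the apex $(0,0,0)$, which is a finite but careful check across the four layers $k=0,1,2,3$ via Proposition~\ref{prodist}.
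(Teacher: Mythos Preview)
Your proposal is correct and essentially identical to the paper's proof: the paper also takes $B=T((i+1.5,j-0.5,k+1),(i+1.5,j+0.5,k+1),(i+1,j,k+2),(i+2,j,k+2))$, observes that $|B|=28$ and that $(i,j,k)$ lies at distance at most~$3$ from all but the three listed vertices, and then invokes symmetry for (ii)--(iv). Your version is slightly more explicit about the pigeonhole principle and about which automorphisms realise the symmetries, but the argument is the same.
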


\begin{figure}[t]
\begin{center}
\begin{tikzpicture}[scale=0.9]

\draw (0-1,2.5/3-0.5) -- (2,2.5/3-0.5);
\draw (0-0.5,5/3-0.5) -- (3-0.5,5/3-0.5);
\draw (0.5-0.5,2) -- (3,2);
\draw (-1,2.5/3-0.5) -- (0,2);
\draw (0,2.5/3-0.5) -- (1,2);
\draw (1,2.5/3-0.5) -- (2,2);
\draw (2,2.5/3-0.5) -- (3,2);

\draw (0+3.5,0) -- (2+3.5,0);
\draw (0.5+3.5,2.5/3) -- (2.5+3.5,2.5/3);
\draw (1+3.5,5/3) -- (3+3.5,5/3);
\draw (1.5+3.5,2.5) -- (3.5+3.5,2.5);
\draw (0+3.5,0) -- (1.5+3.5,2.5);
\draw (1+3.5,0) -- (2.5+3.5,2.5);
\draw (2+3.5,0) -- (3.5+3.5,2.5);

\draw (0-1+8.5,2.5/3-0.5) -- (2+8.5,2.5/3-0.5);
\draw (0-0.5+8.5,5/3-0.5) -- (3-0.5+8.5,5/3-0.5);
\draw (0.5-0.5+8.5,2) -- (3+8.5,2);
\draw (-1+8.5,2.5/3-0.5) -- (0+8.5,2);
\draw (0+8.5,2.5/3-0.5) -- (1+8.5,2);
\draw (1+8.5,2.5/3-0.5) -- (2+8.5,2);
\draw (2+8.5,2.5/3-0.5) -- (3+8.5,2);

\draw (0+12,0) -- (2+12,0);
\draw (0.5+12,2.5/3) -- (2.5+12,2.5/3);
\draw (1+12,5/3) -- (3+12,5/3);
\draw (1.5+12,2.5) -- (3.5+12,2.5);
\draw (0+12,0) -- (1.5+12,2.5);
\draw (1+12,0) -- (2.5+12,2.5);
\draw (2+12,0) -- (3.5+12,2.5);

\draw[ultra thick, color=red] (0,2.5/3-0.5) -- (1,2);
\draw[ultra thick, color=red] (1,2.5/3-0.5) -- (2,2);
\draw[ultra thick, color=red] (1-1,2.5/3-0.5) -- (1,2.5/3-0.5);
\draw[ultra thick, color=red] (1-0.5,5/3-0.5) -- (2-0.5,5/3-0.5);
\draw[ultra thick, color=red] (1.5-0.5,2) -- (2,2);

\draw[ultra thick, color=red]  (0.5+3.5,2.5/3) -- (1+3.5,5/3);
\draw[ultra thick, color=red]  (1+3.5,0) -- (2.5+3.5,2.5);
\draw[ultra thick, color=red]  (2.5+3.5,2.5/3) -- (3+3.5,5/3);
\draw[ultra thick, color=red]  (1.5/3+3.5,2.5/3) -- (2+1.5/3+3.5,2.5/3);
\draw[ultra thick, color=red]  (1+3.5,5/3) -- (3+3.5,5/3);

\draw[ultra thick, color=red] (0+8.5,2.5/3-0.5) -- (1+8.5,2);
\draw[ultra thick, color=red] (1+8.5,2.5/3-0.5) -- (2+8.5,2);
\draw[ultra thick, color=red] (1-1+8.5,2.5/3-0.5) -- (1+8.5,2.5/3-0.5);
\draw[ultra thick, color=red] (-0.5+8.5,5/3-0.5) -- (3-0.5+8.5,5/3-0.5);
\draw[ultra thick, color=red] (1.5-0.5+8.5,2) -- (2+8.5,2);

\draw[ultra thick, color=red]  (0.5+12,2.5/3) -- (1+12,5/3);
\draw[ultra thick, color=red]  (1.5+12,2.5/3) -- (2+12,5/3);
\draw[ultra thick, color=red]  (2.5+12,2.5/3) -- (3+12,5/3);
\draw[ultra thick, color=red]  (1.5/3+12,2.5/3) -- (2+1.5/3+12,2.5/3);
\draw[ultra thick, color=red]  (1+12,5/3) -- (3+12,5/3);

\node at (0,2.5/3-0.5)[circle,draw=black,fill=black,scale=0.5]{};
\node at (1,2.5/3-0.5)[circle,draw=black,fill=black,scale=0.5]{};
\node at (-0.5,5/3-0.5)[regular polygon, regular polygon sides=4,draw=black,fill=red,scale=0.5]{};
\node at (0.5,5/3-0.5)[circle,draw=black,fill=black,scale=0.5]{};
\node at (1.5,5/3-0.5)[circle,draw=black,fill=black,scale=0.5]{};
\node at (1,2)[circle,draw=black,fill=black,scale=0.5]{};
\node at (2,2)[circle,draw=black,fill=black,scale=0.5]{};

\node at (2.5+3.5,2.5)[circle,draw=black,fill=black,scale=0.5]{};
\node at (3+3.5,5/3)[circle,draw=black,fill=black,scale=0.5]{};
\node at (2.5+3.5,2.5/3)[circle,draw=black,fill=black,scale=0.5]{};
\node at (3+3.5-1,5/3)[circle,draw=black,fill=black,scale=0.5]{};
\node at (2.5+3.5-1,2.5/3)[circle,draw=black,fill=black,scale=0.5]{};
\node at (3+3.5-2,5/3)[circle,draw=black,fill=black,scale=0.5]{};
\node at (4.5+3.5-4,2.5/3)[circle,draw=black,fill=black,scale=0.5]{};
\node at (1+3.5,0)[circle,draw=black,fill=black,scale=0.5]{};
\node at (0+8.5,2.5/3-0.5)[circle,draw=black,fill=black,scale=0.5]{};
\node at (1+8.5,2.5/3-0.5)[circle,draw=black,fill=black,scale=0.5]{};
\node at (-0.5+8.5,5/3-0.5)[circle,draw=black,fill=black,scale=0.5]{};
\node at (0.5+8.5,5/3-0.5)[circle,draw=black,fill=black,scale=0.5]{};
\node at (1.5+8.5,5/3-0.5)[circle,draw=black,fill=black,scale=0.5]{};
\node at (2.5+8.5,5/3-0.5)[regular polygon, regular polygon sides=3,draw=black,fill=green,scale=0.35]{};
\node at (1+8.5,2)[circle,draw=black,fill=black,scale=0.5]{};
\node at (2+8.5,2)[circle,draw=black,fill=black,scale=0.5]{};

\node at (3+12,5/3)[regular polygon, regular polygon sides=3,draw=black,fill=green,scale=0.35]{};
\node at (1.5+12,2.5/3)[circle,draw=black,fill=black,scale=0.5]{};
\node at (3+12-1,5/3)[circle,draw=black,fill=black,scale=0.5]{};
\node at (3.5+12-1,2.5/3)[regular polygon, regular polygon sides=3,draw=black,fill=green,scale=0.35]{};
\node at (3+12-2,5/3)[circle,draw=black,fill=black,scale=0.5]{};
\node at (4.5+12-4,2.5/3)[circle,draw=black,fill=black,scale=0.5]{};

\node at (0.5,-0.5){$\mathcal{F}_{i}$};
\node at (4.5,-0.5){$\mathcal{F}_{i+1}$};
\node at (9,-0.5){$\mathcal{F}_{i+2}$};
\node at (4.5+8.5,-0.5){$\mathcal{F}_{i+3}$};

\end{tikzpicture}
\end{center}
\caption{The vertex $(i,j,k)$ (the square), the vertices in $B$ at distance at most $3$ from $(i,j,k)$ (circles) and the vertices at distance at least $4$ from $(i,j,k)$ in $B$ (triangles; thick lines: edges between vertices in $B$).}
\label{petitfrereveutgrandirtropvite}
\end{figure}
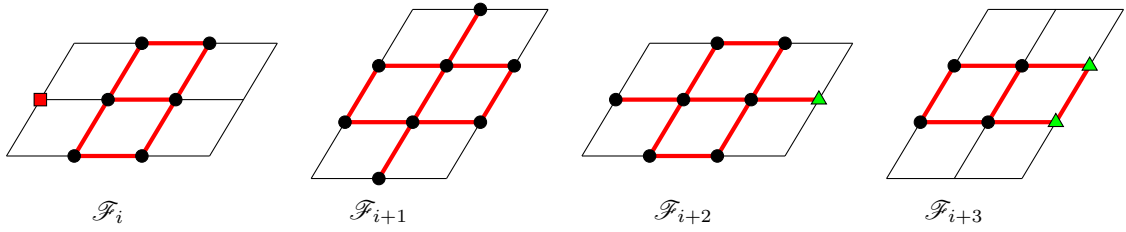

\begin{proof}
(i) Let $B=T((i+1.5,j-0.5,k+1),$ $(i+1.5,j+0.5,k+1),$ $(i+1,j,k+2),$ $(i+2,j,k+2))$.
Let $B'=B\cup\{(i,j,k)\}$.  Since $B$ induces in $\mathcal{F}$ a graph isomorphic to the graph $B_1$ from Lemma \ref{le2}, we have $|B|=28$. Note that $(i,j,k)$ is at distance at most $3$ from every vertex of $B$ except $(i+3, j, k+2)$, $(i+2.5,j+0.5,k+3)$ and $(i+2.5,j-0.5,k+3)$. 
Figure \ref{petitfrereveutgrandirtropvite} illustrates the set $B$, the vertex $(i,j,k)$ and the vertices of $B$ at distance at most $3$ or not from $(i,j,k)$.
Since $|B'|=29$, in every possible 28-coloring of $\mathcal{F}^3$, the vertex $(i,j,k)$ has the same color than a vertex among $(i+3, j, k+2)$, $(i+2.5,j+0.5,k+3)$ and $(i+2.5,j-0.5,k+3)$.\newline
(ii), (iii) and (iv) By symmetry, the proof is totally analogous to the proof of Case (i) (the set $B$ should be chosen differently).\newline
\end{proof}
Note that the lower bound of Proposition \ref{oddbound} (the value 28) corresponds to the size of the largest clique in $\mathcal{F}^3$. However, the next result shows that the value of $\chi(\mathcal{F}^{3})$ is larger than 28.
\begin{prop}
$ \chi(\mathcal{F}^{3})\ge 29$.
\end{prop}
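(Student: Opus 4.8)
The plan is to assume, for contradiction, that a proper $28$-coloring $c$ of $\mathcal{F}^3$ exists, and to use Lemma~\ref{balancetonlemme} to manufacture two vertices at distance at most $3$ that receive the same color. The first step is to sharpen each part of Lemma~\ref{balancetonlemme} from ``at least one'' to ``exactly one''. Within any one of the four directions the three candidate vertices are pairwise at distance $1$: for instance in case (i) the vertices $(i+3,j,k+2)$ and $(i+2.5,j+0.5,k+3)$ differ by $(\tfrac12,-\tfrac12,-1)$, which Proposition~\ref{prodist} evaluates to $1$, and likewise for the other two pairs. Hence each such triple is a clique of $\mathcal{F}^3$, so at most one of its vertices can carry the color $c((i,j,k))$; together with Lemma~\ref{balancetonlemme} this shows exactly one does. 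This assigns to every vertex $v$ and every direction a unique same-color successor $\sigma(v)$, and the same short-displacement computation shows that $\sigma$ is injective on each color class, since two vertices sharing a successor would differ by one of the vectors above and hence lie at distance at most $1$.

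The second step is to exploit the vertical symmetry. Since Lemma~\ref{balancetonlemme} applies to \emph{every} $28$-coloring and the maps $i\mapsto -i$, $j\mapsto -j$, $k\mapsto -k$ and the swap of $i$ and $j$ are automorphisms of $\mathcal{F}$, the coloring $c$ also satisfies the reflected successor rules; in particular there is a ``north--up'' rule with displacement set $\{(0,3,2),(0.5,2.5,3),(-0.5,2.5,3)\}$, obtained from part (iv) by the reflection $k\mapsto -k$ followed by $j\mapsto -j$. I then compose the (iv)-successor with the north--up successor: starting from $v=(i,j,k)$ one first passes to its (iv)-successor $w$ (south of and below $v$) and then to the north--up successor of $w$. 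A direct check with Proposition~\ref{prodist} shows that, for every choice of the two branches, the resulting vertex lies at distance at most $1$ from $v$, and that for each admissible $w$ exactly one of its three north--up candidates equals $v$ while the other two are at distance $1$ from $v$. Any branch other than the one returning to $v$ would thus produce a vertex $v'\neq v$ with $d_{\mathcal{F}}(v,v')\le 1$ and $c(v')=c(v)$, contradicting properness; hence the north--up successor of $w$ is forced to be $v$. Running the same argument in the reverse order shows that the (iv)-successor and the north--up successor are mutually inverse bijections of each color class, and, by symmetry, so are the pairs east--up/west--down, west--up/east--down and south--up/north--down.

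The final step, which I expect to be the main obstacle, is to derive a contradiction from these forced alignments. Each color class now carries four rigidly paired bijections whose displacements are confined to the small sets above; iterating them produces, from a single vertex, an orbit of same-colored vertices whose positions across consecutive layers are almost completely determined. The aim is to push these constraints until two orbit vertices are forced into a common layer at distance at most $3$, or until the pattern induced on one layer must be a translate of a sublattice of the square grid too dense to admit a distance-$3$ coloring; either outcome contradicts properness of $c$ and yields $\chi(\mathcal{F}^3)\ge 29$. Carrying this out is delicate precisely because of the three-fold branching in each direction: the argument must verify, through a finite case analysis governed throughout by Proposition~\ref{prodist}, that no consistent assignment of the branches can avoid such a monochromatic conflict.
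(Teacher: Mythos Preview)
Your first two steps are correct and the distance computations check out: the three candidates in each part of Lemma~\ref{balancetonlemme} are pairwise adjacent in $\mathcal{F}$, so ``at least one'' becomes ``exactly one'', and the composition of the (iv)-successor with your north--up successor does indeed force a return to the starting vertex. These are clean observations.

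The problem is that your third step is not a proof but a programme. You explicitly flag it as ``the main obstacle'' and then describe only the \emph{shape} of an argument: iterate the eight bijections until two orbit points collide at distance $\le 3$, or until the induced pattern on some layer is too dense. You do not carry out any such iteration, and it is far from clear that one succeeds. A quick check shows that two-step compositions of your successors (e.g.\ east--up followed by west--up, or east--up followed by south--down) always land at distance $\ge 4$ from the start, so no short composition yields a contradiction; any argument along these lines would require a genuinely nontrivial case analysis across several iterations, with three-fold branching at each step, and you have not indicated why such an analysis terminates favourably. As it stands, the proposal establishes structural constraints on a hypothetical $28$-coloring but never shows those constraints are inconsistent.

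By contrast, the paper's proof avoids any global orbit analysis. It fixes a single vertex $u=(1,2,0)$ and two explicit $28$-cliques $B$, $B'$ (translates of $B_1$) that overlap near $u$. Four targeted applications of Lemma~\ref{balancetonlemme} rule out the handful of vertices of $B\cup B'$ that are not already within distance $3$ of $u$, forcing $c(u)=c(v_1)=c(v_2)$ for two specific vertices $v_1,v_2$; one more application of the lemma to $v_1$ then produces a same-colored vertex at distance $\le 3$ from $v_2$. This is a local, finite argument with no open-ended case analysis, and it is what your third step is missing.
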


\begin{proof}
Let $B=T((0,0,0),$ $(0,1,0),$ $(-0.5,0.5,1),$ $(0.5,0.5,1))$ and let $B'=T((2,0,0),$ $(2,1,0),$ $(1.5,0.5,1),$ $(2.5,0.5,1))$. Note that $|B|=28$ and $|B'|=28$ and as in Lemma~\ref{balancetonlemme} both $B$ and $B'$ induce in $\mathcal{F}$ a graph isomorphic to the graph $B_1$ from Lemma \ref{le2}. 
Suppose that $\mathcal{F}^3$ is 28-colorable. By hypothesis, both $B$ and $B'$ should contain a vertex of every color.
Let $u=(1,2,0)$, $v_1=(-1,0,0)$ and $v_2=(3,0,0)$.
Moreover, let $X_1= \{(-0.5,-0.5,-1),$ $(2.5,-0.5,-1),$ $(0,-1,0)$, $(2,-1,0) \}$, $X_2=(-0.5,-0.5,1))$, $(0.5,-0.5,1\}$, $X_3=\{(-1.5,0.5,1)),$ $(-1,0,2)\}$ and $X_4=\{(3.5,0.5,1),(3,0,2)\}$.
Figure \ref{ecoutecabb} illustrates the sets $B$, $B'$, $X_1$, $X_2$, $X_3$ and $X_4$ and the vertices $u$, $v_1$ and $v_2$.

Remark that $u$ is at distance at most $3$ from every vertex of $B\cup B'\setminus (X_1\cup X_2\cup X_3\cup X_4\cup \{v_1,v_2\})$. 
Figure \ref{ecoutecabb} illustrates the vertices of $B\cup B'$ at distance at most $3$ of $u$.

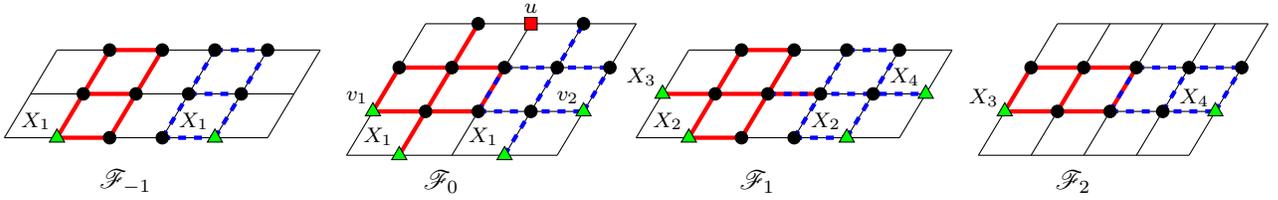
\begin{figure}[t]
\begin{center}
\begin{tikzpicture}[scale=0.7]

\draw (0-1,2.5/3-0.5) -- (4,2.5/3-0.5);
\draw (0-0.5,5/3-0.5) -- (5-0.5,5/3-0.5);
\draw (0.5-0.5,2) -- (5,2);
\draw (-1,2.5/3-0.5) -- (0,2);
\draw (0,2.5/3-0.5) -- (1,2);
\draw (1,2.5/3-0.5) -- (2,2);
\draw (2,2.5/3-0.5) -- (3,2);
\draw (3,2.5/3-0.5) -- (4,2);
\draw (4,2.5/3-0.5) -- (5,2);

\draw (5.5,0) -- (4+5.5,0);
\draw (1.5/3+5.5,2.5/3) -- (4+1.5/3+5.5,2.5/3);
\draw (1+5.5,5/3) -- (5+5.5,5/3);
\draw (1.5+5.5,2.5) -- (1.5+4+5.5,2.5);
\draw (0+5.5,0) -- (1.5+5.5,2.5);
\draw (1+5.5,0) -- (2.5+5.5,2.5);
\draw (2+5.5,0) -- (3.5+5.5,2.5);
\draw (3+5.5,0) -- (4.5+5.5,2.5);
\draw (4+5.5,0) -- (5.5+5.5,2.5);

\draw (0+1.5/3+10.5,2.5/3-0.5) -- (5+1.5/3+10.5,2.5/3-0.5);
\draw (0+1+10.5,5/3-0.5) -- (6+10.5,5/3-0.5);
\draw (1.5+10.5,2.5-0.5) -- (6.5+10.5,2.5-0.5);
\draw (0.5+10.5,2.5/3-0.5) -- (1.5+10.5,2);
\draw (1.5+10.5,2.5/3-0.5) -- (2.5+10.5,2);
\draw (2.5+10.5,2.5/3-0.5) -- (3.5+10.5,2);
\draw (3.5+10.5,2.5/3-0.5) -- (4.5+10.5,2);
\draw (4.5+10.5,2.5/3-0.5) -- (5.5+10.5,2);
\draw (5.5+10.5,2.5/3-0.5) -- (6.5+10.5,2);

\draw (0+17.5,0) -- (4+17.5,0);
\draw (0+1.5/3+17.5,2.5/3) -- (4+1.5/3+17.5,2.5/3);
\draw (0+1+17.5,5/3) -- (5+17.5,5/3);
\draw (1.5+17.5,2.5) -- (5.5+17.5,2.5);
\draw (0+17.5,0) -- (1.5+17.5,2.5);
\draw (1+17.5,0) -- (2.5+17.5,2.5);
\draw (2+17.5,0) -- (3.5+17.5,2.5);
\draw (3+17.5,0) -- (4.5+17.5,2.5);
\draw (4+17.5,0) -- (5.5+17.5,2.5);

\draw[ultra thick, color=red] (0,2.5/3-0.5) -- (1,2);
\draw[ultra thick, color=red] (1,2.5/3-0.5) -- (2,2);
\draw[ultra thick, color=blue, dashed] (2,2.5/3-0.5) -- (3,2);
\draw[ultra thick, color=blue, dashed] (3,2.5/3-0.5) -- (4,2);
\draw[ultra thick, color=red] (1-1,2.5/3-0.5) -- (1,2.5/3-0.5);
\draw[ultra thick, color=red] (1-0.5,5/3-0.5) -- (2-0.5,5/3-0.5);
\draw[ultra thick, color=red] (1.5-0.5,2) -- (2,2);
\draw[ultra thick, color=blue, dashed] (3-1,2.5/3-0.5) -- (3,2.5/3-0.5);
\draw[ultra thick, color=blue, dashed] (3-0.5,5/3-0.5) -- (4-0.5,5/3-0.5);
\draw[ultra thick, color=blue, dashed] (3.5-0.5,2) -- (4,2);

\draw[ultra thick, color=red]  (0.5+5.5,2.5/3) -- (1+5.5,5/3);
\draw[ultra thick, color=red]  (1+5.5,0) -- (2.5+5.5,2.5);
\draw[ultra thick, color=red]  (2.5+5.5,2.5/3) -- (3+5.5,5/3);
\draw[ultra thick, color=blue, dashed]  (2.5+5.5,2.5/3) -- (3+5.5,5/3);
\draw[ultra thick, color=blue, dashed]  (3+5.5,0) -- (4.5+5.5,2.5);
\draw[ultra thick, color=blue, dashed]  (4.5+5.5,2.5/3) -- (5+5.5,5/3);
\draw[ultra thick, color=red]  (1.5/3+5.5,2.5/3) -- (2+1.5/3+5.5,2.5/3);
\draw[ultra thick, color=red]  (1+5.5,5/3) -- (3+5.5,5/3);
\draw[ultra thick, color=blue, dashed]  (2+1.5/3+5.5,2.5/3) -- (4+1.5/3+5.5,2.5/3);
\draw[ultra thick, color=blue, dashed] (3+5.5,5/3) -- (5+5.5,5/3);

\draw[ultra thick, color=red] (0+12,2.5/3-0.5) -- (1+12,2);
\draw[ultra thick, color=red] (1+12,2.5/3-0.5) -- (2+12,2);
\draw[ultra thick, color=blue, dashed] (2+12,2.5/3-0.5) -- (3+12,2);
\draw[ultra thick, color=blue, dashed] (3+12,2.5/3-0.5) -- (4+12,2);
\draw[ultra thick, color=red] (1-1+12,2.5/3-0.5) -- (1+12,2.5/3-0.5);
\draw[ultra thick, color=red] (0-0.5+12,5/3-0.5) -- (3-0.5+12,5/3-0.5);
\draw[ultra thick, color=red] (1.5-0.5+12,2) -- (2+12,2);
\draw[ultra thick, color=blue, dashed] (3-1+12,2.5/3-0.5) -- (3+12,2.5/3-0.5);
\draw[ultra thick, color=blue, dashed] (2-0.5+12,5/3-0.5) -- (5-0.5+12,5/3-0.5);
\draw[ultra thick, color=blue, dashed] (3.5-0.5+12,2) -- (4+12,2);

\draw[ultra thick, color=red]  (0.5+17.5,2.5/3) -- (1+17.5,5/3);
\draw[ultra thick, color=red]  (1.5+17.5,2.5/3) -- (2+17.5,5/3);
\draw[ultra thick, color=red]  (2.5+17.5,2.5/3) -- (3+17.5,5/3);
\draw[ultra thick, color=blue, dashed]  (2.5+17.5,2.5/3) -- (3+17.5,5/3);
\draw[ultra thick, color=blue, dashed]  (3.5+17.5,2.5/3) -- (4+17.5,5/3);
\draw[ultra thick, color=blue, dashed]  (4.5+17.5,2.5/3) -- (5+17.5,5/3);
\draw[ultra thick, color=red]  (1.5/3+17.5,2.5/3) -- (2+1.5/3+17.5,2.5/3);
\draw[ultra thick, color=red]  (1+17.5,5/3) -- (3+17.5,5/3);
\draw[ultra thick, color=blue, dashed]  (2+1.5/3+17.5,2.5/3) -- (4+1.5/3+17.5,2.5/3);
\draw[ultra thick, color=blue, dashed] (3+17.5,5/3) -- (5+17.5,5/3);

\node at (0,2.5/3-0.5)[regular polygon, regular polygon sides=3,draw=black,fill=green,scale=0.35]{};
\node at (1,2.5/3-0.5)[circle,draw=black,fill=black,scale=0.5]{};
\node at (2,2.5/3-0.5)[circle,draw=black,fill=black,scale=0.5]{};
\node at (3,2.5/3-0.5)[regular polygon, regular polygon sides=3,draw=black,fill=green,scale=0.35]{};
\node at (0.5,5/3-0.5)[circle,draw=black,fill=black,scale=0.5]{};
\node at (1.5,5/3-0.5)[circle,draw=black,fill=black,scale=0.5]{};
\node at (2.5,5/3-0.5)[circle,draw=black,fill=black,scale=0.5]{};
\node at (3.5,5/3-0.5)[circle,draw=black,fill=black,scale=0.5]{};
\node at (1,2)[circle,draw=black,fill=black,scale=0.5]{};
\node at (2,2)[circle,draw=black,fill=black,scale=0.5]{};
\node at (3,2)[circle,draw=black,fill=black,scale=0.5]{};
\node at (4,2)[circle,draw=black,fill=black,scale=0.5]{};

\node at (3.5+5.5,2.5)[regular polygon, regular polygon sides=4,draw=black,fill=red,scale=0.5]{};
\node at (2.5+5.5,2.5)[circle,draw=black,fill=black,scale=0.5]{};
\node at (4.5+5.5,2.5)[circle,draw=black,fill=black,scale=0.5]{};
\node at (3+5.5,5/3)[circle,draw=black,fill=black,scale=0.5]{};
\node at (2.5+5.5,2.5/3)[circle,draw=black,fill=black,scale=0.5]{};
\node at (3+5.5-1,5/3)[circle,draw=black,fill=black,scale=0.5]{};
\node at (2.5+5.5-1,2.5/3)[circle,draw=black,fill=black,scale=0.5]{};
\node at (3+5.5+1,5/3)[circle,draw=black,fill=black,scale=0.5]{};
\node at (2.5+5.5+1,2.5/3)[circle,draw=black,fill=black,scale=0.5]{};
\node at (3+5.5-2,5/3)[circle,draw=black,fill=black,scale=0.5]{};
\node at (3+5.5+2,5/3)[circle,draw=black,fill=black,scale=0.5]{};
\node at (4.5+5.5-4,2.5/3)[regular polygon, regular polygon sides=3,draw=black,fill=green,scale=0.35]{};
\node at (4.5+5.5,2.5/3)[regular polygon, regular polygon sides=3,draw=black,fill=green,scale=0.35]{};
\node at (1+5.5,0)[regular polygon, regular polygon sides=3,draw=black,fill=green,scale=0.35]{};
\node at (3+5.5,0)[regular polygon, regular polygon sides=3,draw=black,fill=green,scale=0.35]{};

\node at (1.5+11-0.5,2.5/3-0.5)[regular polygon, regular polygon sides=3,draw=black,fill=green,scale=0.35]{};
\node at (2.5+11-0.5,2.5/3-0.5)[circle,draw=black,fill=black,scale=0.5]{};
\node at (3.5+11-0.5,2.5/3-0.5)[circle,draw=black,fill=black,scale=0.5]{};
\node at (4.5+11-0.5,2.5/3-0.5)[regular polygon, regular polygon sides=3,draw=black,fill=green,scale=0.35]{};

\node at (1+11-0.5,5/3-0.5)[regular polygon, regular polygon sides=3,draw=black,fill=green,scale=0.35]{};
\node at (2+11-0.5,5/3-0.5)[circle,draw=black,fill=black,scale=0.5]{};
\node at (3+11-0.5,5/3-0.5)[circle,draw=black,fill=black,scale=0.5]{};
\node at (4+11-0.5,5/3-0.5)[circle,draw=black,fill=black,scale=0.5]{};
\node at (5+11-0.5,5/3-0.5)[circle,draw=black,fill=black,scale=0.5]{};
\node at (6+11-0.5,5/3-0.5)[regular polygon, regular polygon sides=3,draw=black,fill=green,scale=0.35]{};
\node at (2.5+11-0.5,2)[circle,draw=black,fill=black,scale=0.5]{};
\node at (3.5+11-0.5,2)[circle,draw=black,fill=black,scale=0.5]{};
\node at (4.5+11-0.5,2)[circle,draw=black,fill=black,scale=0.5]{};
\node at (5.5+11-0.5,2)[circle,draw=black,fill=black,scale=0.5]{};

\node at (3+17.5,5/3)[circle,draw=black,fill=black,scale=0.5]{};
\node at (2.5+17.5,2.5/3)[circle,draw=black,fill=black,scale=0.5]{};
\node at (3+17.5-1,5/3)[circle,draw=black,fill=black,scale=0.5]{};
\node at (2.5+17.5-1,2.5/3)[circle,draw=black,fill=black,scale=0.5]{};
\node at (3+17.5+1,5/3)[circle,draw=black,fill=black,scale=0.5]{};
\node at (2.5+17.5+1,2.5/3)[circle,draw=black,fill=black,scale=0.5]{};
\node at (3+17.5-2,5/3)[circle,draw=black,fill=black,scale=0.5]{};
\node at (3+17.5+2,5/3)[circle,draw=black,fill=black,scale=0.5]{};
\node at (4.5+17.5-4,2.5/3)[regular polygon, regular polygon sides=3,draw=black,fill=green,scale=0.35]{};
\node at (4.5+17.5,2.5/3)[regular polygon, regular polygon sides=3,draw=black,fill=green,scale=0.35]{};

\node at (3.5+5.5,2.8){\footnotesize{$u$}};
\node at (5.7,1.1){\footnotesize{$v_{1}$}};
\node at (4+5.7,1.1){\footnotesize{$v_{2}$}};
\node at (0.3-0.7,0.65){\footnotesize{$X_{1}$}};
\node at (3.3-0.7,0.65){\footnotesize{$X_{1}$}};
\node at (1+5.1,0.35){\footnotesize{$X_{1}$}};
\node at (3+5.1,0.35){\footnotesize{$X_{1}$}};
\node at (1+10.1,0.65+2.5/3){\footnotesize{$X_{3}$}};
\node at (6+10.1,0.65+2.5/3){\footnotesize{$X_{4}$}};
\node at (1.5+10.1,0.65){\footnotesize{$X_{2}$}};
\node at (4.5+10.1,0.65){\footnotesize{$X_{2}$}};
\node at (5.6+12,1.1){\footnotesize{$X_{3}$}};
\node at (4+5.6+12,1.1){\footnotesize{$X_{4}$}};
\node at (1.3,-0.5){$\mathcal{F}_{-1}$};
\node at (7.3,-0.5){$\mathcal{F}_{0}$};
\node at (13.3,-0.5){$\mathcal{F}_{1}$};
\node at (19.3,-0.5){$\mathcal{F}_{2}$};

\end{tikzpicture}
\end{center}
\caption{The vertex $u$ (the square), the vertices in $B\cup B'$ at distance at most $3$ from $u$ (circles) and the vertices at distance at least $4$ from $u$ in $B\cup B'$ (triangles; thick lines: edges between vertices in $B$; dashed lines: edges between vertices in $B'$).}
\label{ecoutecabb}
\end{figure}

Suppose, by contradiction, that there exists a $28$-coloring $c$ of $\mathcal{F}^{3}$. We begin by showing that $u$ can not have the same color than the one of any vertex from $X_1$, $X_2$, $X_3$ and $X_4$. 
Afterward, we show that we have a contradiction in the case $u$, $v_1$ and $v_2$ have the same color. 
Consequently, $c(u)\neq c(v_1)$ or $c(u)\neq c(v_2)$, and $u$ can not have the same color than the one of any vertex from either $B$ or $B'$ (which both contain a vertex of every color).

First, by Lemma \ref{balancetonlemme}.(iv), one vertex from $\{(1,-1,-2),$ $(1.5,-0.5,-3),$ $(0.5,-0.5,-3)\}$ has the same color than $u$. Note that every vertex of $\{(1,-1,-2),$ $(1.5,-0.5,-3),$ $(0.5,-0.5,-3)\}$ is at distance at most $3$ from $(-0.5,-0.5,-1)$, $(2.5,-0.5,-1)$, $(0,-1,0)$ and $(2,-1,0)$. Consequently, $c(u)\neq c(v)$, if $v\in X_1$.

Second, by Lemma \ref{balancetonlemme}.(iii), one vertex from $\{(1,-1,2),$ $(1.5,-0.5,3),$ $(0.5,-0.5,3)\}$ has the same color than $u$. Note that every vertex of $\{(1,-1,2),$ $(1.5,-0.5,3),$ $(0.5,-0.5,3)\}$ is at distance at most $3$ from $(-0.5,-0.5,1)$ and $(2.5,-0.5,1)$. Consequently, $c(u)\neq c(v)$, if $v\in X_2$.

Third, by Lemma \ref{balancetonlemme}.(ii), one vertex from $\{(-2,2,2),$ $(-1.5,2.5,3),$ $(-1.5,1.5,3)\}$ has the same color than $u$. Note that every vertex of $\{(-2,2,2),$ $(-1.5,2.5,3),$ $(-1.5,1.5,3)\}$ is at distance at most $3$ from $(-1.5,0.5,1)$ and $(-1,0,2)$. Consequently, $c(u)\neq c(v)$, if $v\in X_3$.

Fourth, by Lemma \ref{balancetonlemme}.(i), one vertex from $\{(4,2,2),$ $(3.5,2.5,3),$ $(3.5,1.5,3)\}$ has the same color than $u$. Note that every vertex of $\{(4,2,2),$ $(3.5,2.5,3),$ $(3.5,1.5,3)\}$ is at distance at most $3$ from $(3.5,0.5,1)$ and $(3,0,2)$. Consequently, $c(u)\neq c(v)$, if $v\in X_4$.

Now, by the previous arguments, we know that the color of $u$ is different from the colors of the vertices of $B\setminus\{v_1\}$ and of $B'\setminus\{v_2\}$.
Thus, since  both $B$ and $B'$ should contain a vertex of every color, we have $c(u)=c(v_1)$ and $c(u)=c(v_2)$.
By Lemma \ref{balancetonlemme}.(i), one vertex from $\{(2,0,2),$ $(1.5,0.5,3),$ $(1.5,-0.5,3)\}$ has the same color than $v_1$. Note that $d_{\mathcal{F}}(v_2,$ $(2,0,2))=2$, $d_{\mathcal{F}}(v_2,$ $(1.5,-0.5,3))=3$ and $d_{\mathcal{F}}(v_2,$ $(1.5,0.5,3))=3$. Therefore, we have a contradiction with the fact that $c(v_1)=c(v_2)$.

To conclude, $u$ can not have the same color that any vertex of either $B$ or $B'$ and since $|B|=28$ and $|B'|=28$, $\mathcal{F}^3$ is not 28-colorable.

\end{proof}
\section{Chromatic number of the $d^{\text{th}}$ power of $\mathcal{F}_{0,1}$ and $\mathcal{F}_{0,2}$}

In this section, we give bounds on the chromatic number of two subgraphs of $\mathcal{F}$. These subgraphs are obtained by restricting the graph to the vertices of two or three consecutive layers of $\mathcal{F}$. Note that we achieved to obtain the exact value of the chromatic number of the $d^{\text{th}}$ power of these subgraphs in more cases that for $\mathcal{F}$.

\subsection{Chromatic number of the $d^{\text{th}}$ power of $\mathcal{F}_{0,1}$}

We begin this subsection by calculating the size of a largest complete subgraph in $\mathcal{F}_{0,1}^d$. From this value, we derive the following lower bound on the chromatic number of $\mathcal{F}_{0,1}^{d}$.

\begin{prop}
For every positive integer $d$, we have $\chi(\mathcal{F}_{0,1}^{d})\ge (d+1)^2$.
\end{prop}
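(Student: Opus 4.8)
The plan is to derive the bound from the clique number, since $\chi(\mathcal{F}_{0,1}^{d})\ge\omega(\mathcal{F}_{0,1}^{d})$. It therefore suffices to exhibit $(d+1)^2$ vertices of $\mathcal{F}_{0,1}$ that are pairwise at distance at most $d$ in $\mathcal{F}_{0,1}$. The first step is to simplify, via Proposition \ref{prodist}, the distance between vertices confined to layers $0$ and $1$. For two vertices of the same layer the distance is $|i-i'|+|j-j'|$. For a layer-$0$ vertex $(i,j,0)$ and a layer-$1$ vertex $(i'+0.5,j'+0.5,1)$ we have $|k-k'|=1$, and since each of $|i-i'-0.5|$ and $|j-j'-0.5|$ is a half-integer at least $1/2$, the $p_{+}$ terms are active and the formula collapses to $|i-i'-0.5|+|j-j'-0.5|$. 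Hence in all cases the $\mathcal{F}$-distance between two vertices of $\mathcal{F}_{0,1}$ equals the planar $\ell_1$ distance between their projections onto the $(x,y)$-plane, where layer $0$ projects onto $\mathbb{Z}^2$ and layer $1$ onto $(\mathbb{Z}+\tfrac12)^2$.

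Next I would rotate coordinates by setting $u=x+y$ and $v=x-y$. Using $\max(|a+b|,|a-b|)=|a|+|b|$, the planar $\ell_1$ distance becomes the $\ell_\infty$ distance in the $(u,v)$-plane, and the combined projection of the two layers is exactly $\mathbb{Z}^2$: layer $0$ supplies the points with $u\equiv v\pmod 2$, and layer $1$ the points with $u\not\equiv v\pmod 2$. The clique is then the preimage of the axis-aligned box $\{0,1,\dots,d\}^2$, which contains $(d+1)^2$ lattice points, all pairwise within $\ell_\infty$-distance $d$. These correspond to $(d+1)^2$ vertices of $\mathcal{F}_{0,1}$, alternating between the two layers according to the parity of $u+v$, that are pairwise at $\mathcal{F}$-distance at most $d$. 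For $d=1$ this construction recovers precisely the complete graph $D_0$.

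The \textbf{main obstacle} is that $\mathcal{F}_{0,1}$ is an \emph{induced} subgraph, so a priori $d_{\mathcal{F}_{0,1}}\ge d_{\mathcal{F}}$, and one must rule out the possibility that a shortest path between two of the chosen vertices leaves the two-layer slab, which would make $d_{\mathcal{F}_{0,1}}$ exceed $d$. I would close this gap by observing that, in the rotated coordinates, the edges of $\mathcal{F}_{0,1}$ are exactly the eight king moves on $\mathbb{Z}^2$: the within-layer edges are the four diagonal steps $(\pm1,\pm1)$ and the cross-layer edges are the four axis steps $(\pm1,0),(0,\pm1)$. Thus $\mathcal{F}_{0,1}$ is isomorphic to the king graph on $\mathbb{Z}^2$, whose graph distance is precisely $\ell_\infty$; any two of its vertices at $\ell_\infty$-distance $m$ are joined by a king path of length $m$ staying in the slab. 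This gives $d_{\mathcal{F}_{0,1}}=d_{\mathcal{F}}$ on these vertices, so the $(d+1)^2$ chosen vertices indeed form a clique of $\mathcal{F}_{0,1}^{d}$, and the lower bound follows. The same king-graph identification also yields the matching upper bound through the periodic coloring $(u,v)\mapsto(u\bmod(d+1),\,v\bmod(d+1))$, which explains why the value is exactly $(d+1)^2$.
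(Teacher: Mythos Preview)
Your argument is correct and takes a genuinely different route from the paper's proof. The paper treats even and odd $d$ separately: for even $d$ it counts the shells $A_\ell=\{v\in V(\mathcal{F}_{0,1}):d_{\mathcal{F}}((0,0,0),v)=\ell\}$, shows $|A_\ell|=8\ell$, sums to get $|\bigcup_{i\le d/2}A_i|=(d+1)^2$, and asserts the diameter bound; for odd $d$ it repeats the computation with the tetrahedron $D_0$ as seed, obtaining $|D_\ell|=8\ell+4$. You instead identify $\mathcal{F}_{0,1}$ with the king graph on $\mathbb{Z}^2$ via the rotation $(u,v)=(x+y,x-y)$, after which the clique $\{0,\dots,d\}^2$ is immediate and no parity split is needed.

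Your approach buys two things. First, it dispenses with the case analysis and yields the matching upper bound in the same breath. Second, it actually justifies the step the paper leaves implicit: the paper defines its shells using $d_{\mathcal{F}}$ and then asserts a diameter bound in the \emph{induced} graph $\mathcal{F}_{0,1}$, without checking that shortest $\mathcal{F}$-paths between slab vertices can be taken to stay in the slab. Your king-graph identification establishes $d_{\mathcal{F}_{0,1}}=d_{\mathcal{F}}$ on $V(\mathcal{F}_{0,1})$ and so fills this gap. The paper's shell-counting approach, on the other hand, has the virtue of being parallel to the treatment of the full grid $\mathcal{F}$ in Section~3, which makes the exposition uniform.
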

\begin{proof}
First, suppose $d$ is even. Let $A_{\ell}=\{(i,j,k)\in V(\mathcal{F}_{0,1})|\ d_{\mathcal{F}}((0,0,0)(i,j,k))= \ell\}$.
We claim that $|A_\ell|=8 \ell$, for $\ell \ge 1$. We proceed by induction to prove it. Note that $|A_1|=8$. Suppose, by induction, that $|A_\ell|=8 \ell$.
It can be easily remarked that both $|V(\mathcal{F}_0)\cap A_{\ell+1}|=|V(\mathcal{F}_0)\cap A_{\ell}|+4$ and $|V(\mathcal{F}_1)\cap A_{\ell+1}|=|V(\mathcal{F}_1)\cap A_{\ell}|+4$. Thus $|A_{\ell+1}|=8\ell+8$.
By calculation $|\cup^\ell_{i=0} A_{i}|=\sum_{i=1}^\ell (8i) +1=4 \ell (\ell+1)+1$. Consequently, since  $\text{diam}(\mathcal{F}_{0,1}[\cup^\ell_{i=0} A_{i}])\le 2\ell$, we have: $$\chi(\mathcal{F}_{0,1}^{d})\ge 4 \frac{d}{2} \left(\frac{d}{2}+1\right)+1=d(d+2)+1=(d+1)^2.$$

Second, suppose $d$ is odd. Let $D_{0}=\{(0,0,0),$ $(1,0,0),$ $(0.5,0.5,1),$ $(0.5,-0.5,1)\}$ and let $D_{\ell }=\{ u\in V(\mathcal{F}_{0,1})|\ \min_{v\in D_{0}}(d_{\mathcal{F}}(u,v))=\ell \}$, for $\ell \ge 0$. We claim that $|D_{\ell}|=8 \ell+4$, for $\ell\ge 0$. We proceed by induction to prove it. Note that $|D_0|=4$ and $|D_1|=12$.  Suppose, by induction, that $|D_\ell|=8 \ell+4$.
It can be easily remarked that both $|V(\mathcal{F}_0)\cap D_{\ell+1}|=|V(\mathcal{F}_0)\cap D_{\ell}|+4$ and $|V(\mathcal{F}_1)\cap D_{\ell+1}|=|V(\mathcal{F}_1)\cap D_{\ell}|+4$. Thus $|D_{\ell+1}|=8\ell+8$.
By calculation, $|\cup^\ell_{i=0} D_{i}|=\sum_{i=0}^\ell (8i +4)=4 (\ell+1)^2$. Consequently, since  $\text{diam}(\mathcal{F}_{0,1}[\cup^\ell_{i=0} D_{i}])\le 2\ell+1$, we have : $$\chi(\mathcal{F}_{0,1}^{d})\ge 4\left(\frac{(d-1)}{2}+1\right)^2=(d+1)^2.$$
\end{proof}
In the following proposition, we show that, in contrast with $\mathcal{F}^d$, the size of the largest complete subgraph in $\mathcal{F}_{0,1}^{d}$ corresponds to the chromatic number of $\mathcal{F}_{0,1}^{d}$, for every integer $d$.

\begin{prop}
For every positive integer $d$, we have $\chi(\mathcal{F}_{0,1}^{d})\le (d+1)^2$.
\end{prop}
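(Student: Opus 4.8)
The plan is to collapse the whole problem to the classical $L^\infty$-coloring of $\mathbb{Z}^2$, by first simplifying the distance on $\mathcal{F}_{0,1}$ and then writing down an explicit periodic coloring. The starting point is Proposition~\ref{prodist}. Write two vertices of $\mathcal{F}_{0,1}$ as $(a,b,k)$ and $(a',b',k')$ with $k,k'\in\{0,1\}$, and set $\Delta a=a-a'$, $\Delta b=b-b'$. Since $|k-k'|\le 1$, the formula reads $d_{\mathcal{F}}=p_{+}(|\Delta a|-|k-k'|/2)+p_{+}(|\Delta b|-|k-k'|/2)+|k-k'|$. When $k=k'$ this is $|\Delta a|+|\Delta b|$ outright. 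When $|k-k'|=1$ the differences $\Delta a,\Delta b$ are odd multiples of $1/2$, so $|\Delta a|,|\Delta b|\ge 1/2$, hence $p_{+}(|\Delta a|-1/2)=|\Delta a|-1/2$; the two subtracted halves cancel the inter-layer $+1$ and again $d_{\mathcal{F}}=|\Delta a|+|\Delta b|$. Thus on $\mathcal{F}_{0,1}$ the ambient distance is simply the $L^1$ distance in the geometric coordinates. Because deleting the remaining layers cannot shorten any path, $d_{\mathcal{F}_{0,1}}\ge d_{\mathcal{F}}$, so it suffices to produce a $(d+1)^2$-coloring in which any two equally colored distinct vertices satisfy $d_{\mathcal{F}}>d$.

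For the coloring I would pass to the rotated coordinates $s=a+b$ and $t=a-b$. A short check shows $s$ and $t$ are integers on both layers (on layer $1$ the two half-integers sum and difference to integers), and that $\max(|\Delta s|,|\Delta t|)=|\Delta a|+|\Delta b|$, so in the $(s,t)$ picture the distance is exactly the $L^\infty$ distance. This is precisely the setting where the standard product coloring works, so I would set
\[
c(a,b,k)=\bigl((a+b)\bmod(d+1)\bigr)+(d+1)\bigl((a-b)\bmod(d+1)\bigr),
\]
which manifestly uses exactly $(d+1)^2$ colors.

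It remains to verify properness. If two distinct vertices receive the same color, then $\Delta s\equiv\Delta t\equiv 0\pmod{d+1}$, so $\Delta s=(d+1)p$ and $\Delta t=(d+1)q$ with $(p,q)\neq(0,0)$; indeed the map $(s,t)\mapsto(a,b)$ is invertible, so $(\Delta s,\Delta t)=(0,0)$ forces $(\Delta a,\Delta b)=(0,0)$, which (since distinct layers already force $\Delta a\neq 0$) would mean the two vertices coincide. Then $d_{\mathcal{F}}=|\Delta a|+|\Delta b|=\max(|\Delta s|,|\Delta t|)=(d+1)\max(|p|,|q|)\ge d+1>d$, and hence $d_{\mathcal{F}_{0,1}}\ge d_{\mathcal{F}}>d$, so $c$ is a proper coloring of $\mathcal{F}_{0,1}^{d}$.

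The routine parts are genuinely routine; the one step needing care — and the only real obstacle I anticipate — is the cross-layer distance simplification, where one must track that the layer-$1$ coordinates are odd half-integers so that the $p_{+}$ clamps are inactive and the $1/2$ terms exactly cancel the inter-layer $+1$. Once that identity is established, the reduction to an $L^\infty$-coloring and the color count are immediate, and together with the preceding lower bound $\chi(\mathcal{F}_{0,1}^{d})\ge(d+1)^2$ this yields the exact value $\chi(\mathcal{F}_{0,1}^{d})=(d+1)^2$.
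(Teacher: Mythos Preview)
Your argument is correct and takes a genuinely different route from the paper. The paper splits into parity cases: for odd $d$ it restricts the general layer-by-layer coloring of Theorem~\ref{upbound} to two layers (using $2\cdot(d+1)^2/2=(d+1)^2$ colors since $d+1$ is even), and for even $d$ it introduces the ad hoc linear form $c((i,j,k))=i+(2d+1)j\pmod{(d+1)^2}$ and verifies properness by a case analysis on $|j|\le d/2$ versus $|j|>d/2$. Your approach is more uniform: the key observation that the $p_+$ clamps in Proposition~\ref{prodist} are inactive on $\mathcal{F}_{0,1}$ collapses the distance to the planar $L^1$ metric, and the $45^\circ$ rotation $(a,b)\mapsto(a+b,a-b)$ turns this into the $L^\infty$ metric on $\mathbb{Z}^2$, where the product coloring modulo $d+1$ is the obvious optimal construction. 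This handles all $d$ at once and makes the verification a one-line divisibility argument, at the cost of needing the injectivity check (that the parity of $s+t$ recovers the layer, so distinct vertices never share $(s,t)$). The paper's even-$d$ coloring is essentially the same object in different coordinates, but the reduction to $L^\infty$ makes the structure transparent.
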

\begin{proof}
It can be remarked that the coloring from Theorem \ref{upbound} restricted to $\mathcal{F}_{0,1}^d$ is a coloring using $(d+1)^2$ colors, when $d$ is odd. Then, we can suppose that $d$ is even and, consequently, that $d/2$ is an integer.
We set the following function: $\forall (i,j,k)\in V(\mathcal{F}),$
$$c(i,j,k)) = i+(2d+1) j \pmod{(d+1)^2}. $$

We claim that $c$ is a coloring function of $\mathcal{F}_{0,1}^{d}$.

It remains to prove that every two vertices $(i,j,k)$ and $(i',j',k')$ satisfying $c((i,j,k))=c((i',j',k'))$ are such that $d_{\mathcal{F}}((i,j,k),(i',j',k'))>d$. Without loss of generality, we suppose that $i'=0$, $j'=0$ and $k'=0$.

Note that the vertices $(i,j,k)$ at distance at most $d$ from $(0,0,0)$ are such that $1\le |i|+|j|\le d$. Also, note that $c((0,0,0))=c((i,j,k))$ implies $i+(2d+1) j\equiv 0\pmod{(d+1)^2}$. 
Thus, it remains to show that for every possible value of $i$ and $j$, $1\le |i|+|j|\le d$, we have $i+(2d+1) j\not \equiv 0\pmod{(d+1)^2}$.

If $|j|\le  d/2$, then $ -d^2-d/2\le (2d+1)|j|\le d^2+d/2$. 
Thus, since $|i|+|j|\le d$, we have $-(d+1)^2 < -d^2-d \le i+(2d+1)j\le d^2+d < (d+1)^2$ and since $|i|+|j|\neq 0$, we have $i+(2d+1) j\not \equiv 0\pmod{(d+1)^2}$.

Moreover, if $d/2+1 \le |j|\le d$, then $d^2+5d/2+1 \le (2d+1)|j|\le 2d^2+d$ and consequently, since $|i|+|j|\le d$, we have  $d^2+2d+2 \le i+(2d+1)|j|\le 2d^2+3d/2-1$. 
Since $(d+1)^2 < d^2+2d+2$ and $2d^2+3d/2-1<2 (d+1)^2$, we have $i+(2d+1) j\not \equiv 0\pmod{(d+1)^2}$.

\end{proof}

\subsection{Chromatic number of the $d^{\text{th}}$ power of $\mathcal{F}_{0,2}$}
We begin this subsection by giving a general upper bound (using Theorem \ref{upbound}) on the chromatic number of $\mathcal{F}_{0,2}^{d}$.
\begin{prop}
For every positive integer $d$, we have $\chi(\mathcal{F}_{0,2}^{d})\le 3\lceil (d+1)/2 \rceil ^2$.
\end{prop}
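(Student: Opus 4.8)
The plan is to exhibit an explicit periodic coloring of $\mathcal{F}_{0,2}$, in the same modular style as the colorings used for $\chi(\mathcal{F}^2)=13$ and $\chi(\mathcal{F}^3)\le 30$, and then to verify directly that two vertices receiving the same color lie at $\mathcal{F}$-distance strictly greater than $d$. I would start from the decomposition of $\mathcal{F}_{0,2}$ into its three layers $\mathcal{F}_0,\mathcal{F}_1,\mathcal{F}_2$, recalling from Proposition \ref{prodist} that two vertices of the same layer are at distance $|i-i'|+|j-j'|$ (the square-grid $L^1$ distance), while the cross-layer distances add the term $|k-k'|$ together with the $p_+\!\left(\cdot-\tfrac{|k-k'|}{2}\right)$ corrections. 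The guiding principle refines that of Theorem \ref{upbound}: rather than giving each layer its own private block of $\lceil (d+1)^2/2\rceil$ colors (which would already spend $3\lceil (d+1)^2/2\rceil$ colors in total), I would try to make the three layers share a single common palette of size $3\lceil (d+1)/2\rceil^2$, so that a color used in one layer is reused in the other two precisely on vertices that the cross-layer geometry already keeps more than $d$ apart.

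Concretely, I would treat the even and odd cases of $d$ separately, since $\lceil (d+1)/2\rceil$ behaves differently, and attempt a linear color function $c(i,j,k)=\alpha i+\beta j+\gamma k \pmod N$ with $N=3\lceil (d+1)/2\rceil^2$, choosing $\alpha,\beta$ to realize an efficient square-grid pattern and $\gamma$ to interleave the three layers within the shared palette; the half-integer shift of layer $1$ would be absorbed by evaluating the formula on $\bigl(i-\tfrac12,\,j-\tfrac12\bigr)$ exactly as in the proof for $d=3$. The verification then splits into the three cases $k'=k$, $|k-k'|=1$ and $|k-k'|=2$, as in Theorem \ref{balancetontheoreme}; in each case one reduces to showing that a bounded linear form in the differences $(i-i')$ and $(j-j')$ is nonzero modulo $N$ over the finite set of coordinate offsets that can occur at distance at most $d$, which I would check on a single fundamental domain and then propagate by periodicity.

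The step I expect to be the main obstacle is the within-layer case $k'=k$. There the requirement is exactly that of $d$-distance coloring the square grid, whose known optimum is $\lceil (d+1)^2/2\rceil$ colors \cite{FE2003}; since $\lceil (d+1)/2\rceil^2$ is strictly smaller, no single layer can be colored using only its own share of the palette, so the whole saving must come from the enforced cross-layer reuse. Making this reuse legitimate forces the separations $|k-k'|\in\{1,2\}$ to do genuine work: I would need the corrections $p_+\!\left(|i-i'|-\tfrac{|k-k'|}{2}\right)$ of Proposition \ref{prodist} to guarantee that whenever two vertices of different layers collide under the linear form, their horizontal offset is already large enough to push the $\mathcal{F}$-distance past $d$. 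Balancing the within-layer packing against this mandatory sharing — that is, certifying that one genuinely periodic pattern meets all three cases simultaneously with only $3\lceil (d+1)/2\rceil^2$ residues — is the delicate core of the argument, and it is where I would concentrate the case analysis, most likely after first pinning down the base pattern and its period on a small explicit neighbourhood.
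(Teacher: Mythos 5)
The statement you set out to prove is not the one the paper actually establishes: the printed bound $3\lceil (d+1)/2\rceil^2$ is a typo for $3\lceil (d+1)^2/2\rceil$ (the form appearing in Table~\ref{table1} and in the body of the paper's own proof), and the literal statement is false for every $d\ge 1$ --- falsified by the paper itself. For $d=1$ it would give $\chi(\mathcal{F}_{0,2})\le 3$, yet $\mathcal{F}_{0,2}$ contains $K_4$ (e.g.\ $\{(0,0,0),(1,0,0),(0.5,0.5,1),(0.5,-0.5,1)\}$); for $d=3$ it would give $12$, contradicting Theorem~\ref{sharknado}, which shows $\chi(\mathcal{F}_{0,2}^3)\ge 24$. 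In general, Proposition~\ref{whitefire} exhibits sets ($A_\ell$, $B_\ell$) spanning all three layers that are cliques in $\mathcal{F}_{0,2}^d$ of size roughly $\tfrac{3}{2}(d+1)^2$, about twice the palette $3\lceil (d+1)/2\rceil^2\approx\tfrac{3}{4}(d+1)^2$ you budgeted. This is why your plan cannot be repaired: no choice of $\alpha,\beta,\gamma,N$ --- indeed no coloring whatsoever --- evades a clique bound. The obstacle you flagged as the delicate core (each layer alone must be a $d$-distance coloring of the square grid, costing $\lceil (d+1)^2/2\rceil$ colors) is real but is the weaker obstruction, since $\lceil (d+1)^2/2\rceil\le 3\lceil (d+1)/2\rceil^2$ for all $d$; the decisive obstruction is the three-layer clique, whose $\approx\tfrac{3}{2}(d+1)^2$ vertices are pairwise within distance $d$ and therefore forbid precisely the cross-layer color reuse on which your whole saving was to rest.

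With the exponent placed correctly, the proposition needs none of your machinery: the paper's proof is one line. The coloring of Theorem~\ref{upbound} assigns to layer $b$ the pattern indexed by $b \bmod (d+1)$, the $d+1$ patterns using pairwise disjoint sets of $\lceil (d+1)^2/2\rceil$ colors; since that coloring is valid on all of $\mathcal{F}$, its restriction to $\mathcal{F}_{0,2}$ is valid, and it uses only the patterns of layers $0,1,2$ --- three distinct residues once $d\ge 2$ (fewer colors still when $d=1$) --- hence at most $3\lceil (d+1)^2/2\rceil$ colors. That is exactly the ``private block of colors per layer'' option which your opening paragraph computed and then discarded as the thing to be improved upon, so your first sentence already contained the intended proof. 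Note finally that by Proposition~\ref{whitefire} the room below $3\lceil (d+1)^2/2\rceil$ is only an additive constant, and for $d=3$ Theorem~\ref{sharknado} shows even that saving is unattainable; a constant-factor improvement of the kind your shared-palette scheme targets is out of the question.
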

\begin{proof}
It can be remarked that the coloring from Theorem \ref{upbound} restricted to $\mathcal{F}_{0,2}^d$ is a coloring of $\mathcal{F}$ using $3\lceil (d+1)^2/2 \rceil $ colors.
\end{proof}

We continue this subsection by determining the size of a largest complete subgraph in $\mathcal{F}_{0,2}^d$. From this value, we derive the following lower bound on the chromatic number of $\mathcal{F}_{0,2}^{d}$.
\begin{prop}\label{whitefire}
For every positive integer $d$, we have $\chi(\mathcal{F}_{0,2}^{d})\ge 3(\lfloor (d+1)^2/2 \rfloor )-2$.
\end{prop}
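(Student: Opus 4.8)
The plan is to exhibit an explicit complete subgraph (clique) of $\mathcal{F}_{0,2}^d$ of size $3\lfloor (d+1)^2/2\rfloor-2$; since $\chi(G)\ge\omega(G)$ for every graph $G$, this immediately yields the claimed lower bound. Because $\mathcal{F}_{0,2}$ consists of the three layers $0,1,2$ and, as one checks from the distance formula, every shortest path between two of its vertices can be realized without leaving these layers, the distance in $\mathcal{F}_{0,2}$ coincides with the distance in $\mathcal{F}$, so Proposition~\ref{prodist} applies verbatim.

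The key simplification is to pass to the rotated coordinates $(u,v)=(x+y,x-y)$ in the plane. For two vertices lying in the same layer or in two consecutive layers, Proposition~\ref{prodist} gives distance $|x-x'|+|y-y'|=\max(|u-u'|,|v-v'|)$ (the half-integer offset between adjacent layers is exactly absorbed by the $p_+(\cdot-1/2)$ terms); thus the within-layer and adjacent-layer constraints both reduce to the single requirement that the rotated coordinates lie in a common $L_\infty$-box of side $d$. A layer-$0$ or layer-$2$ vertex has integer $(u,v)$ with $u\equiv v\pmod 2$, while a layer-$1$ vertex has integer $(u,v)$ with $u\not\equiv v\pmod 2$. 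I therefore take the candidate clique $S=S_0\cup S_1\cup S_2$, where $S_k$ is the set of layer-$k$ vertices whose rotated coordinates satisfy $0\le u\le d$ and $0\le v\le d$. Counting the integer points of each parity class in a $(d+1)\times(d+1)$ array gives $|S_0|=|S_2|=\lceil (d+1)^2/2\rceil$ and $|S_1|=\lfloor (d+1)^2/2\rfloor$, so $|S|\ge 3\lfloor (d+1)^2/2\rfloor$.

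It remains to verify that $S$ is, up to two vertices, a clique. Pairs inside $S_0\cup S_1$ or inside $S_1\cup S_2$ lie in the common box, so by the computation above their distance is $\max(|u-u'|,|v-v'|)\le d$. The only genuinely different case is a pair $(x,y,0),(x',y',2)$ with $|k-k'|=2$, where Proposition~\ref{prodist} gives distance $p_+(|x-x'|-1)+p_+(|y-y'|-1)+2$, so the condition to check is $p_+(|x-x'|-1)+p_+(|y-y'|-1)\le d-2$. When both $|x-x'|\ge 1$ and $|y-y'|\ge 1$ the left side equals $|x-x'|+|y-y'|-2=\max(|u-u'|,|v-v'|)-2\le d-2$, so no such pair fails. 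A failure can occur only when $x=x'$ (resp.\ $y=y'$) while the other coordinate differs by exactly $d$, i.e.\ for a pair of opposite corners of the box lying on a common grid line; there are at most two such pairs, and at most one when $d$ is odd by the parity condition. Deleting one endpoint of each offending pair removes at most two vertices and destroys every violation, leaving a clique of size at least $3\lfloor (d+1)^2/2\rfloor-2$.

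The main obstacle is precisely the layer-$0$/layer-$2$ constraint: unlike the same- and adjacent-layer cases, it does not collapse to the clean $L_\infty$ condition, so one must argue directly from the $p_+(\cdot-1)$ form of Proposition~\ref{prodist} that only the extreme antipodal corner pairs of the box can violate it, and hence that discarding at most two vertices suffices. Pinning down this corner count and its dependence on the parity of $d$ is the delicate step; the remaining verifications are routine point-counting inside the box $[0,d]^2$.
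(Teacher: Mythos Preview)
Your construction is genuinely different from the paper's (the paper counts shells of metric balls around a point for even $d$ and around a $K_4$ for odd $d$, rather than using a rotated $L_\infty$-box), and the idea is sound. However, your bad-pair count is off by a factor of two, and this matters.

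For a fixed antipodal corner pair, say the rotated positions $(0,0)$ and $(d,d)$, there are \emph{two} offending vertex pairs in $S$, not one: both $\{(0,0,0),(d,0,2)\}$ and $\{(d,0,0),(0,0,2)\}$ have $y=y'$ and $|x-x'|=d$, hence distance $d+1$. These two pairs are vertex-disjoint, so killing both requires deleting two vertices, not one. Thus for odd $d$ (where only the diagonal corner pair lies in $S_0\cup S_2$ by parity) you need two deletions, and for even $d$ (where both diagonal and anti-diagonal corner pairs are present) the four bad vertex-pairs form a matching on eight vertices and you need four deletions, not two.

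The bound is nevertheless salvageable with your set $S$. For odd $d$ you have $|S|=3\lfloor(d+1)^2/2\rfloor$ exactly, so $|S|-2=3\lfloor(d+1)^2/2\rfloor-2$ as required. For even $d$ you actually have $|S|=2\lceil(d+1)^2/2\rceil+\lfloor(d+1)^2/2\rfloor=3\lfloor(d+1)^2/2\rfloor+2$ (your ``$|S|\ge 3\lfloor(d+1)^2/2\rfloor$'' is an understatement here), so after four deletions you again land on $3\lfloor(d+1)^2/2\rfloor-2$. You should also treat $d=1$ separately: there every layer-$0$/layer-$2$ pair has distance at least $2>d$, so your ``only corner pairs fail'' claim is false; but the bound is then just $4$, witnessed by any $K_4$ in $\mathcal{F}$.
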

\begin{proof}
Let $A'_{\ell }=\{(i,j,k)\in V(\mathcal{F}_{0,2})|\ d_{\mathcal{F}_{0,2}}((0,0,1),(i,j,k)= \ell \}$.
We prove by induction that $|A'_\ell|=12 \ell$, for $\ell\ge 1$.
Note that we have $|A'_0|=1$ and $|A'_1|=12$. Suppose that $|A'_\ell|=12\ell$.
We can easily notice that, for each of the three layers of $\mathcal{F}_{0,2}$, there are four more vertices in $A'_{\ell+1}$, than in $A'_\ell$. 
Consequently, $|A'_{\ell+1}|=12\ell+12=12 (\ell+1)$. 
Let $A_{\ell}=\cup_{0\le i\le \ell} A'_i$. By calculation,  $|A_\ell|=\sum_{i=0}^{\ell} |A'_i|=1+12 \sum_{i=1}^{\ell} i=1+6\ell (\ell+1)=6\ell^2+6\ell+1$. 
Since $\text{diam}(\mathcal{F}_{0,2}[A_\ell])=2\ell$, we have $\chi(\mathcal{F}_{0,2}^d)\ge 3/2 d^2+3d+1=3\lceil (d+1)^2/2 \rceil -1$, for an even $d$.

Let $B'_{\ell}=\{(i,j,k)\in V(\mathcal{F}_{0,2})|\ d_{\mathcal{F}}(u,(i,j,k)= \ell,\ u\in\{(0,0,0),$ $(1,0,0),$ $(0.5,-0.5,1),$ $(0.5,0.5,1)\} \}$. We prove by induction that $B'_{\ell}=12\ell+6$, for $\ell \ge 1$.
Note that we have $|B'_0|=4$ and $|B'_1|=18$. Suppose that $|B'_\ell|=12\ell+6$.
We can easily notice that, for each of the three layers of $\mathcal{F}_{0,2}$, there are four more vertices in $|B'_{\ell+1}|$, than in $|B'_\ell|$. 
Consequently, $|B'_{\ell+1}|=12\ell+12+6=12 (\ell+1)+6$.
Let $B_{\ell}=\cup_{0\le i\le \ell} B'_i$. By calculation,  $|B_\ell|=\sum_{i=0}^{\ell} |B'_i|=4+12 \sum_{i=1}^{\ell} i+6\ell=4+6\ell (\ell+1)+6\ell=6\ell^2+12\ell+4$.
Since $\text{diam}(\mathcal{F}_{0,2}[B_\ell])=2\ell+1$, we have $\chi(\mathcal{F}_{0,2}^d)\ge 6/4 (d-1)^2+6 (d-1)+4=3/2 d^2+3d-1/2=3 \lceil (d+1)^2/2 \rceil -2$, for an odd $d$.
\end{proof}
Now, we  prove that the lower bound from Proposition \ref{whitefire} can be improved.
In order to do this, we begin by proving three Lemmas that will be used in the proof of Theorem \ref{sharknado}.

\begin{lemma}\label{le2661}
If there exists a $23$-coloring $c$ of $\mathcal{F}_{0,2}^3$, then for every integers $i$ and $j$, $c((i,j,0))=c((i+3,j,2))$ or $c((i+3,j,0))=c((i,j,2))$.
\end{lemma}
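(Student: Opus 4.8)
The plan is to reduce to the case $i=j=0$ and then force the coincidence with a single large clique whose coloring has essentially one free parameter. Since translation by any integer vector $(i,j,0)$ is an automorphism of $\mathcal{F}_{0,2}$, it suffices to prove that $c((0,0,0))=c((3,0,2))$ or $c((3,0,0))=c((0,0,2))$. Write $A=(0,0,0)$, $B=(3,0,0)$, $C=(3,0,2)$ and $D=(0,0,2)$. Using Proposition \ref{prodist} one checks $d_{\mathcal{F}}(A,B)=d_{\mathcal{F}}(C,D)=3$ and $d_{\mathcal{F}}(A,D)=d_{\mathcal{F}}(B,C)=2$, while $d_{\mathcal{F}}(A,C)=d_{\mathcal{F}}(B,D)=4$. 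Thus in $\mathcal{F}_{0,2}^{3}$ these four vertices induce a $4$-cycle $A\,B\,C\,D$ whose two diagonals $\{A,C\}$ and $\{B,D\}$ are exactly the pairs named in the statement. Assuming a $23$-coloring $c$ exists, I must show that at least one diagonal is monochromatic.

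The core of the argument is to build a clique $K$ of $\mathcal{F}_{0,2}^{3}$ on $22$ vertices that contains $A$ and $B$ and has the property that every vertex of $K$ other than $A$ lies at distance at most $3$ from $C$, and every vertex of $K$ other than $B$ lies at distance at most $3$ from $D$. Granting such a $K$, the conclusion is immediate. As $K$ is a clique on $22$ vertices it carries $22$ of the $23$ colors, so exactly one color $\gamma$ is missing on $K$. The vertex $C$ is adjacent in $\mathcal{F}_{0,2}^{3}$ to all $21$ vertices of $K\setminus\{A\}$, which already use $21$ distinct colors, namely all colors of $K$ except $c(A)$; hence $c(C)\in\{c(A),\gamma\}$. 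Symmetrically $c(D)\in\{c(B),\gamma\}$. If neither diagonal were monochromatic we would have $c(C)\neq c(A)$ and $c(D)\neq c(B)$, forcing $c(C)=c(D)=\gamma$; but $d_{\mathcal{F}}(C,D)=3$, so $C$ and $D$ are adjacent and must receive distinct colors, a contradiction. Therefore $c(A)=c(C)$ or $c(B)=c(D)$, which is the claim.

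It remains to produce $K$, and this is where the real work lies. I would take $K$ to be the eight layer-$0$ vertices $(0,0,0),(1,0,0),(2,0,0),(3,0,0),(1,\pm 1,0),(2,\pm 1,0)$, the eight layer-$1$ vertices $(0.5,\pm 0.5,1),(1.5,\pm 0.5,1),(2.5,\pm 0.5,1),(1.5,\pm 1.5,1)$, and the six layer-$2$ vertices $(1,0,2),(2,0,2),(1,\pm 1,2),(2,\pm 1,2)$, for a total of $22$. A useful simplification is that proximity of the layer-$1$ and layer-$2$ members to $C$ and $D$ is automatic: since $C$ shares the in-layer coordinates of $B$ and $D$ shares those of $A$, the distance condition to $C$ (respectively $D$) coincides with the condition to $B$ (respectively $A$) that is already imposed by membership in the clique; one checks directly that among all $22$ vertices only $A$ is far from $C$ and only $B$ is far from $D$, both at distance exactly $4$. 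The main obstacle is the purely finite but delicate verification, via Proposition \ref{prodist}, that all $\binom{22}{2}$ pairs of $K$ are within distance $3$ — in particular the layer-$0$-to-layer-$2$ pairs, which are the tightest and attain distance exactly $3$ for several pairs such as $(0,0,0)$ and $(2,\pm 1,2)$ or $(1,\pm 1,0)$ and $(2,\mp 1,2)$. I would organize this check layer by layer and, as in Figure \ref{petitfrereveutgrandirtropvite}, record it in a picture rather than in prose, since that is the only genuinely laborious part of the proof.
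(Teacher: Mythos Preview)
Your proof is correct and follows essentially the same approach as the paper: your clique $K$ is exactly the set $B=T((1,0,0),(2,0,0),(1.5,-0.5,1),(1.5,0.5,1))$ restricted to $\mathcal{F}_{0,2}$, and your symmetric ``missing color $\gamma$'' argument is equivalent to the paper's asymmetric one. The one place you work harder than necessary is the $\binom{22}{2}$ pairwise distance check: the paper gets $\operatorname{diam} K\le 3$ for free by recognizing $K$ as a subset of a translate of $B_1$ and invoking Lemma~\ref{le2} (together with the fact that shortest paths between vertices in layers $0$--$2$ never leave those layers), so no case analysis is needed.
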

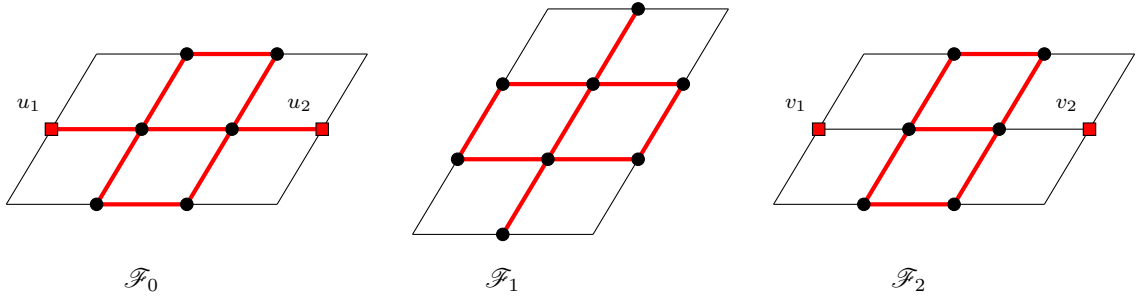
\begin{figure}[t]
\begin{center}
\begin{tikzpicture}[scale=1.2]

\draw (0-1,2.5/3-0.5) -- (2,2.5/3-0.5);
\draw (0-0.5,5/3-0.5) -- (3-0.5,5/3-0.5);
\draw (0.5-0.5,2) -- (3,2);
\draw (-1,2.5/3-0.5) -- (0,2);
\draw (0,2.5/3-0.5) -- (1,2);
\draw (1,2.5/3-0.5) -- (2,2);
\draw (2,2.5/3-0.5) -- (3,2);

\draw (0+3.5,0) -- (2+3.5,0);
\draw (0.5+3.5,2.5/3) -- (2.5+3.5,2.5/3);
\draw (1+3.5,5/3) -- (3+3.5,5/3);
\draw (1.5+3.5,2.5) -- (3.5+3.5,2.5);
\draw (0+3.5,0) -- (1.5+3.5,2.5);
\draw (1+3.5,0) -- (2.5+3.5,2.5);
\draw (2+3.5,0) -- (3.5+3.5,2.5);

\draw (0-1+8.5,2.5/3-0.5) -- (2+8.5,2.5/3-0.5);
\draw (0-0.5+8.5,5/3-0.5) -- (3-0.5+8.5,5/3-0.5);
\draw (0.5-0.5+8.5,2) -- (3+8.5,2);
\draw (-1+8.5,2.5/3-0.5) -- (0+8.5,2);
\draw (0+8.5,2.5/3-0.5) -- (1+8.5,2);
\draw (1+8.5,2.5/3-0.5) -- (2+8.5,2);
\draw (2+8.5,2.5/3-0.5) -- (3+8.5,2);

\draw[ultra thick, color=red] (0,2.5/3-0.5) -- (1,2);
\draw[ultra thick, color=red] (1,2.5/3-0.5) -- (2,2);
\draw[ultra thick, color=red] (1-1,2.5/3-0.5) -- (1,2.5/3-0.5);
\draw[ultra thick, color=red] (-0.5,5/3-0.5) -- (3-0.5,5/3-0.5);
\draw[ultra thick, color=red] (1.5-0.5,2) -- (2,2);

\draw[ultra thick, color=red]  (0.5+3.5,2.5/3) -- (1+3.5,5/3);
\draw[ultra thick, color=red]  (1+3.5,0) -- (2.5+3.5,2.5);
\draw[ultra thick, color=red]  (2.5+3.5,2.5/3) -- (3+3.5,5/3);
\draw[ultra thick, color=red]  (1.5/3+3.5,2.5/3) -- (2+1.5/3+3.5,2.5/3);
\draw[ultra thick, color=red]  (1+3.5,5/3) -- (3+3.5,5/3);

\draw[ultra thick, color=red] (0+8.5,2.5/3-0.5) -- (1+8.5,2);
\draw[ultra thick, color=red] (1+8.5,2.5/3-0.5) -- (2+8.5,2);
\draw[ultra thick, color=red] (1-1+8.5,2.5/3-0.5) -- (1+8.5,2.5/3-0.5);
\draw[ultra thick, color=red] (1-0.5+8.5,5/3-0.5) -- (2-0.5+8.5,5/3-0.5);
\draw[ultra thick, color=red] (1.5-0.5+8.5,2) -- (2+8.5,2);

\node at (0,2.5/3-0.5)[circle,draw=black,fill=black,scale=0.5]{};
\node at (1,2.5/3-0.5)[circle,draw=black,fill=black,scale=0.5]{};
\node at (-0.5,5/3-0.5)[regular polygon, regular polygon sides=4,draw=black,fill=red,scale=0.5]{};
\node at (0.5,5/3-0.5)[circle,draw=black,fill=black,scale=0.5]{};
\node at (1.5,5/3-0.5)[circle,draw=black,fill=black,scale=0.5]{};
\node at (2.5,5/3-0.5) [regular polygon, regular polygon sides=4,draw=black,fill=red,scale=0.5]{};
\node at (1,2)[circle,draw=black,fill=black,scale=0.5]{};
\node at (2,2)[circle,draw=black,fill=black,scale=0.5]{};

\node at (2.5+3.5,2.5)[circle,draw=black,fill=black,scale=0.5]{};
\node at (3+3.5,5/3)[circle,draw=black,fill=black,scale=0.5]{};
\node at (2.5+3.5,2.5/3)[circle,draw=black,fill=black,scale=0.5]{};
\node at (3+3.5-1,5/3)[circle,draw=black,fill=black,scale=0.5]{};
\node at (2.5+3.5-1,2.5/3)[circle,draw=black,fill=black,scale=0.5]{};
\node at (3+3.5-2,5/3)[circle,draw=black,fill=black,scale=0.5]{};
\node at (4.5+3.5-4,2.5/3)[circle,draw=black,fill=black,scale=0.5]{};
\node at (1+3.5,0)[circle,draw=black,fill=black,scale=0.5]{};
\node at (0+8.5,2.5/3-0.5)[circle,draw=black,fill=black,scale=0.5]{};
\node at (1+8.5,2.5/3-0.5)[circle,draw=black,fill=black,scale=0.5]{};
\node at (-0.5+8.5,5/3-0.5)[regular polygon, regular polygon sides=4,draw=black,fill=red,scale=0.5]{};
\node at (0.5+8.5,5/3-0.5)[circle,draw=black,fill=black,scale=0.5]{};
\node at (1.5+8.5,5/3-0.5)[circle,draw=black,fill=black,scale=0.5]{};
\node at (2.5+8.5,5/3-0.5)[regular polygon, regular polygon sides=4,draw=black,fill=red,scale=0.5]{};
\node at (1+8.5,2)[circle,draw=black,fill=black,scale=0.5]{};
\node at (2+8.5,2)[circle,draw=black,fill=black,scale=0.5]{};

\node at (0.95-1.7,0.6+2.5/3){\footnotesize{$u_{1}$}};
\node at (3.95-1.7,0.6+2.5/3){\footnotesize{$u_{2}$}};

\node at (0.95+8.5-1.7,0.6+2.5/3){\footnotesize{$v_{1}$}};
\node at (3.95+8.5-1.7,0.6+2.5/3){\footnotesize{$v_{2}$}};

\node at (0.5,-0.5){$\mathcal{F}_{0}$};
\node at (4.5,-0.5){$\mathcal{F}_{1}$};
\node at (9,-0.5){$\mathcal{F}_{2}$};

\end{tikzpicture}
\end{center}
\caption{The vertices in $B$ (circle) and the vertices $u_1$, $u_2$, $v_1$ and $v_2$ (squares; thick lines: edges between vertices in $B$).}
\label{blackninja}
\end{figure}

\begin{proof}
Let $c$ be a $23$-coloring of $\mathcal{F}_{0,2}$ and let $u_1=(i,j,0)$, $u_2=(i,j,2)$, $v_1=(i+3,j,0)$ and $v_2=(i+3,j,2)$. Suppose that $c(u_1)\neq c(v_2)$.
Let $B=T((i+1,j,0),$ $(i+2,j,0),$ $(i+1.5,j-0.5,1),$ $(i+1.5,j+0.5,1))$. 
Figure \ref{blackninja} illustrates the set $B$ and the vertices $u_1$, $u_2$, $v_1$ and $v_2$. 
Also, note that $B$ induces a subgraph of diameter $3$ in $\mathcal{F}_{0,2}$ and that $|B|=22$. For this reason, $22$ colors are required to color $B$. Since $v_2$ is at distance at most $3$ of every vertex of $B\setminus \{u_1 \}$,  23 colors are required to color $B\cup  \{v_2 \}$.
Also, note that $u_2$ is at distance at most $3$ of every vertex of $B\setminus \{v_1 \}$ and that $d_{\mathcal{F}_{0,2}}(u_2,v_2)=3$. Since $u_2$ is at distance at most $3$ of every vertex of $B\setminus \{v_1 \}\cup \{v_2 \}$, the only possibility is that $c(u_2)=c(v_1)$ which conclude our proof.
The proof is totally similar in the case $c(v_1)\neq c(u_2)$.
\end{proof}

The following Lemma is an intermediate step in order to prove Lemma \ref{le2667} that will be used several times in the proof of Theorem~\ref{sharknado}.
\begin{lemma}\label{le2665}
If there exists a $23$-coloring $c$ of $\mathcal{F}_{0,2}^3$ and integers $i$ and $j$ such that $c((i,j,0))=c((i+3,j,2))$ and $c((i+3,j,0))=c((i,j,2))$, then $c((i,j+1,0))=c((i+3,j+1,2))$ and $c((i+3,j+1,0))=c((i,j+1,2))$.
\end{lemma}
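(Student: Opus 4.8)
The plan is to reduce the two required equalities to a single one by a symmetry argument, and then to establish that single equality by contradiction, reusing the clique–counting mechanism of Lemma~\ref{le2661} and feeding in the hypothesis at row $j$.

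First I would exploit the reflection $\rho(x,y,z)=(2i+3-x,\,y,\,z)$, which reflects the $i$-coordinate across $i+1.5$. This is an automorphism of $\mathcal{F}_{0,2}$: it fixes every layer and exchanges the slices $i'=i$ and $i'=i+3$, and it preserves all the coordinate conditions defining adjacency, hence all distances. For any $23$-coloring $c$, the map $c\circ\rho$ is again a $23$-coloring. Since $\rho$ sends $(i,j,0)\leftrightarrow(i+3,j,0)$ and $(i,j,2)\leftrightarrow(i+3,j,2)$, it \emph{interchanges} the two hypothesis equalities $c((i,j,0))=c((i+3,j,2))$ and $c((i+3,j,0))=c((i,j,2))$, so the hypothesis is invariant under replacing $c$ by $c\circ\rho$; likewise $\rho$ interchanges the two target equalities at $(i,j+1)$. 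Consequently it suffices to prove the single statement that $c((i,j+1,0))=c((i+3,j+1,2))$ holds for every such $c$: applying that statement to $c\circ\rho$ then automatically yields the companion equality $c((i+3,j+1,0))=c((i,j+1,2))$.

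Next I would prove $c((i,j+1,0))=c((i+3,j+1,2))$ by contradiction, assuming it fails. Consider the set $B^{+}=T((i+1,j+1,0),(i+2,j+1,0),(i+1.5,j+0.5,1),(i+1.5,j+1.5,1))$, which is exactly the configuration of Lemma~\ref{le2661} shifted one unit in $j$; it induces a clique of $\mathcal{F}_{0,2}^{3}$ with $|B^{+}|=22$, its extreme layer-$0$ vertices are $(i,j+1,0)$ and $(i+3,j+1,0)$, and $(i,j+1,2),(i+3,j+1,2)$ are its two external layer-$2$ vertices. Rerunning the counting of Lemma~\ref{le2661}: the $22$ vertices of $B^{+}$ carry $22$ distinct colors, leaving a unique missing color $\mu$; since $(i+3,j+1,2)$ is within distance $3$ of every vertex of $B^{+}$ except $(i,j+1,0)$, the assumed inequality forces $c((i+3,j+1,2))=\mu$, and the same argument simultaneously gives $c((i+3,j+1,0))=c((i,j+1,2))$.

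The crux is then to contradict $c((i+3,j+1,2))=\mu$ using the hypothesis at row $j$. I would compare $B^{+}$ with the row-$j$ clique $B=T((i+1,j,0),(i+2,j,0),(i+1.5,j-0.5,1),(i+1.5,j+0.5,1))$ of Lemma~\ref{le2661}. A direct check (using Proposition~\ref{prodist}) shows that $B$ and $B^{+}$ share exactly $13$ vertices, so their color palettes can differ only through the placement of their respective missing colors; and the hypothesis, which forces the extreme layer-$0$ and layer-$2$ colors of $B$ to coincide crosswise, rigidly ties the missing color $\mu$ of $B^{+}$ to the color of an explicit vertex lying in the common distance-$3$ neighbourhood of $(i+3,j+1,2)$. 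Producing that vertex colored $\mu$ and adjacent to $(i+3,j+1,2)$ gives the contradiction, after which the reflection step of the first paragraph delivers both conclusions.

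I expect this overlap/missing-color bookkeeping to be the hard part. The symmetry reduction and the reuse of Lemma~\ref{le2661}'s counting are clean and essentially mechanical, whereas pinning $\mu$ to a concrete vertex adjacent to $(i+3,j+1,2)$ requires carefully tracking which shared clique vertices realize which colors and distinguishing the case where $B$ and $B^{+}$ miss the same color from the case where they do not; this is precisely the place where the hypothesis, and not Lemma~\ref{le2661} alone, must be invoked.
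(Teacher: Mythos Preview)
Your symmetry reduction via the reflection $\rho$ is correct and is a nice simplification the paper does not use. Your reuse of the Lemma~\ref{le2661} counting to conclude that $c((i+3,j+1,2))=\mu$ (the unique color absent from $B^{+}$) is also fine.

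The gap is in the ``crux'' step. Your proposed mechanism---comparing the $13$-vertex overlap of $B$ and $B^{+}$ and doing missing-color bookkeeping---does not by itself force $\mu$ to occur near $v'_2=(i+3,j+1,2)$. Concretely, the overlap argument only tells you that if $\mu\neq\nu$ (the color missing from $B$) then some vertex of $B\setminus B^{+}$ carries $\mu$; but five of those nine vertices, for instance $(i,j,0)$, $(i+1,j-1,0)$, $(i+0.5,j-0.5,1)$, $(i+1.5,j-1.5,1)$, $(i+1,j-1,2)$, lie at distance $4$ from $v'_2$, so no contradiction follows. And you have not excluded $\mu=\nu$.

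What actually makes the argument work is a different observation, which is the paper's fact~(iii): with $a=c(u_1)=c(v_2)$ and $b=c(v_1)=c(u_2)$, every vertex of $B^{+}\setminus\{(i+1.5,j+2.5,1)\}$ is within distance $3$ of one of $\{u_1,v_2\}$ \emph{and} of one of $\{v_1,u_2\}$, so neither $a$ nor $b$ occurs there. Since $a\neq b$ and only the single vertex $(i+1.5,j+2.5,1)$ is exempt, one of $a,b$ is absent from all of $B^{+}$ and hence equals $\mu$. Then either $v_2$ or $v_1$ (at distances $1$ and $2$ from $v'_2$ respectively) already carries $\mu$, giving the contradiction you were aiming for. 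So your target (``find $\mu$ near $v'_2$'') is reachable, but via this distance check rather than via the $B/B^{+}$ overlap; the proposal as written does not contain this step.
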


\begin{proof}
Let $u_1=(i,j,0)$, $u_2=(i,j,2)$, $v_1=(i+3,j,0)$, $v_2=(i+3,j,2)$, $u'_1=(i,j+1,0)$, $u'_2=(i,j+1,2)$, $v'_1=(i+3,j+1,0)$ and $v'_2=(i+3,j+1,2)$.
Let $c$ be a 23-coloring of $\mathcal{F}_{0,2}^3$ such that $c(u_1)=c(v_2)$ and $c(u_2)=c(v_1)$.
Suppose by contradiction that $c(u'_1)\neq c(v'_2)$ or $c(v'_1)\neq c(u'_2)$.
Let $B=T((i+1,j+1,0),$ $(i+2,j+1,0),$ $(i+1.5,j+0.5,1),$ $(i+1.5,j+1.5,1))$.
Note that $B$ induces in $\mathcal{F}_{0,2}$ a subgraph of diameter $3$ containing $22$ vertices. 
The following is true:
\begin{itemize}
\item[i)] $v'_2$ is at distance at most $3$ from any vertex of $B\setminus\{ u'_1\}$; 
\item[ii)] $u'_2$ is at distance at most $3$ from any vertex of $B\setminus\{ v'_1\}$;
\item[iii)] every vertex of $(B\setminus\{(i+1.5,j+2.5,1)\})\cup\{u'_2,v'_2\}$ is at distance at most $3$ from either $u_1$ or $v_2$ and at distance at most $3$ from either $v_1$ or $u_2$.
\end{itemize}
Figure \ref{transmorpher} illustrates why iii) holds.
Note that $B\cup\{u'_2,v'_2\}$ contains, by i) and ii), vertices of $23$ different colors.
Let $a=c(u_1)=c(v_2)$ and $b=c(v_1)=c(u_2)$.
\begin{figure}[t]
\begin{center}
\begin{tikzpicture}[scale=1.2]

\draw (0-1,2.5/3-0.5) -- (2,2.5/3-0.5);
\draw (0-0.5,5/3-0.5) -- (3-0.5,5/3-0.5);
\draw (0.5-0.5,2) -- (3,2);
\draw (-1,2.5/3-0.5) -- (0,2);
\draw (0,2.5/3-0.5) -- (1,2);
\draw (1,2.5/3-0.5) -- (2,2);
\draw (2,2.5/3-0.5) -- (3,2);

\draw (0+3.5,0) -- (2+3.5,0);
\draw (0.5+3.5,2.5/3) -- (2.5+3.5,2.5/3);
\draw (1+3.5,5/3) -- (3+3.5,5/3);
\draw (1.5+3.5,2.5) -- (3.5+3.5,2.5);
\draw (0+3.5,0) -- (1.5+3.5,2.5);
\draw (1+3.5,0) -- (2.5+3.5,2.5);
\draw (2+3.5,0) -- (3.5+3.5,2.5);

\draw (0-1+8.5,2.5/3-0.5) -- (2+8.5,2.5/3-0.5);
\draw (0-0.5+8.5,5/3-0.5) -- (3-0.5+8.5,5/3-0.5);
\draw (0.5-0.5+8.5,2) -- (3+8.5,2);
\draw (-1+8.5,2.5/3-0.5) -- (0+8.5,2);
\draw (0+8.5,2.5/3-0.5) -- (1+8.5,2);
\draw (1+8.5,2.5/3-0.5) -- (2+8.5,2);
\draw (2+8.5,2.5/3-0.5) -- (3+8.5,2);

\draw[ultra thick, color=red] (0,2.5/3-0.5) -- (1,2);
\draw[ultra thick, color=red] (1,2.5/3-0.5) -- (2,2);
\draw[ultra thick, color=red] (1-1,2.5/3-0.5) -- (1,2.5/3-0.5);
\draw[ultra thick, color=red] (-0.5,5/3-0.5) -- (3-0.5,5/3-0.5);
\draw[ultra thick, color=red] (1.5-0.5,2) -- (2,2);

\draw[ultra thick, color=red]  (0.5+3.5,2.5/3) -- (1+3.5,5/3);
\draw[ultra thick, color=red]  (1+3.5,0) -- (2.5+3.5,2.5);
\draw[ultra thick, color=red]  (2.5+3.5,2.5/3) -- (3+3.5,5/3);
\draw[ultra thick, color=red]  (1.5/3+3.5,2.5/3) -- (2+1.5/3+3.5,2.5/3);
\draw[ultra thick, color=red]  (1+3.5,5/3) -- (3+3.5,5/3);

\draw[ultra thick, color=red] (0+8.5,2.5/3-0.5) -- (1+8.5,2);
\draw[ultra thick, color=red] (1+8.5,2.5/3-0.5) -- (2+8.5,2);
\draw[ultra thick, color=red] (1-1+8.5,2.5/3-0.5) -- (1+8.5,2.5/3-0.5);
\draw[ultra thick, color=red] (-0.5+8.5,5/3-0.5) -- (3-0.5+8.5,5/3-0.5);
\draw[ultra thick, color=red] (1.5-0.5+8.5,2) -- (2+8.5,2);

\node at (0-1,2.5/3-0.5)[regular polygon, regular polygon sides=4,draw=black,fill=red,scale=0.5]{};
\node at (3-1,2.5/3-0.5)[regular polygon, regular polygon sides=4,draw=black,fill=red,scale=0.5]{};
\node at (0,2.5/3-0.5)[circle,draw=black,fill=black,scale=0.5]{};
\node at (1,2.5/3-0.5)[circle,draw=black,fill=black,scale=0.5]{};
\node at (-0.5,5/3-0.5)[circle,draw=black,fill=black,scale=0.5]{};
\node at (0.5,5/3-0.5)[circle,draw=black,fill=black,scale=0.5]{};
\node at (1.5,5/3-0.5)[circle,draw=black,fill=black,scale=0.5]{};
\node at (2.5,5/3-0.5) [regular polygon, regular polygon sides=3,draw=black,fill=green,scale=0.35]{};
\node at (1,2)[circle,draw=black,fill=black,scale=0.5]{};
\node at (2,2)[regular polygon, regular polygon sides=3,draw=black,fill=green,scale=0.35]{};

\node at (3+3.5,5/3)[regular polygon, regular polygon sides=3,draw=black,fill=green,scale=0.35]{};
\node at (2.5+3.5,2.5/3)[regular polygon, regular polygon sides=3,draw=black,fill=green,scale=0.35]{};
\node at (3+3.5-1,5/3)[circle,draw=black,fill=black,scale=0.5]{};
\node at (2.5+3.5-1,2.5/3)[circle,draw=black,fill=black,scale=0.5]{};
\node at (3+3.5-2,5/3)[circle,draw=black,fill=black,scale=0.5]{};
\node at (4.5+3.5-4,2.5/3)[circle,draw=black,fill=black,scale=0.5]{};
\node at (1+3.5,0)[circle,draw=black,fill=black,scale=0.5]{};

\node at (8.5-1,2.5/3-0.5)[regular polygon, regular polygon sides=4,draw=black,fill=red,scale=0.5]{};
\node at (3+7.5,2.5/3-0.5)[regular polygon, regular polygon sides=4,draw=black,fill=red,scale=0.5]{};
\node at (0+8.5,2.5/3-0.5)[regular polygon, regular polygon sides=3,draw=black,fill=green,scale=0.35]{};
\node at (1+8.5,2.5/3-0.5)[regular polygon, regular polygon sides=3,draw=black,fill=green,scale=0.35]{};
\node at (-0.5+8.5,5/3-0.5)[circle,draw=black,fill=black,scale=0.5]{};
\node at (0.5+8.5,5/3-0.5)[regular polygon, regular polygon sides=3,draw=black,fill=green,scale=0.35]{};
\node at (1.5+8.5,5/3-0.5)[regular polygon, regular polygon sides=3,draw=black,fill=green,scale=0.35]{};
\node at (2.5+8.5,5/3-0.5)[regular polygon, regular polygon sides=3,draw=black,fill=green,scale=0.35]{};
\node at (1+8.5,2)[circle,draw=black,fill=black,scale=0.5]{};
\node at (2+8.5,2)[regular polygon, regular polygon sides=3,draw=black,fill=green,scale=0.35]{};

\node at (0.95-1.7,0.6+2.5/3){\footnotesize{$u'_{1}$}};
\node at (3.95-1.7,0.6+2.5/3){\footnotesize{$u'_{2}$}};

\node at (0.95+8.5-1.7,0.6+2.5/3){\footnotesize{$v'_{1}$}};
\node at (3.95+8.5-1.7,0.6+2.5/3){\footnotesize{$v'_{2}$}};
\node at (6,2.7){\tiny{$(1.5,2.5,1)$}};

\node at (0.95-2.2,0.6){\footnotesize{$u_{1}$}};
\node at (3.95-2.2,0.6){\footnotesize{$u_{2}$}};

\node at (0.95+8-1.7,0.6){\footnotesize{$v_{1}$}};
\node at (3.95+8-1.7,0.6){\footnotesize{$v_{2}$}};

\node at (0.5,-0.5){$\mathcal{F}_{0}$};
\node at (4.5,-0.5){$\mathcal{F}_{1}$};
\node at (9,-0.5){$\mathcal{F}_{2}$};

\end{tikzpicture}
\end{center}
\caption{The vertices in $B\cup\{u'_2,v'_2\}$ at distance at most $3$ from $u_1$ (circles), the vertices in $B\cup\{u'_2,v'_2\}$ at distance at most $3$ from $v_2$ (triangles) and the vertices $u_1$, $u_2$, $v_1$ and $v_2$ (squares; thick lines: edges between vertices in $B\cup\{u'_2,v'_2\}$).}
\label{transmorpher}
\end{figure}
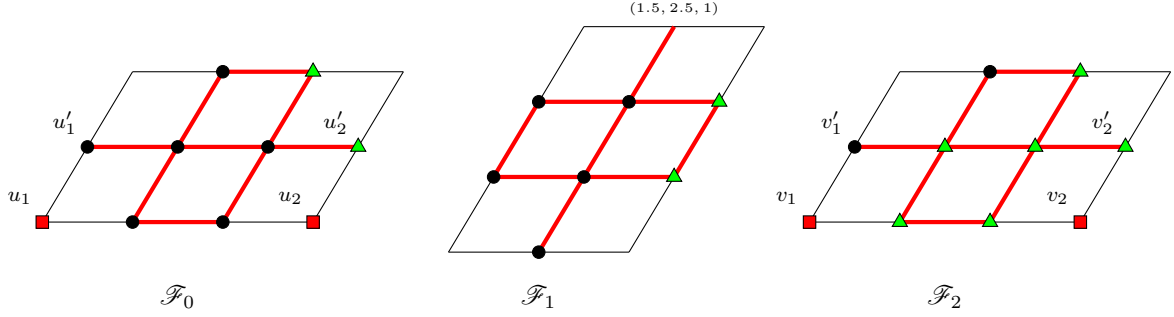

By iii), both the colors $a$ and $b$ are not used for vertices of $(B\setminus\{(i+1.5,j+2.5,1)\})\cup\{u'_2,v'_2\}$. Moreover, since $c((i+1.5,j+2.5,1))\neq a$ or $c((i+1.5,j+2.5,1))\neq b$ and by iii), the color $a$ or $b$ is different than each of the 23 other colors and we have a contradiction with the fact that $\mathcal{F}_{0,2}^3$ is $23$-colorable.
\end{proof}
Note that the following Lemma is really useful since it implies that if $B=T((i,j,0),$ $(i+1,j,0),$ $(i+0.5,j-0.5,1),$ $(i+0.5,j+0.5,1))$ and there exists a $23$-coloring $c$ of $\mathcal{F}_{0,2}^3$, then $B\cup \{(i-1,j,2), (i+2,j,2)\}$ contains $23$ different colors. This is due to the fact that $(i-1,j,2)$ is at distance at most $3$ of every vertex of $B\setminus \{(i+2,j,0)\}$ and that $(i+2,j,2)$ is at distance at most $3$ of every vertex of $B\setminus \{(i-1,j,0)\}$.
\begin{lemma}\label{le2667}
There exists no $23$-coloring $c$ of $\mathcal{F}_{0,2}^3$ for which there exists integers $i$ and $j$ such that $c((i,j,0))=c((i+3,j,2))$ and $c((i+3,j,0))=c((i,j,2))$.
\end{lemma}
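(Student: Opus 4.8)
The plan is to argue by contradiction. Suppose some $23$-coloring $c$ of $\mathcal{F}_{0,2}^3$ admits integers $i,j$ with $c((i,j,0))=c((i+3,j,2))$ and $c((i+3,j,0))=c((i,j,2))$. After a translation I may assume $(i,j)=(0,0)$, and I write $a=c((0,0,0))=c((3,0,2))$ and $b=c((3,0,0))=c((0,0,2))$; here $a\neq b$, since $(0,0,0)$ and $(3,0,0)$ are at distance $3$ by Proposition \ref{prodist}. The engine of the argument is the clique-counting principle already used above: a set $T(u_1,u_2,u_3,u_4)$ built on a $K_4$ induces in $\mathcal{F}_{0,2}$ a diameter-$3$ subgraph on $22$ vertices, i.e. a $22$-clique of $\mathcal{F}_{0,2}^3$ that already consumes $22$ of the $23$ colors, so any vertex lying within distance $3$ of all but one of its members is forced either onto the color of that exceptional member or onto the unique missing color.

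First I would propagate the configuration. Feeding it into Lemma \ref{le2665} moves it from $(0,t)$ to $(0,t+1)$; combined with the two automorphisms of $\mathcal{F}_{0,2}$, the reflection $k\mapsto 2-k$ (which interchanges the two defining equalities, hence swaps $a$ and $b$) and the reflection $j\mapsto -j$, this yields the configuration at every $(0,t)$, $t\in\mathbb{Z}$, so the whole column carries the coincidences $c((0,t,0))=c((3,t,2))$ and $c((3,t,0))=c((0,t,2))$. I would also record the immediate local fact: for $B=T((1,0,0),(2,0,0),(1.5,-0.5,1),(1.5,0.5,1))$ the vertices $(0,0,2)$ and $(3,0,2)$ each lie within distance $3$ of all of $B$ but one member, and by hypothesis both reuse colors of $B$ (namely $b$ and $a$), so $B\cup\{(0,0,2),(3,0,2)\}$ uses only $22$ colors.

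With the column in hand I would apply the clique-counting principle to a $22$-clique placed just above the configuration, $B_{(0,1)}=T((1,1,0),(2,1,0),(1.5,0.5,1),(1.5,1.5,1))$. Exactly as in the proof of Lemma \ref{le2665}, every vertex of $B_{(0,1)}$ except its top ``pole'' $(1.5,2.5,1)$, together with the two reused layer-$2$ vertices, lies within distance $3$ of one of the $a$-vertices and of one of the $b$-vertices, hence avoids both $a$ and $b$; since the clique needs $22$ colors and only $23$ are available, the pole is forced so that $c((1.5,2.5,1))\in\{a,b\}$ (otherwise a $24$th color appears). Running this together with its mirror images under the two reflections along the propagated column forces a whole family of layer-$1$ vertices into the two-element palettes of neighboring configurations. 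The main obstacle, and the real content of the proof, is the final bookkeeping: selecting the concrete bounded gadget in which these forced two-element memberships, the enforced column coincidences, and the properness of a single distance-$3$ clique become jointly unsatisfiable with only $23$ colors — producing either a $24$th color or two equal colors at distance at most $3$, in the spirit of the clique-pair contradiction used for $\chi(\mathcal{F}^3)\ge 29$. The propagation and the symmetry reductions are routine given Lemma \ref{le2665}; all the difficulty is concentrated in the distance computations, carried out through Proposition \ref{prodist}, that certify this overload.
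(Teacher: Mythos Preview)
Your setup and the propagation via Lemma~\ref{le2665} are correct, and you have isolated the right $22$-clique $B_{(0,1)}$. But the proposal is genuinely incomplete: you explicitly defer the contradiction to an unspecified ``final bookkeeping'' step, and concede that ``all the difficulty is concentrated'' there. In fact no such residual difficulty exists once you add one idea that the paper uses and you do not.

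The missing idea is to use \emph{two} rows of the propagated configuration simultaneously, not just one. You only ever exploit the pair $(a,b)$ coming from row $j=0$, and from it you squeeze the top pole $(1.5,2.5,1)$ of $B_{(0,1)}$ into $\{a,b\}$. The paper instead also invokes row $j=2$: by the propagation you already carried out, $c((0,2,0))=c((3,2,2))=:a'$ and $c((3,2,0))=c((0,2,2))=:b'$. The four vertices $(0,0,0),(0,0,2),(0,2,0),(0,2,2)$ are pairwise at distance at most $3$, so $a,b,a',b'$ are four distinct colors. The clique $B_{(0,1)}$ sits \emph{between} these two rows: every vertex except its top pole is within distance $3$ of a row-$0$ carrier of $a$ and of a row-$0$ carrier of $b$; symmetrically, every vertex except its bottom pole $(1.5,-0.5,1)$ is within distance $3$ of the row-$2$ carriers of $a'$ and of $b'$. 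Hence at least one of $\{a,b\}$ and at least one of $\{a',b'\}$ is absent from $B_{(0,1)}$; since $B_{(0,1)}$ already needs $22$ colors and these are two further distinct colors, we reach $24$ and the contradiction is immediate. No further gadget, mirror images, or ``overload'' argument is needed.
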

\begin{proof}
Let $u_1=(i,j,0)$, $u_2=(i,j,2)$, $v_1=(i+3,j,0)$ and $v_2=(i+3,j,2)$.
Suppose, by contradiction, that there is a $23$-coloring $c$ of $\mathcal{F}_{0,2}^3$ such that $c(u_1)= c(v_2)$ and $c(v_1)= c(u_2)$.
By Lemma \ref{le2665}, we have $c((i,j+1,0))= c((i+3,j+1,2))$ and $c((i+3,j+1,0))= c((i,j+1,2))$ and also $c((i,j+2,0))= c((i+3,j+2,2))$ and $c((i+3,j+2,0))= c((i,j+2,2))$. 
Let $B=T((i+1,j+1,0),$ $(i+2,j+1,0),$ $(i+1.5,j+0.5,1),$ $(i+1.5,j+1.5,1))$.

Note that $B$ induces in $\mathcal{F}_{0,2}$ a subgraph of diameter $3$ containing $22$ vertices. 
The following is true :
\begin{itemize}
\item[i)] every vertex of $B\setminus\{(i+1.5,j-2.5,1)\}$ is at distance at most $3$ from either $u_1$ or $v_2$ and at most $3$ from either $v_1$ or $u_2$;
\item[ii)] every vertex of $B\setminus\{(i+1.5,j+0.5,1)\}$ is at distance at most $3$ from either $(i,j+2,0)$ or $(i+3,j+2,2)$ and at most $3$ from either $(i+3,j+2,0)$ or $(i,j+2,2)$.
\end{itemize}

Let $a=c(u_1)$, $b=c(u_2)$, $a'=c((i,j+2,0))$ and $b'=((i,j+2,2))$.
Since $u_1$, $u_2$, $(i,j+2,0)$ and $(i,j,2)$ are pairwise at distance at most $3$, the colors $a$, $b$, $a'$, $b'$ are all different.
By i), either $a$ or $b$ is a color not used for vertices in $B$. Also, by ii), either $a'$ or $b'$ is a color not used for vertices in $B$. Therefore, since $B$ contains vertices of $22$ colors, we have to use two other colors. Consequently, we have a contradiction with the fact that $\mathcal{F}_{0,2}^3$ is $23$-colorable.

\end{proof}
We finish this paper by determining the exact value of the chromatic number of $\mathcal{F}_{0,2}^3$.
Note that the exact value of the chromatic number of $\mathcal{F}^3$ remains undetermined. 
\begin{theorem}\label{sharknado}
We have $\chi(\mathcal{F}_{0,2}^3)\ge 24$.
\end{theorem}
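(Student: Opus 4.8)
The plan is to show that $\mathcal{F}_{0,2}^3$ has no proper $23$-coloring; together with the upper bound $\chi(\mathcal{F}_{0,2}^3)\le 3\lceil 16/2\rceil=24$ this forces $\chi(\mathcal{F}_{0,2}^3)=24$. I would argue by contradiction, fixing a hypothetical $23$-coloring $c$.

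First I would package Lemmas \ref{le2661} and \ref{le2667} into a clean dichotomy. For integers $(i,j)$ let $\alpha(i,j)$ denote the equality $c((i,j,0))=c((i+3,j,2))$ and $\beta(i,j)$ the equality $c((i+3,j,0))=c((i,j,2))$. Lemma \ref{le2661} guarantees that at least one of $\alpha(i,j),\beta(i,j)$ holds, while Lemma \ref{le2667} forbids both at the same $(i,j)$; hence exactly one holds, and I record the label $L(i,j)\in\{\alpha,\beta\}$ at every position.

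Second, I would turn each label into the location of a forced repeated color. The set $B(i,j)=T((i+1,j,0),(i+2,j,0),(i+1.5,j-0.5,1),(i+1.5,j+0.5,1))$ is, as in the discussion preceding Lemma \ref{le2667}, a clique of $\mathcal{F}_{0,2}^3$ on $22$ vertices, so it uses exactly $22$ colors and omits a single color $m(i,j)$. The vertex $(i,j,2)$ lies at distance at most $3$ from every vertex of $B(i,j)$ except $(i+3,j,0)$, and $(i+3,j,2)$ lies at distance at most $3$ from every vertex of $B(i,j)$ except $(i,j,0)$. A one-line counting argument then yields: if $L(i,j)=\alpha$ (so $\beta(i,j)$ fails) then $c((i,j,2))=m(i,j)$, and if $L(i,j)=\beta$ then $c((i+3,j,2))=m(i,j)$. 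Thus the missing color of each local clique is pinned to a prescribed layer-$2$ vertex determined by the label.

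Finally I would play these forced equalities against one another along a row and across two consecutive rows. The cliques $B(i,j)$ and $B(i-3,j)$ share the external vertex $(i,j,2)$ as a label endpoint, while the layer-$2$ adjacencies force $m(i,j)\ne m(i,j\pm1)$ whenever the corresponding labels agree; since consecutive cliques overlap in all but a handful of vertices, their omitted colors are tightly correlated. I expect the main obstacle to be precisely this last step: converting the purely local ``exactly one of $\alpha,\beta$'' dichotomy into a global impossibility. The target is to exhibit, for some $(i,j)$, either two distinct colors forced to be omitted by a single $22$-clique (impossible with only $23$ colors) or two equal colors forced onto layer-$2$ vertices at distance at most $3$ (violating properness). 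Carrying this out cleanly may require one further clique configuration, or invoking the $i\leftrightarrow j$ symmetric analogues of Lemmas \ref{le2661} and \ref{le2667} to eliminate the remaining label patterns, all in the same counting spirit as the proof of Lemma \ref{le2667}.
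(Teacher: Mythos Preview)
Your first two steps are sound: packaging Lemmas~\ref{le2661} and~\ref{le2667} into the dichotomy ``exactly one of $\alpha(i,j),\beta(i,j)$ holds'' is correct, and your identification of the missing color $m(i,j)$ with a specific layer-$2$ vertex is exactly the mechanism the paper exploits (it matches the distance facts in the proof of Lemma~\ref{le2661}).

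The gap is entirely in your third paragraph. You do not actually derive a contradiction; you describe a hoped-for shape of one (``two distinct colors omitted by a single $22$-clique'' or ``equal colors on layer-$2$ vertices at distance $\le 3$'') and then concede that carrying it out ``may require one further clique configuration, or invoking the $i\leftrightarrow j$ symmetric analogues.'' That concession is accurate: the purely horizontal information you have assembled does not suffice. Knowing $L(i,j)$ along a row only tells you which of $(i,j,2)$ or $(i+3,j,2)$ carries $m(i,j)$, and the overlap structure between $B(i,j)$ and $B(i\pm 3,j)$ or $B(i,j\pm 1)$ does not by itself force an inconsistency; there is no argument in your sketch that rules out a consistent assignment of labels.

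The paper does exactly what you guess might be needed. It introduces, besides the horizontal pair $v_1=(3,0,0),v_2=(3,0,2)$, the vertical pair $x_1=(0,3,0),x_2=(0,3,2)$ (the $j$-direction analogue of Lemma~\ref{le2661}) and a second clique $B'$ translated by $-3$ in the $i$-direction, with $y_1=(-3,0,0),y_2=(-3,0,2)$. It then proves two forbidden patterns (Case~1: $c(u_1)=c(v_2)=c(x_2)$; Case~2: $c(u_1)=c(v_2)=c(y_2)$, and their mirror versions) by showing each forces a $24$th color on a specific $23$-vertex configuration. Finally it uses the horizontal and vertical dichotomies together to show that one of these forbidden patterns must occur. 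So your instinct about needing the $i\leftrightarrow j$ analogue and an extra clique was right, but the proposal as written stops short of an argument; to complete it you must actually build and analyze that second configuration rather than gesture at row/column correlations.
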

\begin{proof}

Suppose by contradiction that there exists a 23-coloring $c$ of $\mathcal{F}_{0,2}^3$.
Let $u_1=(0,0,0)$, $u_2=(0,0,2)$, $v_1=(3,0,0)$, $v_2=(3,0,2)$, $x_1=(0,3,0)$, $x_2=(0,3,2)$, $y_1=(-3,0,0)$ and $y_2=(-3,0,2)$. 
Let  $B=T((1,1,0),$ $(2,1,0),$ $(1.5,0.5,1),$ $(1.5,1.5,1))$ and let $B'=T((-2,1,0),$ $(-1,1,0),$ $(-1.5,0.5,1),$ $(-1.5,1.5,1))$.

We first show that the following two cases can not occur:
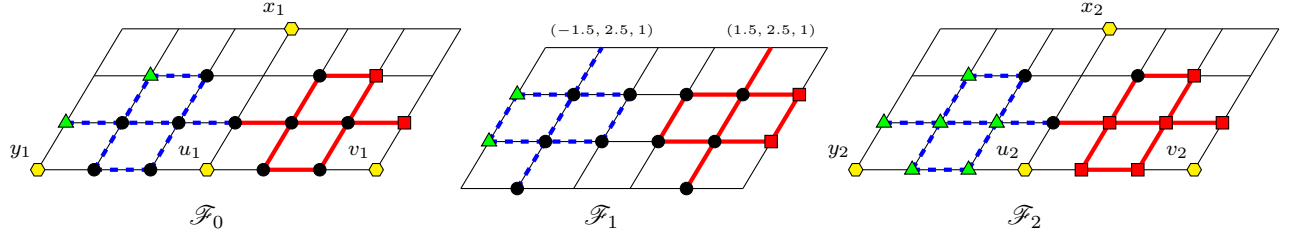
\begin{figure}[t]
\begin{center}
\begin{tikzpicture}[scale=0.75]

\draw (0-1,2.5/3-0.5) -- (5,2.5/3-0.5);
\draw (0-0.5,5/3-0.5) -- (6-0.5,5/3-0.5);
\draw (0.5-0.5,2) -- (6,2);
\draw (0.5,2+2.5/3) -- (6.5,2+2.5/3);
\draw (-1,2.5/3-0.5) -- (0.5,2+2.5/3);
\draw (0,2.5/3-0.5) -- (1.5,2+2.5/3);
\draw (1,2.5/3-0.5) -- (2.5,2+2.5/3);
\draw (2,2.5/3-0.5) -- (3.5,2+2.5/3);
\draw (3,2.5/3-0.5) -- (4.5,2+2.5/3);
\draw (4,2.5/3-0.5) -- (5.5,2+2.5/3);
\draw (5,2.5/3-0.5) -- (6.5,2+2.5/3);

\draw (0+6.5,0) -- (5+6.5,0);
\draw (0.5+6.5,2.5/3) -- (5.5+6.5,2.5/3);
\draw (1+6.5,5/3) -- (6+6.5,5/3);
\draw (1.5+6.5,2.5) -- (6.5+6.5,2.5);
\draw (0+6.5,0) -- (1.5+6.5,2.5);
\draw (1+6.5,0) -- (2.5+6.5,2.5);
\draw (2+6.5,0) -- (3.5+6.5,2.5);
\draw (3+6.5,0) -- (4.5+6.5,2.5);
\draw (4+6.5,0) -- (5.5+6.5,2.5);
\draw (5+6.5,0) -- (6.5+6.5,2.5);

\draw (0-1+14.5,2.5/3-0.5) -- (5+14.5,2.5/3-0.5);
\draw (0-0.5+14.5,5/3-0.5) -- (6-0.5+14.5,5/3-0.5);
\draw (0.5-0.5+14.5,2) -- (6+14.5,2);
\draw (0.5+14.5,2+2.5/3) -- (6.5+14.5,2+2.5/3);
\draw (-1+14.5,2.5/3-0.5) -- (0.5+14.5,2+2.5/3);
\draw (0+14.5,2.5/3-0.5) -- (1.5+14.5,2+2.5/3);
\draw (1+14.5,2.5/3-0.5) -- (2.5+14.5,2+2.5/3);
\draw (2+14.5,2.5/3-0.5) -- (3.5+14.5,2+2.5/3);
\draw (3+14.5,2.5/3-0.5) -- (4.5+14.5,2+2.5/3);
\draw (4+14.5,2.5/3-0.5) -- (5.5+14.5,2+2.5/3);
\draw (5+14.5,2.5/3-0.5) -- (6.5+14.5,2+2.5/3);

\draw[ultra thick, color=red] (0+3,2.5/3-0.5) -- (1+3,2);
\draw[ultra thick, color=red] (1+3,2.5/3-0.5) -- (2+3,2);
\draw[ultra thick, color=red] (1-1+3,2.5/3-0.5) -- (1+3,2.5/3-0.5);
\draw[ultra thick, color=red] (-0.5+3,5/3-0.5) -- (3-0.5+3,5/3-0.5);
\draw[ultra thick, color=red] (1.5-0.5+3,2) -- (2+3,2);
\draw[ultra thick, color=red]  (0.5+6.5+3,2.5/3) -- (1+6.5+3,5/3);
\draw[ultra thick, color=red]  (1+6.5+3,0) -- (2.5+6.5+3,2.5);
\draw[ultra thick, color=red]  (2.5+6.5+3,2.5/3) -- (3+6.5+3,5/3);
\draw[ultra thick, color=red]  (1.5/3+6.5+3,2.5/3) -- (2+1.5/3+6.5+3,2.5/3);
\draw[ultra thick, color=red]  (1+6.5+3,5/3) -- (3+6.5+3,5/3);
\draw[ultra thick, color=red] (0+14.5+3,2.5/3-0.5) -- (1+14.5+3,2);
\draw[ultra thick, color=red] (1+14.5+3,2.5/3-0.5) -- (2+14.5+3,2);
\draw[ultra thick, color=red] (1-1+14.5+3,2.5/3-0.5) -- (1+14.5+3,2.5/3-0.5);
\draw[ultra thick, color=red] (-0.5+14.5+3,5/3-0.5) -- (3-0.5+14.5+3,5/3-0.5);
\draw[ultra thick, color=red] (1.5-0.5+14.5+3,2) -- (2+14.5+3,2);

\draw[ultra thick, color=blue, dashed] (0,2.5/3-0.5) -- (1,2);
\draw[ultra thick, color=blue, dashed] (1,2.5/3-0.5) -- (2,2);
\draw[ultra thick, color=blue, dashed] (1-1,2.5/3-0.5) -- (1,2.5/3-0.5);
\draw[ultra thick, color=blue, dashed] (-0.5,5/3-0.5) -- (3-0.5,5/3-0.5);
\draw[ultra thick, color=blue, dashed] (1.5-0.5,2) -- (2,2);
\draw[ultra thick, color=blue, dashed]  (0.5+6.5,2.5/3) -- (1+6.5,5/3);
\draw[ultra thick, color=blue, dashed]  (1+6.5,0) -- (2.5+6.5,2.5);
\draw[ultra thick, color=blue, dashed]  (2.5+6.5,2.5/3) -- (3+6.5,5/3);
\draw[ultra thick, color=blue, dashed]  (1.5/3+6.5,2.5/3) -- (2+1.5/3+6.5,2.5/3);
\draw[ultra thick, color=blue, dashed]  (1+6.5,5/3) -- (3+6.5,5/3);
\draw[ultra thick, color=blue, dashed] (0+14.5,2.5/3-0.5) -- (1+14.5,2);
\draw[ultra thick, color=blue, dashed] (1+14.5,2.5/3-0.5) -- (2+14.5,2);
\draw[ultra thick, color=blue, dashed] (1-1+14.5,2.5/3-0.5) -- (1+14.5,2.5/3-0.5);
\draw[ultra thick, color=blue, dashed] (-0.5+14.5,5/3-0.5) -- (3-0.5+14.5,5/3-0.5);
\draw[ultra thick, color=blue, dashed] (1.5-0.5+14.5,2) -- (2+14.5,2);

\node at (0-1,2.5/3-0.5)[regular polygon, regular polygon sides=6,draw=black,fill=yellow,scale=0.5]{};
\node at (3-1,2.5/3-0.5)[regular polygon, regular polygon sides=6,draw=black,fill=yellow,scale=0.5]{};
\node at (6-1,2.5/3-0.5)[regular polygon, regular polygon sides=6,draw=black,fill=yellow,scale=0.5]{};
\node at (0,2.5/3-0.5)[circle,draw=black,fill=black,scale=0.5]{};
\node at (1,2.5/3-0.5)[circle,draw=black,fill=black,scale=0.5]{};
\node at (3,2.5/3-0.5)[circle,draw=black,fill=black,scale=0.5]{};
\node at (4,2.5/3-0.5)[circle,draw=black,fill=black,scale=0.5]{};
\node at (-0.5,5/3-0.5)[regular polygon, regular polygon sides=3,draw=black,fill=green,scale=0.35]{};
\node at (0.5,5/3-0.5)[circle,draw=black,fill=black,scale=0.5]{};
\node at (1.5,5/3-0.5)[circle,draw=black,fill=black,scale=0.5]{};
\node at (2.5,5/3-0.5) [circle,draw=black,fill=black,scale=0.5]{};
\node at (3.5,5/3-0.5)[circle,draw=black,fill=black,scale=0.5]{};
\node at (4.5,5/3-0.5) [circle,draw=black,fill=black,scale=0.5]{};
\node at (5.5,5/3-0.5)[regular polygon, regular polygon sides=4,draw=black,fill=red,scale=0.5]{};
\node at (1,2)[regular polygon, regular polygon sides=3,draw=black,fill=green,scale=0.35]{};
\node at (2,2)[circle,draw=black,fill=black,scale=0.5]{};
\node at (4,2)[circle,draw=black,fill=black,scale=0.5]{};
\node at (5,2)[regular polygon, regular polygon sides=4,draw=black,fill=red,scale=0.5]{};

\node at (3+6.5,5/3)[circle,draw=black,fill=black,scale=0.5]{};
\node at (2.5+6.5,2.5/3)[circle,draw=black,fill=black,scale=0.5]{};
\node at (3+6.5-1,5/3)[circle,draw=black,fill=black,scale=0.5]{};
\node at (2.5+6.5-1,2.5/3)[circle,draw=black,fill=black,scale=0.5]{};
\node at (3+6.5-2,5/3)[regular polygon, regular polygon sides=3,draw=black,fill=green,scale=0.35]{};
\node at (4.5+6.5-4,2.5/3)[regular polygon, regular polygon sides=3,draw=black,fill=green,scale=0.35]{};
\node at (1+6.5,0)[circle,draw=black,fill=black,scale=0.5]{};
\node at (3+9.5,5/3)[regular polygon, regular polygon sides=4,draw=black,fill=red,scale=0.5]{};
\node at (2.5+9.5,2.5/3)[regular polygon, regular polygon sides=4,draw=black,fill=red,scale=0.5]{};
\node at (3+9.5-1,5/3)[circle,draw=black,fill=black,scale=0.5]{};
\node at (2.5+9.5-1,2.5/3)[circle,draw=black,fill=black,scale=0.5]{};
\node at (3+9.5-2,5/3)[circle,draw=black,fill=black,scale=0.5]{};
\node at (4.5+9.5-4,2.5/3)[circle,draw=black,fill=black,scale=0.5]{};
\node at (1+9.5,0)[circle,draw=black,fill=black,scale=0.5]{};

\node at (0-1+14.5,2.5/3-0.5)[regular polygon, regular polygon sides=6,draw=black,fill=yellow,scale=0.5]{};
\node at (3-1+14.5,2.5/3-0.5)[regular polygon, regular polygon sides=6,draw=black,fill=yellow,scale=0.5]{};
\node at (6-1+14.5,2.5/3-0.5)[regular polygon, regular polygon sides=6,draw=black,fill=yellow,scale=0.5]{};
\node at (0+14.5,2.5/3-0.5)[regular polygon, regular polygon sides=3,draw=black,fill=green,scale=0.35]{};
\node at (1+14.5,2.5/3-0.5)[regular polygon, regular polygon sides=3,draw=black,fill=green,scale=0.35]{};
\node at (3+14.5,2.5/3-0.5)[regular polygon, regular polygon sides=4,draw=black,fill=red,scale=0.5]{};
\node at (4+14.5,2.5/3-0.5)[regular polygon, regular polygon sides=4,draw=black,fill=red,scale=0.5]{};
\node at (-0.5+14.5,5/3-0.5)[regular polygon, regular polygon sides=3,draw=black,fill=green,scale=0.35]{};
\node at (0.5+14.5,5/3-0.5)[regular polygon, regular polygon sides=3,draw=black,fill=green,scale=0.35]{};
\node at (1.5+14.5,5/3-0.5)[regular polygon, regular polygon sides=3,draw=black,fill=green,scale=0.35]{};
\node at (2.5+14.5,5/3-0.5) [circle,draw=black,fill=black,scale=0.5]{};
\node at (3.5+14.5,5/3-0.5)[regular polygon, regular polygon sides=4,draw=black,fill=red,scale=0.5]{};
\node at (4.5+14.5,5/3-0.5) [regular polygon, regular polygon sides=4,draw=black,fill=red,scale=0.5]{};
\node at (5.5+14.5,5/3-0.5)[regular polygon, regular polygon sides=4,draw=black,fill=red,scale=0.5]{};
\node at (1+14.5,2)[regular polygon, regular polygon sides=3,draw=black,fill=green,scale=0.35]{};
\node at (2+14.5,2)[circle,draw=black,fill=black,scale=0.5]{};
\node at (4+14.5,2)[circle,draw=black,fill=black,scale=0.5]{};
\node at (5+14.5,2)[regular polygon, regular polygon sides=4,draw=black,fill=red,scale=0.5]{};
\node at (3.5,2+2.5/3)[regular polygon, regular polygon sides=6,draw=black,fill=yellow,scale=0.5]{};
\node at (3.5+14.5,2+2.5/3)[regular polygon, regular polygon sides=6,draw=black,fill=yellow,scale=0.5]{};

\node at (0.9-2.2,0.65){\footnotesize{$y_{1}$}};
\node at (3.9-2.2,0.65){\footnotesize{$u_{1}$}};
\node at (6.9-2.2,0.65){\footnotesize{$v_{1}$}};
\node at (3.2,2+2.5/3+0.35){\footnotesize{$x_{1}$}};

\node at (0.9-2.2+14.5,0.65){\footnotesize{$y_{2}$}};
\node at (3.9-2.2+14.5,0.65){\footnotesize{$u_{2}$}};
\node at (6.9-2.2+14.5,0.65){\footnotesize{$v_{2}$}};
\node at (3.2+14.5,2+2.5/3+0.35){\footnotesize{$x_{2}$}};

\node at (2.5+6.5,2.8){\tiny{$(-1.5,2.5,1)$}};
\node at (5.5+6.5,2.8){\tiny{$(1.5,2.5,1)$}};

\node at (2,-0.5){$\mathcal{F}_{0}$};
\node at (16.5,-0.5){$\mathcal{F}_{2}$};
\node at (9,-0.5){$\mathcal{F}_{1}$};

\end{tikzpicture}
\end{center}
\caption{The vertices in $B\cup B'\cup \{(-3,1,2),$ $(0,1,2),$ $(3,1,2)\} $ at distance at most $3$ from $u_1$ (circles), $v_2$ (squares) or $y_2$ (triangles) and the vertices $u_1$, $u_2$, $v_1$, $v_2$, $x_1$, $x_2$, $y_1$ and $y_2$ (hexagons; thick lines: edges between vertices in $B\cup \{(0,1,2), (3,1,2)\}$; dashed lines: edges between vertices in $B'\cup \{(-3,1,2),(0,1,2)\}$).}
\label{twoheadedshark}
\end{figure}

\begin{description}
\item[Case 1: $c(u_1)=c(v_2)=c(x_2)$ or $c(u_2)=c(v_1)=c(x_1)$.]
By Lemma~\ref{le2667}, since $c((0,1,0))\neq c((3,1,2))$ or $c((0,1,2))\neq c((3,1,0)$, one vertex among $(0,1,2)$ and $(3,1,2)$ has a color not given to the vertices of $B$ (since  $(0,1,0)$ is at distance at most $3$ of every vertex of $B\setminus \{(3,1,2)\}$ and that $(3,1,0)$ is at distance at most $3$ of every vertex of $B\setminus \{(0,1,2)\}$).
Suppose $c(u_1)=c(v_2)=c(x_2)$.
It can be easily checked that every vertex of $B\cup\{ (0,1,2),$ $(3,1,2)\}$ is at distance at most $3$ of (at least) one vertex among $\{u_1,v_2,x_2\}$.
Figure \ref{twoheadedshark} illustrates the fact that every vertex of $B\setminus\{ (1.5,2.5,1)\}\cup\{ (0,1,2),$ $(3,1,2)\}$ is at distance at most $3$ of one vertex among $\{u_1,v_2\}$. It can be easily verified that $d_{\mathcal{F}_{0,2}}(x_2,(1.5,2.5,1))\le 3$.

Thus, $c(u_1)$ is different from the $23$ used colors. Therefore, we obtain a contradiction.

In the case $c(u_2)=c(v_1)=c(x_1)$ it can be remarked that every vertex of $B\cup\{ (0,1,2),$ $(3,1,2)\}$ is at distance at most $3$ of (at least) one vertex among $\{ u_2, v_1,x_1\}$ and, as previously, we obtain a contradiction since $B\cup\{ (0,1,2),$ $(3,1,2)\}$ contain vertices of 23 different colors.

\item[Case 2: $c(u_1)=c(v_2)=c(y_2)$ or $c(u_2)=c(v_1)=c(y_1)$.]

Suppose that $c(u_1)=c(v_2)=c(y_2)$.
It can be easily verified that every vertex of $(B\setminus\{(1.5,2.5,1)\})\cup \{(0,1,2),$ $(3,1,2)\}$ is at distance at most $3$ of (at least) one vertex among $\{u_1,$ $v_2\}$ and that every vertex of $(B'\setminus\{(-1.5,2.5,1)\})\cup \{(0,1,2),$ $(-3,1,2)\}$ is at distance at most $3$ of one vertex among $\{u_1,$ $y_1 \}$.
In Figure \ref{twoheadedshark}, we illustrates the fact that every vertex of $(B\setminus\{(1.5,2.5,1)\})\cup \{(0,1,2),$ $(3,1,2)\}$ is at distance at most $3$ of one vertex among $\{u_1,$ $v_2\}$.
By Lemma~\ref{le2667}, one vertex among $(0,1,2)$ and $(3,1,2)$ has a color not given to the vertices of $B$ and one vertex among $(0,1,2)$ and $(-3,1,2)$ has a color not given to the vertices of $B'$.
Since both $B\cup\{ (0,1,2),$ $(3,1,2)\}$ and $B'\cup \{(0,1,2),$ $(-3,1,2)\}$ contain vertices of 23 different colors, we have $c(u_1)=c((1.5,2.5,1))$ and $c(u_1)=c((-1.5,2.5,1))$. However, since $d_{\mathcal{F}_{0,2}}((1.5,2.5,1),(-1.5,2.5,1))=3$, we have a contradiction.

In the case $c(u_2)=c(v_1)=c(y_1)$, since every vertex of $(B\setminus\{(1.5,2.5,1)\})\cup \{(0,1,2),$ $(3,1,2)\}$ is at distance at most $3$ of one vertex among $\{u_2,$ $v_1\}$ and every vertex of $(B'\setminus\{(-1.5,2.5,1)\})\cup \{(0,1,2),$ $(-3,1,2)\}$ is at distance at most $3$ of one vertex among $\{u_2,$ $y_1\}$, we have also a contradiction since, as previously, both $B\cup\{ (0,1,2),$ $(3,1,2)\}$ and $B'\cup \{(0,1,2),$ $(-3,1,2)\}$ contain vertices of 23 different colors (this implies $c(u_2)=c((1.5,2.5,1))$ and $c(u_2)=c((-1.5,2.5,1))$.
\end{description}

By Lemma \ref{le2661}, either $c(u_1)=c(v_2)$ or $c(u_2)=c(v_1)$ holds.
First, suppose $c(u_1)=c(v_2)$. By Case 1, we have $c(u_1)\neq c(x_2)$ and by Case 2, we have $c(u_1)\neq c(y_2)$. Consequently, by Lemma \ref{le2661}, we have $c(u_2)=c(x_1)=c(y_1)$ and by symmetry and Lemma \ref{le2661}, we can not have both $c(u_1)\neq c(y_2)$ and $c(u_2)\neq c(y_1)$ and a contradiction with Case 1.
Second suppose $c(u_2)=c(v_1)$. By Case 1, we have $c(u_2)\neq c(x_1)$ and by Case 2, we have $c(u_2)\neq c(y_1)$. Consequently, by Lemma \ref{le2661}, we have $c(u_1)=c(x_2)=c(y_2)$, and a contradiction with Case 1.

Finally, we obtain that $\mathcal{F}_{0,2}^3$ is not $23$-colorable.
\end{proof}
\begin{cor}
We have $\chi(\mathcal{F}_{0,2}^3)= 24$.
\end{cor}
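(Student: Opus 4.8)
The plan is to derive the corollary as an immediate consequence of the two matching bounds already established for $\mathcal{F}_{0,2}^3$. First I would invoke the upper bound: the proposition opening this subsection shows $\chi(\mathcal{F}_{0,2}^d) \le 3\lceil (d+1)^2/2\rceil$ by restricting the coloring of Theorem~\ref{upbound} to three consecutive layers; specializing to $d=3$ gives $3\lceil 16/2\rceil = 3\cdot 8 = 24$, so $\chi(\mathcal{F}_{0,2}^3) \le 24$. Second, Theorem~\ref{sharknado} supplies the matching lower bound $\chi(\mathcal{F}_{0,2}^3) \ge 24$. Combining the two inequalities yields $\chi(\mathcal{F}_{0,2}^3) = 24$, which is exactly the claim.

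There is essentially no new work to carry out at this point: the corollary is a one-line combination of results proved above, and the only thing worth checking is the arithmetic, namely that the upper-bound formula and the lower bound of Theorem~\ref{sharknado} both evaluate to $24$ at $d=3$. It is worth emphasizing where the real difficulty lay, so that the reader sees why the preceding theorem was necessary. The general lower bound of Proposition~\ref{whitefire} only gives $3\lceil 16/2\rceil - 2 = 22$ for the odd value $d=3$, so it does not by itself close the gap with the upper bound of $24$. Consequently the sharper argument of Theorem~\ref{sharknado} — which rules out every $23$-coloring of $\mathcal{F}_{0,2}^3$ by chaining together the structural constraints of Lemmas~\ref{le2661}, \ref{le2665} and \ref{le2667} — is genuinely needed, and that theorem is where all of the effort resides. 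The corollary itself presents no obstacle.
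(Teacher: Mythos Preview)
Your proposal is correct and matches the paper's intended derivation: the corollary is stated without proof precisely because it is the immediate combination of the upper bound $\chi(\mathcal{F}_{0,2}^d)\le 3\lceil (d+1)^2/2\rceil$ (specialized to $d=3$) with the lower bound of Theorem~\ref{sharknado}. Your added commentary explaining why Proposition~\ref{whitefire} alone is insufficient and hence why Theorem~\ref{sharknado} is genuinely needed is accurate and appropriate.
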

\section*{Acknowledgments}
This work was supported by the French "Investissements d'Avenir" program, project ISITE-BFC (contract ANR-15-IDEX-03).

\end{document}